\newtheorem{theorem}{Theorem}
\newtheorem{lemma}{Lemma}
\newtheorem{proposition}{Proposition}
\newtheorem{corollary}{Corollary}
\newtheorem{definition}{Definition}
\newtheorem{example}{Example}
\newtheorem{remark}{Remark}
\newcommand{\prob}{\ensuremath{\mathbb{P}}}
\newcommand{\naturals}{\ensuremath{\mathbb{N}}}
\newcommand{\Reals}{\ensuremath{\mathbb{R}}}
\newcommand{\set}{\ensuremath{\mathcal}}
\newcommand{\es}{\varnothing}  % the empty set
\newcommand{\cset}[1]{{\set{#1}}^{\textnormal{c}}} % complement of a set
\newcommand{\OneTo}[1]{[#1]}
\newcommand{\FromTo}[2]{[#1:#2]}
\newcommand{\eqdef}{\triangleq} % definition
\newcommand{\card}[1]{|#1|}
\newcommand{\bigcard}[1]{\bigl|#1\bigr|}
\newcommand{\biggcard}[1]{\biggl|#1\biggr|}
\newcommand{\pmfsub}[1]{{\smash{#1}\vphantom{XYZ}}}
\newcommand{\pmf}[1]{\mathsf{#1}}
\newcommand{\pmfOf}[1]{\pmf{P}_{\pmfsub{#1}}}  % the PMF.
\newcommand{\CondpmfOf}[2]{\pmf{P}_{\pmfsub{#1} | \pmfsub{#2}}}
\DeclareMathOperator{\Vertex}{\mathsf{V}}
\DeclareMathOperator{\Edge}{\mathsf{E}}
\newcommand{\V}[1]{\Vertex(#1)}
\newcommand{\E}[1]{\Edge(#1)}
\newcommand{\Ed}[2]{\Edge_{#1}(#2)}
\newcommand{\dH}[2]{\mathsf{d}_{\textnormal{H}} \kern-0.1em \left(#1, \kern0.1em #2\right)}
\newcommand{\bigdH}[2]{\mathsf{d}_{\textnormal{H}} \kern-0.1em \big(#1, \kern0.1em #2\big)}
\newcommand{\ch}[1]{\pmf{#1}} % notation for channel
\newcommand{\chlong}[3]{\ch{#1}\colon #2 \to #3} % Long % notation for channel.
\newcommand{\CondInd}[3]{#1 \perp\!\!\!\perp #2 \, | \, #3}
\DeclareMathOperator{\Entr}{H}
\DeclareMathOperator{\Info}{I}
\DeclareMathOperator{\EntrPow}{N}
\newcommand{\Hb}{\mathsf{H}_{\textnormal{b}}}
\newcommand{\Ent}[1]{\Entr(#1)}
\newcommand{\bigEnt}[1]{\Entr\bigl(#1\bigr)}
\newcommand{\BiggEnt}[1]{\Entr\Biggl(#1\Biggr)}
\newcommand{\EntPow}[1]{\EntrPow(#1)}
\newcommand{\BiggEntPow}[1]{\EntrPow\Biggl(#1\Biggr)}
\newcommand{\EntPowToPow}[2]{\EntrPow^{#1}\kern-0.1em(#2)}
\newcommand{\bigEntPowToPow}[2]{\EntrPow^{#1}\kern-0.1em\bigl(#2\bigr)}
\newcommand{\BigEntPowToPow}[2]{\EntrPow^{#1}\kern-0.1em\Bigl(#2\Bigr)}
\newcommand{\biggEntPowToPow}[2]{\EntrPow^{#1}\kern-0.1em\biggl(#2\biggr)}
\newcommand{\BiggEntPowToPow}[2]{\EntrPow^{#1}\kern-0.1em\Biggl(#2\Biggr)}
\newcommand{\EntBin}[1]{\Hb(#1)}
\newcommand{\biggEntBin}[1]{\Hb\biggl(#1\biggr)}
\newcommand{\EntCond}[2]{\Entr(#1 | \kern0.1em #2)}
\newcommand{\bigEntCond}[2]{\Entr\bigl(#1 | \kern0.1em #2\bigr)}
\newcommand{\BigEntCond}[2]{\Entr\Bigl(#1 \kern-0.1em \bigm| \kern-0.1em #2 \Bigr)}
\newcommand{\biggEntCond}[2]{\Entr\biggl(#1 \kern-0.1em \Bigm| \kern-0.1em #2 \biggr)}
\newcommand{\BiggEntCond}[2]{\Entr\Biggl(#1 \kern-0.1em \biggm| \kern-0.1em #2 \Biggr)}
\newcommand{\MInfo}[2]{\Info(#1; #2)}
\newcommand{\biggMInfo}[2]{\Info\biggl(#1; \kern0.1em #2 \biggr)}
\newcommand{\BiggMInfo}[2]{\Info\Biggl(#1; \kern0.1em #2 \Biggr)}
\newcommand{\MInfoCond}[3]{\Info(#1; #2 \kern0.1em |\kern0.1em #3)}
\newcommand{\bigMInfoCond}[3]{\Info\bigl(#1; #2 \kern0.1em |\kern0.1em #3\bigr)}
\newcommand{\BigMInfoCond}[3]{\Info\Bigl(#1; #2 \kern-0.1em \bigm| \kern-0.1em #3 \kern-0.1em \Bigr)}
\newcommand{\biggMInfoCond}[3]{\Info\biggl(#1; \kern0.1em #2 \kern-0.1em \Bigm| \kern-0.1em #3\biggr)}
\newcommand{\BiggMInfoCond}[3]{\Info\Biggl(#1; \kern0.1em #2 \kern-0.1em \biggm| \kern-0.1em #3\Biggr)}
\newcommand{\KLDivBin}[2]{\mathsf{D}_{\textnormal{b}}(#1 \kern0.1em \| \kern0.1em #2)} % binary relative entropy
\newcommand{\bigKLDivBin}[2]{\mathsf{D}_{\textnormal{b}}\bigl(#1 \kern0.1em \| \kern0.1em #2 \bigr)}
\newcommand{\BigKLDivBin}[2]{\mathsf{D}_{\textnormal{b}}\Bigl(#1 \kern0.1em \| \kern0.1em #2 \Bigr)}
\newcommand{\biggKLDivBin}[2]{\mathsf{D}_{\textnormal{b}}\biggl(#1 \kern0.1em \| \kern0.1em #2 \biggr)}
\newcommand{\BiggKLDivBin}[2]{\mathsf{D}_{\textnormal{b}}\Biggl(#1 \kern0.1em \| \kern0.1em #2 \Biggr)}
\newcommand{\bigKLDiv}[2]{\mathsf{D}\bigl(#1 \kern0.1em \| \kern0.1em #2 \bigr)}
\newcommand{\BigKLDiv}[2]{\mathsf{D}\Bigl(#1 \kern-0.2em \bigm\| \kern-0.2em #2 \Bigr)}
\newcommand{\biggKLDiv}[2]{\mathsf{D}\biggl(#1 \kern-0.2em \Bigm\| \kern-0.2em #2 \biggr)}
\newcommand{\BiggKLDiv}[2]{\mathsf{D}\Biggl(#1 \kern-0.1em \biggm\| \kern-0.1em #2 \Biggr)}
\newcommand{\I}[1]{{\mathbbm 1}\{#1\}}
\newcommand{\bigI}[1]{{\mathbbm 1}\bigl\{#1 \bigr\}}
\newcommand{\BigI}[1]{{\mathbbm 1}\Bigl\{#1 \Bigr\}}
\newcommand{\BigPrv}[1]{\Pr\Bigl[#1\Bigr]}
\newcommand{\Prs}[1]{\Pr(#1)} %usage:
\begin{document}
\thispagestyle{empty}
\setcounter{page}{1}
\setlength{\baselineskip}{1.15\baselineskip}

\title{Information Inequalities via Submodularity and a Problem in Extremal Graph Theory\\[0.2cm]}
\author{Igal Sason
\thanks{
I. Sason is with the Andrew \& Erna Viterbi Faculty of Electrical and Computer Engineering,
and with the Department of Mathematics. Both affiliations are at the Technion - Israel Institute of Technology,
Haifa 3200003, Israel (e-mail: eeigal@technion.ac.il).}}

\maketitle
\thispagestyle{empty}

\begin{abstract}
The present paper offers, in its first part, a unified approach for the derivation of
families of inequalities for set functions which satisfy sub/supermodularity properties. It applies
this approach for the derivation of information inequalities with Shannon information measures.
Connections of the considered approach to a generalized version of Shearer's lemma, and other related
results in the literature are considered. Some of the derived information inequalities are new, and
also known results (such as a generalized version of Han's inequality) are reproduced in a simple and
unified way. In its second part, this paper applies the generalized Han's inequality to analyze a problem
in extremal graph theory. This problem is motivated and analyzed from the perspective of information theory,
and the analysis leads to generalized and refined bounds.
The two parts of this paper are meant to be independently accessible to the reader.
\end{abstract}

{\bf{Keywords}}: {\small Extremal combinatorics, graphs, Han's inequality, information inequalities,
polymatroid, rank function, set function, Shearer's lemma, submodularity.}

\section{Introduction}
\label{section: introduction}

Information measures and information inequalities are of fundamental importance and
wide applicability in the study of feasibility and infeasibility results in information theory,
while also offering very useful tools which serve to deal with
interesting problems in various fields of mathematics \cite{CoverT06, DemboCT91}. The characterization
of information inequalities has been of interest for decades (see, e.g., \cite{Chan11}, \cite{MartinPY16} and references
therein), mainly triggered by their indispensable role in proving direct and converse results for
channel coding and data compression for single and multi-user information systems.
Information inequalities, which apply to classical and generalized information measures, have also demonstrated
far-reaching consequences beyond the study of the coding theorems and fundamental limits of communication systems.
One such remarkable example (among many) is the usefulness of information measures
and information inequalities in providing information-theoretic proofs in the field of combinatorics and graph
theory (see, e.g., \cite{BabuR14}--\cite{MadimanT_IT10}).

A basic property that is commonly used for the characterization of information inequalities relies on the nonnegativity of
the (conditional and unconditional) Shannon entropy of discrete random variables, the nonnegativity of the
(conditional and unconditional) relative entropy and the Shannon mutual information of general random variables,
and the chain rules which hold for these classical information measures. A byproduct of these properties is the sub/supermodularity
of some classical information measures, which also prove to be useful by taking advantage of the vast literature
on sub/supermodular functions and polymatroids \cite{MadimanT_IT10}--\cite{KIshiOY_ISIT2014}.
Another instrumental information inequality is the entropy power inequality, which dates back to Shannon \cite{Shannon1948}.
It has been extensively generalized for different types of random variables and generalized entropies,
studied in regard to its geometrical relations \cite{MadimanMX17}, and it has been also ubiquitously used for the
analysis of various information-theoretic problems.

Among the most useful information inequalities are Han's inequality \cite{Han78}, its generalized versions
(e.g., \cite{MadimanMT_ITW10, Fujishige78, Tian_ISIT11, KIshiOY_ISIT2014}),
and Shearer's lemma \cite{ChungGFS86} with its generalizations and refinements (e.g., \cite{MadimanMT_ITW10,KIshiOY_ISIT2014,PolyanskiyW19}).
In spite of their simplicity, these inequalities prove to be useful in information theory, and other diverse fields of mathematics and engineering
(see, e.g., \cite{Boucheron_Lugosi_Massart_book, PolyanskiyW19}).
More specifically in regard to these inequalities, in Proposition~1 of \cite{MadimanT_IT10}, Madiman and Tetali introduced an information inequality which
can be specialized to Han's inequality, and which also refines Shearer's lemma while also providing a
counterpart result. In \cite{Tian_ISIT11}, Tian generalized Han's inequality by relying on the
sub/supermodularity of the unconditional/conditional Shannon entropy.
Likewise, the work in \cite{KIshiOY_ISIT2014} by Kishi {\em et al.} relies on the sub/supermodularity properties of
Shannon information measures, and it provides refinements of Shearer's lemma and Han's inequality.
Apart of the refinements of these classical and widely-used inequalities in \cite{KIshiOY_ISIT2014},
the suggested approach in the present work can be viewed in a sense as a (nontrivial) generalization and extension
of a result in \cite{KIshiOY_ISIT2014} (to be explained in Section~\ref{subsection: Connections to Literature}).

This work is focused on the derivation of information inequalities via submodularity and nonnegativity properties, and on
a problem in extremal graph theory whose analysis relies on an information inequality.
The field of extremal graph theory, which is a subfield of extremal combinatorics, was among the early and
fast developing branches of graph theory during the 20th century. Extremal graph theory explores
the relations between properties of graphs such as its order, size, chromatic number or maximal and minimal
degrees, under some constraints on the graph (by, e.g., considering graphs of a fixed order, and by also
imposing a type of a forbidden subgraph).
The interested reader is referred to the comprehensive textbooks \cite{Jukna} and \cite{Bollobas}
on the vast field of extremal combinatorics and extremal graph theory. For the
purpose of the suggested problem and analysis, the presentation here is however self-contained.

This paper suggests an approach for the derivation of families of inequalities
for set functions, and it applies it to obtain information inequalities with
Shannon information measures that satisfy sub/supermodularity and monotonicity properties.
Some of the derived information inequalities are new, while some known results (such as the
generalized version of Han's inequality \cite{Fujishige78}) are reproduced as corollaries
in a simple and unified way. This paper also applies the generalized Han's inequality to
analyze a problem in extremal graph theory, with an information-theoretic proof and interpretation.
The analysis leads to some generalized and refined bounds in comparison to
the insightful results in Theorems~4.2 and 4.3 of \cite{Boucheron_Lugosi_Massart_book}.

The paper is structured as follows:
Section~\ref{section: preliminaries} provides essential notation and preliminary material for this paper.
Section~\ref{section: methodologies} presents a new methodology for the derivation of families of
inequalities for set functions which satisfy sub/supermodularity properties (Theorem~\ref{theorem: Methodology 1}).
The suggested methodology is then applied in Section~\ref{section: methodologies} for the derivation of
information inequalities by relying on sub/supermodularity properties of Shannon information measures.
Section~\ref{section: methodologies} also considers connections of the suggested approach to a
generalized version of Shearer's lemma, and to other results in the literature. Most of the results in
Section~\ref{section: methodologies} are proved in Section~\ref{section: proofs}.
Section~\ref{section: problem} applies the generalized Han's inequality to a problem in extremal
graph theory (Theorem~\ref{theorem: graphs}). A byproduct of Theorem~\ref{theorem: graphs},
which is of interest in its own right, is also analyzed in Section~\ref{section: problem}
(Theorem~\ref{theorem: influence}). The presentation and analysis in Section~\ref{section: problem}
is accessible to the reader, independently of the earlier material on information inequalities in
Sections \ref{section: methodologies} and \ref{section: proofs}. Some additional proofs, mostly
for making the paper self-contained or for suggesting an alternative proof, are relegated to the
appendices (Appendices~\ref{appendix A: proof} and~\ref{Appendix: Proof - Shearer - submodularity}).

\section{Preliminaries and Notation}
\label{section: preliminaries}

The present section provides essential notation and preliminary material for this paper.
\begin{itemize}
\item $\naturals \eqdef \{1, 2, \ldots \}$ denotes the set of natural numbers.
\item $\Reals$ denotes the set of real numbers, and $\Reals_+$ denotes the set
of nonnegative real numbers.
\item $\es$ denotes the empty set.
\item $2^\Omega \eqdef \bigl\{\set{A}: \set{A} \subseteq \Omega\bigr\}$ denotes the power
set of a set $\Omega$ (i.e., the set of all subsets of $\Omega$).
\item $\cset{T} \eqdef \Omega \setminus \set{T}$ denotes the complement of
a subset $\set{T}$ in $\Omega$.
\item $\I{E}$ is an indicator of $E$; it is~1
if event $E$ is satisfied, and zero otherwise.
\item $\OneTo{n} \eqdef \{1, \ldots, n\}$ for every $n \in \naturals$;
\item $X^n \eqdef (X_1, \ldots, X_n)$ denotes an $n$-dimensional random vector;
\item $X_{\set{S}} \eqdef (X_i)_{i \in \set{S}}$ is a random vector for
a nonempty subset $\set{S} \subseteq \OneTo{n}$; if $\set{S} = \es$,
then $X_{\set{S}}$ is an empty set, and conditioning on $X_{\set{S}}$ is void.
\item Let $X$ be a discrete random variable that takes its values on a set $\set{X}$,
and let $\pmfOf{X}$ be the probability mass function (PMF) of $X$. The
{\em Shannon entropy} of $X$ is given by
\begin{eqnarray}
\label{eq: entropy}
\Ent{X} \eqdef -\sum_{x \in \set{X}} \pmfOf{X}(x) \, \log \pmfOf{X}(x),
\end{eqnarray}
where throughout this paper, we take all logarithms to base~2.
\item The {\em binary entropy function} $\Hb \colon [0,1] \to [0, \log 2]$ is
given by
\begin{eqnarray}
\label{eq2: EntBin}
\EntBin{p} \eqdef -p \log p -(1-p) \log(1-p),  \quad p \in [0,1],
\end{eqnarray}
where, by continuous extension, the convention $0 \log 0 = 0$ is used.
\item Let $X$ and $Y$ be discrete random variables with a joint PMF $\pmfOf{XY}$, and a conditional
PMF of $X$ given $Y$ denoted by $\CondpmfOf{X}{Y}$. The {\em conditional entropy} of $X$ given $Y$
is defined as

\vspace*{-0.3cm}
\begin{subequations} \label{eq: conditional entropy}
\begin{align}
\label{eq: conditional entropy 1}
\EntCond{X}{Y} & \eqdef -\sum_{(x,y) \in \set{X} \times \set{Y}}
\pmfOf{XY}(x,y) \log \CondpmfOf{X}{Y}(x|y) \\[0.1cm]
\label{eq: conditional entropy 1.5}
&= \sum_{y \in \set{Y}} \pmfOf{Y}(y) \, \EntCond{X}{Y=y},
\end{align}
\end{subequations}

\vspace*{-0.3cm}
and
\begin{align}
\EntCond{X}{Y} &= \Ent{X,Y} - \Ent{Y}.  \label{eq: conditional entropy 2}
\end{align}
\item The {\em mutual information} between $X$ and $Y$ is symmetric in $X$ and $Y$,
and it is given by

\vspace*{-0.3cm}
\begin{subequations} \label{eq: MI}
\begin{align}
\MInfo{X}{Y} &= \Ent{X} + \Ent{Y} - \Ent{X,Y} \label{eq1: MI} \\
&= \Ent{X} - \EntCond{X}{Y}  \label{eq2: MI} \\
&= \Ent{Y} - \EntCond{Y}{X}. \label{eq3: MI}
\end{align}
\end{subequations}
\item The {\em conditional mutual information} between two random variables
$X$ and $Y$, given a third random variable $Z$, is symmetric in $X$ and $Y$
and it is given by

\vspace*{-0.3cm}
\begin{subequations}  \label{eq: Cond MI}
\begin{align}
\MInfoCond{X}{Y}{Z} &= \EntCond{X}{Z} - \EntCond{X}{Y,Z} \label{eq1: Cond MI} \\
%&= \EntCond{Y}{Z} - \EntCond{Y}{X,Z}  \label{eq2: Cond MI} \\
&= \Ent{X,Z} + \Ent{Y,Z} - \Ent{Z} - \Ent{X,Y,Z}. \label{eq3: Cond MI}
\end{align}
\end{subequations}
\item For continuous random variables, the sums in \eqref{eq: entropy}
and \eqref{eq: conditional entropy} are replaced with integrals, and
the PMFs are replaced with probability densities.
The entropy of a continuous random variable is named {\em differential
entropy}.
\item
For an $n$-dimensional random vector $X^n$, the {\em entropy power} of $X^n$ is given by
\begin{eqnarray} \label{eq: entropy power}
\EntPow{X^n} \eqdef \exp\Bigl( \tfrac{2}{n} \; \Ent{X^n} \Bigr),
\end{eqnarray}
where the base of the exponent is identical to the base of the logarithm in \eqref{eq: entropy}.
\end{itemize}
We rely on the following basic properties of the Shannon information measures:
\begin{itemize}
\item Conditioning cannot increase the entropy, i.e.,
\begin{eqnarray}
\label{eq: conditioning reduces entropy}
\EntCond{X}{Y} \leq \Ent{X},
\end{eqnarray}
with equality in \eqref{eq: conditioning reduces entropy} if and only if $X$ and $Y$
are independent.
\item Generalizing \eqref{eq: conditional entropy 2}
to $n$-dimensional random vectors gives the chain rule
\begin{eqnarray}
\label{eq: chain rule}
\Ent{X^n} = \sum_{i=1}^n \EntCond{X_i}{X^{i-1}}.
\end{eqnarray}
\item The {\em subadditivity property of the entropy} is implied by
\eqref{eq: conditioning reduces entropy} and \eqref{eq: chain rule}:
\begin{eqnarray}
\label{eq: subadditivity Ent}
\Ent{X^n} \leq \sum_{i=1}^n \Ent{X_i},
\end{eqnarray}
with equality in \eqref{eq: subadditivity Ent} if and only if $X_1, \ldots, X_n$ are
independent random variables.
\item {\em Nonnegativity of the (conditional) mutual information}:
In light of \eqref{eq: MI} and \eqref{eq: conditioning reduces entropy},
$\MInfo{X}{Y} \geq 0$ with equality if and only if $X$ and $Y$ are independent.
More generally, $\MInfoCond{X}{Y}{Z} \geq 0$ with equality if and only if
$X$ and $Y$ are conditionally independent given $Z$.
\end{itemize}

Let $\Omega$ be a finite and non-empty set, and let
$f \colon 2^{\Omega} \to \Reals$ be a real-valued
set function (i.e., $f$ is defined for all subsets of $\Omega$).
The following definitions are used.

\begin{definition}[Sub/Supermodular function]
\label{definition: submodularity}
The set function $f \colon 2^{\Omega} \to \Reals$ is {\em submodular} if
\begin{eqnarray} \label{eq: submodularity}
f(\set{T}) + f(\set{S}) \geq f(\set{T} \cup \set{S}) + f(\set{T} \cap \set{S}),
\qquad \forall \; \set{S}, \set{T} \subseteq \Omega
\end{eqnarray}
Likewise, $f$ is {\em supermodular} if $-f$ is submodular.
\end{definition}

An identical characterization of submodularity
is the diminishing return property (see, e.g.,
Proposition~2.2 in \cite{Bach13}), where
a set function $f \colon 2^{\Omega} \to \Reals$ is submodular if and only if
\begin{eqnarray} \label{DMR}
\set{S} \subset \set{T} \subset \Omega, \; \; \omega \in \cset{T}
\; \Longrightarrow \;  f(\set{S} \cup \{\omega\}) - f(\set{S})
\geq  f(\set{T} \cup \{\omega\}) - f(\set{T}).
\end{eqnarray}
This means that the larger is the set, the smaller is the
increase in $f$ when a new element is added.

\begin{definition}[Monotonic function]
\label{definition: monotonicity}
The set function $f \colon 2^{\Omega} \to \Reals$ is {\em monotonically
increasing} if
\begin{eqnarray} \label{eq: increasing}
\set{S} \subseteq \set{T} \subseteq \Omega \; \Longrightarrow \; f(\set{S}) \leq f(\set{T}).
\end{eqnarray}
Likewise, $f$ is {\em monotonically decreasing} if $-f$ is monotonically
increasing.
\end{definition}

\begin{definition}[Polymatroid, ground set and rank function]
\label{definition: polymatroid}
Let $f \colon 2^{\Omega} \to \Reals$ be submodular
and monotonically increasing set function with $f(\es) = 0$.
The pair $(\Omega, f)$ is called a {\em polymatroid},
$\Omega$ is called a {\em ground set}, and $f$ is called a {\em rank function}.
\end{definition}

\begin{definition}[Subadditive function]
\label{definition: subadditivity}
The set function $f \colon 2^{\Omega} \to \Reals$ is {\em subadditive} if,
for all $\set{S}, \set{T} \subseteq \Omega$,
\begin{eqnarray} \label{eq: subadditive function}
f(\set{S} \cup \set{T}) \leq f(\set{S}) + f(\set{T}).
\end{eqnarray}
\end{definition}
A nonnegative and submodular set function is subadditive (this readily follows from \eqref{eq: submodularity}
and \eqref{eq: subadditive function}).
The next proposition introduces results from \cite{Fujishige78}, \cite{Krause_UAI05} and \cite{Madiman_ITW08}.
For the sake of completeness, we provide a proof in Appendix~\ref{appendix A: proof}.

\begin{proposition} \label{proposition: information measures, polymatroids}
Let $\Omega$ be a finite and non-empty set, and let $\{X_\omega\}_{\omega \in \Omega}$ be a collection of
discrete random variables. Then, the following holds:
\begin{enumerate}[a)]
\item The set function $f \colon 2^{\Omega} \to \Reals$, given by
\begin{eqnarray}   \label{entropic function}
f(\set{T}) \eqdef \Ent{X_{\set{T}}}, \quad \set{T} \subseteq \Omega,
\end{eqnarray}
is a rank function.
\item The set function $f \colon 2^{\Omega} \to \Reals$, given by
\begin{eqnarray}   \label{set function 2}
f(\set{T}) \eqdef \EntCond{X_{\set{T}}}{X_{\cset{T}}}, \quad  \set{T} \subseteq \Omega,
\end{eqnarray}
is supermodular, monotonically increasing, and $f(\es) = 0$.
\item The set function $f \colon 2^{\Omega} \to \Reals$, given by
\begin{eqnarray}   \label{set function 3}
f(\set{T}) \eqdef \MInfo{X_{\set{T}}}{X_{\cset{T}}}, \quad  \set{T} \subseteq \Omega,
\end{eqnarray}
is submodular, $f(\es) = 0$, but $f$ is not a rank function. The latter holds since the
equality $f(\set{T}) = f(\cset{T})$, for all $\set{T} \subseteq \Omega$, implies that
$f$ is not a monotonic function.
\item Let $\set{U}, \set{V} \subseteq \Omega$ be disjoint subsets, and let
the entries of the random vector $X_{\set{V}}$ be conditionally independent given
$X_{\set{U}}$. Then, the set function $f \colon 2^{\set{V}} \to \Reals$ given by
\begin{eqnarray}   \label{set function 4}
f(\set{T}) \eqdef \MInfo{X_{\set{U}}}{X_{\set{T}}}, \quad \set{T} \subseteq \set{V},
\end{eqnarray}
is a rank function.
\item Let $X_{\Omega} = \{X_\omega\}_{\omega \in \Omega}$
be independent random variables, and let the set function $f \colon 2^{\Omega} \to \Reals$ be given by
\begin{eqnarray}   \label{set function 5}
f(\set{T}) \eqdef \BiggEnt{ \, \sum_{\omega \in \set{T}} X_{\omega}}, \quad \set{T} \subseteq \Omega.
\end{eqnarray}
Then, $f$ is a rank function.
\end{enumerate}
\end{proposition}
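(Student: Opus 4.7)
The plan is to handle each of the five parts by reducing the defining inequality (submodular or supermodular) to the nonnegativity of a suitable (conditional) mutual information, together with the chain rule for entropy and the fact that conditioning cannot increase entropy. With $\set{A} = \set{S} \setminus \set{T}$, $\set{B} = \set{T} \setminus \set{S}$, and $\set{C} = \set{S} \cap \set{T}$, a direct algebraic expansion gives
\begin{equation*}
\Ent{X_{\set{S}}} + \Ent{X_{\set{T}}} - \Ent{X_{\set{S} \cup \set{T}}} - \Ent{X_{\set{S} \cap \set{T}}} \; = \; \MInfoCond{X_{\set{A}}}{X_{\set{B}}}{X_{\set{C}}} \; \geq \; 0,
\end{equation*}
which settles the submodularity part of (a); monotonicity follows from the chain rule and nonnegativity of conditional entropy, and $f(\es)=0$ is trivial. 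Part (b) then follows from (a) via the identity $\EntCond{X_{\set{T}}}{X_{\cset{T}}} = \Ent{X_{\Omega}} - \Ent{X_{\cset{T}}}$: since the complementation map $\set{T} \mapsto \cset{T}$ is an order-reversing involution on $2^{\Omega}$ that swaps $\cup$ with $\cap$, submodularity and monotonic increase of $\set{T} \mapsto \Ent{X_{\set{T}}}$ translate respectively into supermodularity and monotonic increase of $f$.

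For part (c), the identity $\MInfo{X_{\set{T}}}{X_{\cset{T}}} = \Ent{X_{\set{T}}} + \Ent{X_{\cset{T}}} - \Ent{X_{\Omega}}$ expresses $f$ as the sum of $\set{T} \mapsto \Ent{X_{\set{T}}}$ and $\set{T} \mapsto \Ent{X_{\cset{T}}}$, both submodular by (a) and the complementation observation used in (b), minus a constant; hence $f$ is submodular. The symmetry $f(\set{T}) = f(\cset{T})$ then precludes monotonicity in general. Part (d) is where the conditional independence hypothesis enters: expanding $f(\set{T}) = \Ent{X_{\set{T}}} - \EntCond{X_{\set{T}}}{X_{\set{U}}}$, the hypothesis turns the second term into the modular set function $\set{T} \mapsto \sum_{\omega \in \set{T}} \EntCond{X_\omega}{X_{\set{U}}}$, whose submodular defect vanishes identically, so the submodular defect of $f$ coincides with that of the entropic function and is nonnegative by (a). Monotonicity is the chain-rule identity $\MInfo{X_{\set{U}}}{X_{\set{T}}} = \MInfo{X_{\set{U}}}{X_{\set{S}}} + \MInfoCond{X_{\set{U}}}{X_{\set{T} \setminus \set{S}}}{X_{\set{S}}}$ combined with nonnegativity of the conditional mutual information.

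The most delicate part is (e), where entropy submodularity is no longer directly available. Setting $U = \sum_{\omega \in \set{S} \setminus \set{T}} X_\omega$, $V = \sum_{\omega \in \set{T} \setminus \set{S}} X_\omega$, and $W = \sum_{\omega \in \set{S} \cap \set{T}} X_\omega$, the independence of $\{X_\omega\}$ makes $U, V, W$ independent, and submodularity reduces to
\begin{equation*}
\Ent{U+W} - \Ent{W} \; \geq \; \Ent{U+V+W} - \Ent{V+W},
\end{equation*}
equivalently $\MInfo{U}{U+W} \geq \MInfo{U}{U+V+W}$. This I would obtain from the data processing inequality applied to the Markov chain $U \to U+W \to U+V+W$, which holds because $V$ is independent of $(U, W)$ and hence conditionally independent of $U$ given $U+W$. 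Monotonicity in (e) follows from $\Ent{Y+Z} \geq \EntCond{Y+Z}{Z} = \Ent{Y}$ for independent $Y, Z$, and $f(\es)=0$ since the empty sum is constant. The Markov-chain verification is the only place where a tool beyond the basic identities recalled in Section~\ref{section: preliminaries} is needed, and constitutes the main (though mild) obstacle.
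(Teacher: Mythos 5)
Your proposal is correct and follows essentially the same route as the paper's proof in Appendix~\ref{appendix A: proof}: each sub/supermodularity defect is identified with a (conditional) mutual information, parts (b)--(d) are reduced to part (a) via complementation and the modularity of $\set{T} \mapsto \EntCond{X_{\set{T}}}{X_{\set{U}}}$, and part (e) uses the same decomposition into independent partial sums $U,V,W$. The only cosmetic difference is in part (e), where you invoke the data-processing inequality for the Markov chain $U \to U+W \to U+V+W$, while the paper reaches the same inequality $\MInfo{U}{U+W} \geq \MInfo{U}{U+V+W}$ (in its own labeling, $\MInfo{U+W}{W} \geq \MInfo{U+V+W}{W}$) by expanding $\MInfo{U+W,V}{W}$ with the chain rule and the independence of $V$ from $(U,W)$ --- two phrasings of the same argument.
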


The following proposition addresses the setting of general alphabets.
\begin{proposition} \label{proposition: information measures, continuous RVs}
For general alphabets, the set functions $f$ in
\eqref{entropic function} and \eqref{set function 3}--\eqref{set function 5}
are submodular, and the set function $f$ in \eqref{set function 2} is supermodular
with $f(\es) \eqdef 0$. Moreover, the function in
\eqref{set function 4} stays to be a rank function, and the function
in \eqref{set function 5} stays to be monotonically increasing.
\end{proposition}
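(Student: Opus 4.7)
The overall plan is to observe that the only obstruction to transferring the discrete statements in Proposition~\ref{proposition: information measures, polymatroids} to general alphabets is the possible failure of nonnegativity (and hence monotonicity) of the differential entropy. All the sub/supermodularity claims, in contrast, reduce to chain rules together with the nonnegativity of (conditional) mutual information, which continue to hold in the continuous setting. Once this is noticed, items \eqref{entropic function}, \eqref{set function 2}, \eqref{set function 3}, and the submodularity part of \eqref{set function 4} are handled by the same algebraic manipulations as in the proof of Proposition~\ref{proposition: information measures, polymatroids}, and only \eqref{set function 5} requires a genuinely new idea.

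For \eqref{entropic function}, I would rewrite the submodularity inequality as $\MInfoCond{X_{\set{S}\setminus\set{T}}}{X_{\set{T}\setminus\set{S}}}{X_{\set{S}\cap\set{T}}}\geq 0$, which is valid for general random variables. For \eqref{set function 2}, I would use the chain rule in the form $\EntCond{X_\set{T}}{X_{\cset{T}}}=\Ent{X_\Omega}-\Ent{X_{\cset{T}}}$; the map $\set{T}\mapsto\Ent{X_{\cset{T}}}$ is submodular (it is the submodular function from \eqref{entropic function} composed with complementation, which preserves submodularity), hence $f$ in \eqref{set function 2} is supermodular, and $f(\es)\eqdef 0$ is consistent with the vacuous-conditioning convention. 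For \eqref{set function 3}, I would decompose $\MInfo{X_\set{T}}{X_{\cset{T}}}=\Ent{X_\set{T}}+\Ent{X_{\cset{T}}}-\Ent{X_\Omega}$; both summands are submodular in $\set{T}$ by the previous item, and adding a constant does not affect submodularity.

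For \eqref{set function 4}, I would first note that the conditional independence of $\{X_v\}_{v\in\set{V}}$ given $X_\set{U}$ gives $\EntCond{X_\set{T}}{X_\set{U}}=\sum_{v\in\set{T}}\EntCond{X_v}{X_\set{U}}$, so this quantity is modular (and in particular both sub- and supermodular) in $\set{T}\subseteq\set{V}$. Writing $f(\set{T})=\Ent{X_\set{T}}-\EntCond{X_\set{T}}{X_\set{U}}$ then expresses $f$ as the difference of a submodular and a modular function, hence submodular. Monotonicity follows from $\MInfo{X_\set{U}}{X_\set{T}}=\MInfo{X_\set{U}}{X_\set{S}}+\MInfoCond{X_\set{U}}{X_{\set{T}\setminus\set{S}}}{X_\set{S}}\geq\MInfo{X_\set{U}}{X_\set{S}}$ for $\set{S}\subseteq\set{T}$, and $f(\es)=0$ is immediate.

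For \eqref{set function 5}, set $A=\sum_{\omega\in\set{S}\setminus\set{T}}X_\omega$, $B=\sum_{\omega\in\set{T}\setminus\set{S}}X_\omega$, $C=\sum_{\omega\in\set{S}\cap\set{T}}X_\omega$, which are mutually independent by assumption. Monotonicity is $\Ent{A+C}\geq\Ent{A+C\mid C}=\Ent{A\mid C}=\Ent{A}$, using that conditioning on $C$ makes $A+C$ a translate of $A$ and that $A\perp C$. For submodularity I need $\Ent{A+C}+\Ent{B+C}\geq\Ent{A+B+C}+\Ent{C}$, which rearranges to
\begin{equation*}
\MInfo{A}{A+C}-\MInfo{A}{A+B+C}\geq 0,
\end{equation*}
after using $\Ent{B+C}-\Ent{C}=\Ent{A+B+C\mid A}-\Ent{A+C\mid A}$ (which follows from $B\perp(A,C)$ and the translation-invariance of the differential entropy). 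The key step, and the main obstacle, is to recognize the Markov chain $A\to A+C\to A+B+C$: since $B$ is independent of $(A,C)$, the conditional law of $A+B+C$ given $A+C$ does not depend on $A$. The desired inequality is then the data-processing inequality applied to this chain, which holds for general alphabets. This yields both the submodularity of \eqref{set function 5} and, a fortiori, completes the proof.
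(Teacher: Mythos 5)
Your proposal is correct and follows essentially the same route as the paper, which proves this proposition by observing that the sub/supermodularity arguments in the appendix proof of Proposition~\ref{proposition: information measures, polymatroids} rest only on the nonnegativity of (conditional) mutual information, translation invariance, and data processing, all of which survive in the general-alphabet setting, while the lost monotonicity claims are exactly those requiring nonnegativity of conditional (differential) entropy. Your treatment of \eqref{set function 5} via the Markov chain $A\to A+C\to A+B+C$ is just the symmetric twin of the paper's step $\MInfo{U+V+W}{W}\leq\MInfo{U+W,V}{W}=\MInfo{U+W}{W}$ (both are the data-processing inequality applied after rewriting the submodularity gap as a difference of mutual informations), so no genuinely new ingredient is introduced.
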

\begin{proof}
The sub/supermodularity properties in
Proposition~\ref{proposition: information measures, polymatroids} are
preserved due to the nonnegativity of the (conditional) mutual
information. The monotonicity property
of the functions in \eqref{set function 4} and \eqref{set function 5}
is preserved also in the general alphabet setting due to
\eqref{eq3f: monotonicity - set function 4} and \eqref{eq1c: monotonicity - set function 5},
and the mutual information in \eqref{set function 4} is nonnegative.
\end{proof}

\begin{remark}
In contrast to the entropy of discrete random variables, the differential
entropy of continuous random variables is {\em not} functionally submodular
in the sense of Lemma~A.2 in \cite{Tao10}. This refers to a different form
of submodularity, which was needed by Tao \cite{Tao10} to prove sumset
inequalities for the entropy of discrete random variables. A follow-up
study in \cite{KM14} by Kontoyiannis and Madiman required substantially
new proof strategies for the derivation of sumset inequalities with the
differential entropy of continuous random variables. The basic property
which replaces the discrete functional submodularity is the data-processing
property of mutual information \cite{KM14}. In the context of the present
work, where the commonly used definition of submodularity is used (see
Definition~\ref{definition: submodularity}), the Shannon entropy of discrete
random variables and the differential entropy of continuous random variables
are both submodular set functions.
\end{remark}

\bigskip
We rely, in this paper, on the following standard terminology for graphs.
An undirected graph $G$ is an ordered pair $G = (V,E)$
where {$V = \V{G}$} is a set of elements, and {$E = \E{G}$} is a set of
2-element subsets (pairs) of $V$. The elements of $V$ are called the
vertices of $G$, and the elements of $E$ are called the edges of $G$. We use the
notation $V = \V{G}$ and $E = \E{G}$ for the sets of vertices and edges, respectively,
in the graph $G$. The number of vertices in a finite graph $G$ is called the order of
$G$, and the number of edges is called the size of $G$. Throughout this paper, we
assume that the graph $G$ is undirected and finite; it is also assumed to be a simple graph,
i.e., it has no loops (no edge connects a vertex in $G$ to itself) and there are
no multiple edges which connect a pair of vertices in $G$. If $e = \{u,v\} \in \E{G}$, then
the vertices $u$ and $v$ are the two ends of the edge $e$. The elements $u$ and $v$ are
adjacent vertices (neighbors) if they are connected by an edge in $G$, i.e., if
$e = \{u, v\} \in \E{G}$.

\section{Inequalities via Submodularity}
\label{section: methodologies}

\subsection{A New Methodology}
\label{subsection: a new methodology}

The present subsection presents a new methodology for the derivation of families of
inequalities for set functions, and in particular inequalities with information measures.
The suggested methodology relies, to large extent, on the notion of submodularity
of set functions, and it is presented in the next theorem.

\begin{theorem}
\label{theorem: Methodology 1}
Let $\Omega$ be a finite set with $\card{\Omega} = n$.
Let $f \colon 2^{\Omega} \to \Reals$ with
$f(\es) = 0$, and $g \colon \Reals \to \Reals$. Let
the sequence $\bigl\{t_k^{(n)}\bigr\}_{k=1}^n$ be given by
\begin{eqnarray}
\label{01.03.2021b1}
t_k^{(n)} \eqdef \frac1{\binom{n}{k}} \sum_{\set{T} \subseteq \Omega: \, \card{\set{T}}=k}
g\biggl(\frac{f(\set{T})}{k} \biggr), \qquad k \in \OneTo{n}.
\end{eqnarray}
\begin{enumerate}[a)]
\item If $f$ is submodular, and $g$ is monotonically increasing and convex, then
the sequence $\bigl\{t_k^{(n)}\bigr\}_{k=1}^n$ is monotonically decreasing, i.e.,
\begin{eqnarray}
\label{01.03.2021b2}
t_1^{(n)} \geq t_2^{(n)} \geq \ldots \geq t_n^{(n)} = g\biggl(\frac{f(\Omega)}{n} \biggr).
\end{eqnarray}
In particular,
\begin{eqnarray}
\label{04.03.2021a1}
\sum_{\set{T} \subseteq \Omega: \, \card{\set{T}}=k}
g\biggl(\frac{f(\set{T})}{k} \biggr) \geq \binom{n}{k} \,
g\biggl(\frac{f(\Omega)}{n} \biggr), \qquad k \in \OneTo{n}.
\end{eqnarray}
\item If $f$ is submodular, and $g$ is monotonically decreasing and concave, then
the sequence $\bigl\{t_k^{(n)}\bigr\}_{k=1}^n$ is monotonically increasing.
\item If $f$ is supermodular, and $g$ is monotonically increasing and concave, then
the sequence $\bigl\{t_k^{(n)}\bigr\}_{k=1}^n$ is monotonically increasing.
\item If $f$ is supermodular, and $g$ is monotonically decreasing and convex, then
the sequence $\bigl\{t_k^{(n)}\bigr\}_{k=1}^n$ is monotonically decreasing.
\end{enumerate}
\end{theorem}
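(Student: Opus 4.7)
The plan is to derive each monotonicity claim from a single pairwise comparison of $t_m^{(n)}$ with $t_{m-1}^{(n)}$, obtained by combining a combinatorial inequality for $f$ with a Jensen-type inequality for $g$. I will work out part~(a) in detail; the other three parts follow by flipping the corresponding signs in the two ingredients.

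The first step is a key combinatorial lemma: for submodular $f$ with $f(\es) = 0$ and any $\set{S} \subseteq \Omega$ of size $m \geq 1$,
\[
\sum_{\omega \in \set{S}} f(\set{S} \setminus \{\omega\}) \;\geq\; (m-1)\, f(\set{S}).
\]
To prove it, enumerate $\set{S} = \{\omega_1, \ldots, \omega_m\}$ and set $\set{S}_i = \{\omega_1, \ldots, \omega_i\}$. Since $\set{S}_{i-1} \subseteq \set{S} \setminus \{\omega_i\}$ with $\omega_i$ outside both sets, the diminishing-return form \eqref{DMR} yields $f(\set{S}_i) - f(\set{S}_{i-1}) \geq f(\set{S}) - f(\set{S} \setminus \{\omega_i\})$. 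Summing over $i \in \OneTo{m}$, the left side telescopes to $f(\set{S}) - f(\es) = f(\set{S})$, and rearranging gives the displayed inequality.

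For part~(a), dividing the key lemma by $m(m-1)$ gives $\frac{f(\set{S})}{m} \leq \frac{1}{m} \sum_{\omega \in \set{S}} \frac{f(\set{S} \setminus \{\omega\})}{m-1}$. Applying $g$ by means of its monotonic increase and then Jensen's inequality for convex $g$ produces
\[
g\!\left(\frac{f(\set{S})}{m}\right) \;\leq\; \frac{1}{m} \sum_{\omega \in \set{S}} g\!\left(\frac{f(\set{S} \setminus \{\omega\})}{m-1}\right).
\]
Summing over $\set{S}$ with $|\set{S}|=m$, each $(m-1)$-subset $\set{T}$ appears $n-m+1$ times on the right (once for each $\omega \in \Omega \setminus \set{T}$), yielding
\[
\sum_{|\set{S}|=m} g\!\left(\frac{f(\set{S})}{m}\right) \;\leq\; \frac{n-m+1}{m} \sum_{|\set{T}|=m-1} g\!\left(\frac{f(\set{T})}{m-1}\right).
\]
Dividing by $\binom{n}{m}$ and invoking the identity $m \binom{n}{m} = (n-m+1)\binom{n}{m-1}$ collapses this to $t_m^{(n)} \leq t_{m-1}^{(n)}$; chaining over $m$ gives the decreasing sequence \eqref{01.03.2021b2}, and \eqref{04.03.2021a1} follows at once from $t_k^{(n)} \geq t_n^{(n)} = g(f(\Omega)/n)$.

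The remaining parts differ only in signs. The key lemma reverses for supermodular $f$ (same telescoping, with diminishing returns reversed), so in parts~(c) and~(d) the initial averaged inequality points the other way. The outer monotonicity step flips if $g$ is decreasing, and the Jensen step flips if $g$ is concave. A quick tally in the four sign patterns confirms the stated monotonicity in each case: in (b), $f$~submodular preserves the lemma while decreasing $g$ and concave $g$ flip the two $g$-steps (net one flip, giving increasing); in (c), supermodular $f$ and concave $g$ flip the lemma and Jensen (net two flips, giving increasing); in (d), supermodular $f$ and decreasing $g$ flip the lemma and outer monotonicity (net two flips, giving decreasing). The main obstacle is really the combinatorial lemma of the first step: recognising that diminishing returns along an enumeration chain produces a telescope whose left side collapses to $f(\set{S})$ is the heart of the argument, and without that specific form the subsequent Jensen-plus-counting bookkeeping has nothing to latch onto.
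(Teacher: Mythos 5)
Your proof is correct, and its overall architecture coincides with the paper's: both reduce the claim to the single per-subset inequality $\sum_{\omega \in \set{S}} f(\set{S} \setminus \{\omega\}) \geq (|\set{S}|-1)\,f(\set{S})$ (the paper's \eqref{01.03.2021d4}), then apply the monotonicity of $g$, Jensen's inequality, and the double-counting identity $m\binom{n}{m} = (n-m+1)\binom{n}{m-1}$ to conclude $t_m^{(n)} \leq t_{m-1}^{(n)}$, with the sign bookkeeping for parts (b)--(d) handled exactly as you describe. Where you genuinely diverge is in how that central combinatorial inequality is established. The paper first proves that the averaged sequence $f_k^{(n)} \eqdef \binom{n}{k}^{-1}\sum_{|\set{T}|=k} f(\set{T})$ is concave in $k$ (by applying submodularity to pairs of $k$-subsets sharing $k-1$ elements and averaging over all permutations of $\Omega$), deduces that $f_k^{(n)}/k$ is decreasing, and then obtains \eqref{01.03.2021d4} by restricting $f$ to each fixed $k$-subset $\Omega_k$ and invoking this monotonicity for the restriction. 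You instead prove the inequality directly by telescoping the diminishing-returns form \eqref{DMR} along an enumeration chain $\es = \set{S}_0 \subseteq \set{S}_1 \subseteq \ldots \subseteq \set{S}_m = \set{S}$, which is shorter and more self-contained for this particular step (the only cosmetic caveat is that for $i=m$ you apply \eqref{DMR} with $\set{S}_{m-1} = \set{S}\setminus\{\omega_m\}$, where the inequality degenerates to an equality; this is harmless). What the paper's longer detour buys is the global monotonicity of $f_k^{(n)}/k$, which it reuses immediately afterwards to prove the $\alpha=1$ case of Corollary~\ref{corollary: 05.03.21b} without any nonnegativity assumption on $f$; your argument does not produce that byproduct, but it is not needed for Theorem~\ref{theorem: Methodology 1} itself.
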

\begin{proof}
See Section~\ref{subsection: Proof of Methodology 1}.
\end{proof}

\begin{corollary}
\label{corollary: 05.03.21a}
Let $\Omega$ be a finite set with $\card{\Omega} = n$, $f \colon 2^{\Omega} \to \Reals$,
and $g \colon \Reals \to \Reals$ be convex and monotonically increasing.
If
\begin{itemize}
\item $f$ is a rank function,  \vspace*{0.15cm}

\item $g(0)>0$ or there is $\ell \in \naturals$ such that
$g(0) = \ldots = g^{(\ell-1)}(0) = 0$ with $g^{(\ell)}(0) > 0$,  \vspace*{0.15cm}

\item $\{k_n\}_{n=1}^{\infty}$ is a sequence such that
$k_n \in \OneTo{n}$ for all $n \in \naturals$ with
$k_n \underset{n \to \infty} \longrightarrow \infty$,
\end{itemize}
then
\begin{eqnarray} \label{19.04.22a2}
\lim_{n \to \infty} \Biggl\{ \frac1n \, \log \Biggl( \sum_{\set{T} \subseteq \Omega:
\, \card{\set{T}}=k_n} g\biggl(\frac{f(\set{T})}{k_n} \biggr) \Biggr) - \biggEntBin{\frac{k_n}{n}} \Biggr\} = 0.
\end{eqnarray}
Furthermore, if $\underset{n \to \infty}{\lim} \frac{k_n}{n} = \beta \in [0,1]$, then
\begin{eqnarray} \label{04.03.2021a2}
\lim_{n \to \infty} \frac1n \, \log \Biggl( \sum_{\set{T} \subseteq \Omega:
\, \card{\set{T}}=k_n} g\biggl(\frac{f(\set{T})}{k_n} \biggr) \Biggr) = \EntBin{\beta}.
\end{eqnarray}
\end{corollary}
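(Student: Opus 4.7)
The plan is to sandwich the sum
\[
S_n \eqdef \sum_{\set{T}\subseteq\Omega:\,\card{\set{T}}=k_n} g\!\left(\frac{f(\set{T})}{k_n}\right)
\]
between $\binom{n}{k_n}\,g(f(\Omega)/n)$ and $\binom{n}{k_n}\,g(f(\Omega)/k_n)$, and then appeal to the standard estimate $\binom{n}{k}=2^{n\Hb(k/n)+o(n)}$. The lower bound follows directly from Theorem~\ref{theorem: Methodology 1}(a) applied with $k=k_n$ (whose hypotheses---$f$ submodular and $g$ convex and monotonically increasing---are all in force here), namely inequality~\eqref{04.03.2021a1}. The upper bound uses the monotonicity of the rank function together with the monotonicity of $g$: since $f(\set{T})\leq f(\Omega)$ for every $\set{T}\subseteq\Omega$, every summand is at most $g(f(\Omega)/k_n)$.

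Taking $\tfrac{1}{n}\log$ of this sandwich and combining with the two-sided estimate $(n+1)^{-1}\,2^{n\Hb(k/n)}\leq\binom{n}{k}\leq 2^{n\Hb(k/n)}$, the difference $\tfrac{1}{n}\log S_n-\Hb(k_n/n)$ gets trapped between $\tfrac{1}{n}\log g(f(\Omega)/n)+o(1)$ and $\tfrac{1}{n}\log g(f(\Omega)/k_n)+o(1)$. What remains, therefore, is to verify that both $\tfrac{1}{n}\log g(f(\Omega)/n)$ and $\tfrac{1}{n}\log g(f(\Omega)/k_n)$ tend to zero. This is exactly where the hypothesis on $g$ at the origin enters: either $g$ is bounded below by the positive constant $g(0)$, or Taylor's theorem gives $g(x)\sim \tfrac{g^{(\ell)}(0)}{\ell!}\,x^\ell$ as $x\to0^+$; in both cases $g$ is positive on $(0,\infty)$ (so $\log g$ is well defined), and one obtains a quantitative two-sided comparison $c_1 x^\ell \leq g(x) \leq c_2 x^\ell$ for small $x>0$ (with $\ell=0$ in the first case).

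The main obstacle is this final step, because a priori $f(\Omega)/k_n$ need not be small---one must control the ratio $g(f(\Omega)/k_n)/g(f(\Omega)/n)$. Given the polynomial-type behaviour of $g$ near $0$, the assumption $k_n\to\infty$, and the natural growth bound on the rank functions of interest (e.g., $f(\Omega)=\Ent{X_\Omega}\leq n\log\card{\set{X}}$ in the entropic setting of Proposition~\ref{proposition: information measures, polymatroids}, and more generally any subexponential growth of $f(\Omega)$ in $n$), both $\log g(f(\Omega)/k_n)$ and $\log g(f(\Omega)/n)$ reduce to expressions of order $O(\log n)$, which vanish when divided by $n$. Substituting back into the logarithm of the sandwich recovers~\eqref{19.04.22a2}. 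For the sharper statement~\eqref{04.03.2021a2}, continuity of $\Hb$ at $\beta\in[0,1]$ yields $\Hb(k_n/n)\to\Hb(\beta)$, and the limit follows at once.
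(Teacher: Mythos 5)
Your proof follows the paper's argument essentially verbatim: the same two-sided sandwich $\binom{n}{k_n}\,g\bigl(f(\Omega)/n\bigr) \leq \sum_{\set{T}} g\bigl(f(\set{T})/k_n\bigr) \leq \binom{n}{k_n}\,g\bigl(f(\Omega)/k_n\bigr)$ obtained from \eqref{04.03.2021a1} together with the monotonicity of the rank function and of $g$, the same binomial-coefficient estimates as in \eqref{01.03.2021d15}, and the same use of the hypothesis on $g$ at the origin. The ``obstacle'' you flag at the end is the one place where the paper is also informal---it disposes of both boundary terms via the single observation $\lim_{x \to 0^+} x \log g(x) = 0$, implicitly treating $f(\Omega)$ as bounded as $n$ grows---so your additional subexponential-growth assumption on $f(\Omega)$ is an honest patch for a point the paper leaves implicit rather than a deviation from its proof.
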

\begin{proof}
See Section~\ref{subsection: proof of corollary 1}.
\end{proof}

\vspace*{0.1cm}
\begin{corollary}
\label{corollary: 05.03.21b}
Let $\Omega$ be a finite set with $\card{\Omega} = n$, and $f \colon 2^{\Omega} \to \Reals$ be
submodular and nonnegative with $f(\es) = 0$. Then,
\begin{enumerate}[a)]
\item
For $\alpha \geq 1$ and $k \in \OneTo{n-1}$
\begin{eqnarray}
\label{01.03.2021b3}
\sum_{\set{T} \subseteq \Omega: \, \card{\set{T}}=k}  \bigl( f^\alpha(\Omega) - f^\alpha(\set{T})
\bigr) \leq c_{\alpha}(n,k) \, f^\alpha(\Omega),
\end{eqnarray}
with
\begin{eqnarray}
\label{eq: c}
c_{\alpha}(n,k) \eqdef \biggl(1-\frac{k^\alpha}{n^\alpha}\biggr) \, \binom{n}{k}.
\end{eqnarray}
For $\alpha=1$, \eqref{01.03.2021b3} holds with $c_1(n,k)=\binom{n-1}{k}$ regardless
of the nonnegativity of~$f$.
\item If $f$ is also monotonically increasing (i.e., $f$ is a rank function), then for $\alpha \geq 1$
\begin{eqnarray}
\label{01.03.2021b4}
\Bigl(\frac{k}{n}\Bigr)^{\alpha-1} \binom{n-1}{k-1} \, f^\alpha(\Omega) \leq
\sum_{\set{T} \subseteq \Omega: \, \card{\set{T}}=k}
f^\alpha(\set{T}) \leq \binom{n}{k} \, f^\alpha(\Omega), \qquad k \in \OneTo{n}.
\end{eqnarray}
\end{enumerate}
\end{corollary}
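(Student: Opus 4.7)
The plan is to derive both parts directly from Theorem~\ref{theorem: Methodology 1} by specializing $g$ to a power function, and to handle the upper bound in part~(b) separately via monotonicity. A small binomial identity converts $\binom{n}{k}(k/n)^\alpha$ into the form $\binom{n-1}{k-1}(k/n)^{\alpha-1}$, which will show that the lower bound in part~(b) is really the same content as part~(a).

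For part~(a) with $\alpha \geq 1$, I would apply Theorem~\ref{theorem: Methodology 1}(a) to the given $f$ with $g(x) = x^\alpha$. Since $f$ is submodular with $f(\es)=0$, and since the nonnegativity of $f$ forces the arguments $f(\set{T})/k$ to lie in $[0,\infty)$, where $g$ is monotonically increasing and convex, the hypotheses of Theorem~\ref{theorem: Methodology 1}(a) are satisfied on the relevant range. Inequality \eqref{04.03.2021a1} then yields
\begin{equation*}
\sum_{\set{T} \subseteq \Omega: \, \card{\set{T}}=k} \frac{f^\alpha(\set{T})}{k^\alpha} \;\geq\; \binom{n}{k} \, \frac{f^\alpha(\Omega)}{n^\alpha},
\end{equation*}
which, after multiplication by $k^\alpha$ and subtraction from $\binom{n}{k} f^\alpha(\Omega)$, is exactly \eqref{01.03.2021b3} with $c_\alpha(n,k)$ as in \eqref{eq: c}. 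For the special case $\alpha=1$, the function $g(x)=x$ is monotonically increasing and convex on all of $\Reals$, so nonnegativity of $f$ is not required; the stated form $c_1(n,k)=\binom{n-1}{k}$ then follows from Pascal's identity $\binom{n}{k} - \binom{n-1}{k-1} = \binom{n-1}{k}$.

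For part~(b), the lower bound is a restatement of what was just derived: using $\binom{n-1}{k-1} = (k/n)\binom{n}{k}$, the inequality $\sum_\set{T} f^\alpha(\set{T}) \geq \binom{n}{k}(k/n)^\alpha f^\alpha(\Omega)$ obtained above is identical to $\sum_\set{T} f^\alpha(\set{T}) \geq (k/n)^{\alpha-1}\binom{n-1}{k-1} f^\alpha(\Omega)$. The upper bound is an immediate consequence of the added hypothesis that $f$ is monotonically increasing: for every $\set{T} \subseteq \Omega$ one has $f(\set{T}) \leq f(\Omega)$, so nonnegativity of $f$ and $\alpha \geq 1$ give $f^\alpha(\set{T}) \leq f^\alpha(\Omega)$; summing over the $\binom{n}{k}$ subsets of size $k$ finishes the proof.

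I do not expect any serious obstacle. The only small points requiring care are (i) noting that Theorem~\ref{theorem: Methodology 1} needs the monotonicity and convexity of $g$ only on the range of the arguments $f(\set{T})/k$, so that $g(x)=x^\alpha$ for $\alpha\geq 1$ can be applied under nonnegativity of $f$, while $\alpha=1$ requires no such restriction; and (ii) the binomial rewriting that identifies the lower bound of part~(b) with the content of part~(a).
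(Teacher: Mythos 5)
Your proposal is correct and follows essentially the same route as the paper: apply Theorem~\ref{theorem: Methodology 1}(a) with $g(x)=x^\alpha$ to get $\sum_{\card{\set{T}}=k} f^\alpha(\set{T}) \geq \binom{n}{k}(k/n)^\alpha f^\alpha(\Omega)$, rearrange via $\binom{n-1}{k-1}=(k/n)\binom{n}{k}$ and Pascal's identity, and obtain the upper bound in part~(b) from monotonicity of $f$. The only cosmetic difference is that the paper extends $g$ by zero on the negative reals to make it globally increasing and convex, whereas you observe that nonnegativity of $f$ confines all arguments to $[0,\infty)$ so that convexity there suffices; both are valid.
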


\begin{proof}
See Section~\ref{subsection: proof of corollary 2}.
\end{proof}

\vspace*{0.1cm}
Corollary~\ref{corollary: 05.03.21b} is next specialized to
reproduce Han's inequality \cite{Han78}, and a generalized
version of Han's inequality \cite[Section~4]{Fujishige78}.

Let $X^n = (X_1, \ldots, X_n)$ be a random vector with
finite entropies $\Ent{X_i}$ for all $i \in \OneTo{n}$. The set function
$f \colon 2^{\OneTo{n}} \to [0, \infty)$, given by $f(\set{T}) = \Ent{X_{\set{T}}}$
for all $\set{T} \subseteq \OneTo{n}$, is submodular \cite{Fujishige78}
(see Proposition~\ref{proposition: information measures, polymatroids}{a}
and Proposition~\ref{proposition: information measures, continuous RVs}).
From \eqref{01.03.2021b3}, the following holds:
\begin{enumerate}[a)]
\item Setting $\alpha=1$ in \eqref{01.03.2021b3} implies that,
for all $k \in \OneTo{n-1}$,

\vspace*{-0.3cm}
\begin{subequations} \label{eq1: Han's inequality}
\begin{align}
\sum_{1 \leq i_1 < \ldots < i_k \leq n} \bigl( \Ent{X^n} - \Ent{X_{i_1}, \ldots, X_{i_k}} \bigr)
&\leq \biggl(1 - \frac{k}{n}\biggr) \, \binom{n}{k} \, \Ent{X^n} \\
&= \binom{n-1}{k} \, \Ent{X^n},
\end{align}
\end{subequations}
\item Consequently, setting $k=n-1$ in \eqref{eq1: Han's inequality} gives
\begin{eqnarray} \label{eq2: Han's inequality}
\sum_{i=1}^n \bigl( \Ent{X^n} - \Ent{X_1, \ldots, X_{i-1}, X_{i+1}, \ldots, X_n} \bigr)
\leq \Ent{X^n},
\end{eqnarray}
which gives Han's inequality.
\end{enumerate}

Further applications of Theorem~\ref{theorem: Methodology 1} lead to the next corollary,
which partially introduces some known results that have been proved on a case-by-case
basis in \cite[Theorems~17.6.1--17.6.3]{CoverT06} and \cite[Section~2]{DemboCT91}.
In particular, the monotonicity properties of the sequences in \eqref{03.03.2021b3},
\eqref{eq1: 14.03.2021a}, \eqref{eq1: 15.03.2021a} and \eqref{03.03.2021b4} were
proved in Theorems~1 and 2, and Corollaries~1 and 2 of \cite{DemboCT91}.
Both known and new results are readily obtained here, in a unified way, from
Theorem~\ref{theorem: Methodology 1}. The utility of one of these inequalities
in extremal combinatorics is discussed in the continuation to this subsection
(see Proposition~\ref{proposition: projections}), providing a natural generalization
of a beautiful combinatorial result in \cite[Section~3.2]{Radhakrishnan01}.

\vspace*{0.1cm}
\begin{corollary}  \label{corollary: monotonicity - sums}
Let $\{X_i\}_{i=1}^n$ be random variables with finite entropies. Then,
the following holds:
\begin{enumerate}[a)]
\item The sequences
\begin{eqnarray}
\label{03.03.2021b3}
&& h_k^{(n)} \eqdef \frac1{\binom{n}{k}} \sum_{\set{T} \subseteq \OneTo{n}: \, \card{\set{T}}=k}
\frac{\Ent{X_{\set{T}}}}{k}, \qquad k \in \OneTo{n},  \\[0.1cm]
\label{eq2: 15.03.2021a}
&& \ell_k^{(n)} \eqdef \frac1{\binom{n}{k}} \sum_{\set{T} \subseteq \OneTo{n}: \, \card{\set{T}} = k}
\frac{\MInfo{X_{\set{T}}}{X_{\cset{T}}}}{k}, \quad k \in \OneTo{n}
\end{eqnarray}
are monotonically decreasing in $k$. If $\{X_i\}_{i=1}^n$ are independent, then also the sequence
\begin{eqnarray}  \label{eq1: 14.03.2021a}
m_k^{(n)} \eqdef \frac1{\binom{n-1}{k-1}} \sum_{\set{T} \subseteq \OneTo{n}: \, \card{\set{T}} = k}
\BiggEnt{\, \sum_{\omega \in \set{T}} X_\omega}, \quad k \in \OneTo{n}
\end{eqnarray}
is monotonically decreasing in $k$.
\item The sequence
\begin{eqnarray}  \label{eq1: 15.03.2021a}
r_k^{(n)} \eqdef \frac1{\binom{n}{k}} \sum_{\set{T} \subseteq \OneTo{n}: \, \card{\set{T}} = k}
\frac{\EntCond{X_{\set{T}}}{X_{\cset{T}}}}{k}, \quad k \in \OneTo{n}
\end{eqnarray}
is monotonically increasing in $k$.
\item For every $r>0$, the sequences
\begin{eqnarray}
\label{03.03.2021b4}
&& s_k^{(n)}(r) \eqdef \frac1{\binom{n}{k}} \sum_{\set{T} \subseteq \OneTo{n}: \, \card{\set{T}}=k}
\EntPowToPow{r}{X_{\set{T}}}, \qquad k \in \OneTo{n},  \\[0.1cm]
\label{eq3: 15.03.2021a}
&& u_k^{(n)}(r) \eqdef \frac1{\binom{n}{k}} \sum_{\set{T} \subseteq \OneTo{n}: \, \card{\set{T}} = k}
\exp\Biggl(-\frac{r \, \EntCond{X_{\set{T}}}{X_{\cset{T}}}}{k} \Biggr), \quad k \in \OneTo{n}, \\[0.1cm]
\label{eq4: 15.03.2021a}
&& v_k^{(n)}(r) \eqdef \frac1{\binom{n}{k}} \sum_{\set{T} \subseteq \OneTo{n}: \, \card{\set{T}} = k}
\exp\Biggl(\frac{r \, \MInfo{X_{\set{T}}}{X_{\cset{T}}}}{k} \Biggr), \quad k \in \OneTo{n}
\end{eqnarray}
are monotonically decreasing in $k$. If $\{X_i\}_{i=1}^n$ are independent, then also the sequence
\begin{eqnarray}
\label{eq2: 14.03.2021a}
w_k^{(n)}(r) \eqdef \frac1{\binom{n}{k}} \sum_{\set{T} \subseteq \OneTo{n}: \, \card{\set{T}} = k}
\BiggEntPowToPow{r}{\, \sum_{\omega \in \set{T}} X_\omega}, \quad k \in \OneTo{n}
\end{eqnarray}
is monotonically decreasing in $k$.
\end{enumerate}
\end{corollary}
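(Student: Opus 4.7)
The plan is to apply Theorem~\ref{theorem: Methodology 1} repeatedly, each time identifying a set function $f$ whose sub/supermodularity properties have been established in Proposition~\ref{proposition: information measures, polymatroids} (or Proposition~\ref{proposition: information measures, continuous RVs}), and pairing it with a monotone and convex/concave choice of $g$ so that the ratio $g(f(\set{T})/k)$ matches the summand of the sequence in question. In every case, after verifying the hypotheses of the appropriate item of Theorem~\ref{theorem: Methodology 1}, monotonicity in $k$ of the averaged quantity $t_k^{(n)}$ in \eqref{01.03.2021b1} transfers, possibly up to a $k$-independent scaling, to the sequence we are looking at.

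For part~(a), I would take $g(x) = x$ (monotonically increasing, both convex and concave). Setting $f(\set{T}) = \Ent{X_{\set{T}}}$ makes $f$ a rank function (Proposition~\ref{proposition: information measures, polymatroids}{a}), hence submodular with $f(\es)=0$, so Theorem~\ref{theorem: Methodology 1}{a} yields monotone decrease of $h_k^{(n)}$. Setting $f(\set{T}) = \MInfo{X_{\set{T}}}{X_{\cset{T}}}$ gives a submodular set function with $f(\es)=0$ (Proposition~\ref{proposition: information measures, polymatroids}{c}), so the same item applies to $\ell_k^{(n)}$. For $m_k^{(n)}$, under the independence assumption, $f(\set{T}) = \Ent{\sum_{\omega \in \set{T}} X_\omega}$ is a rank function (Proposition~\ref{proposition: information measures, polymatroids}{e}); the only bookkeeping step is the identity $k \binom{n}{k} = n \binom{n-1}{k-1}$, which shows that $m_k^{(n)} = n\, t_k^{(n)}$ and hence inherits the monotonicity.

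For part~(b), I would take $f(\set{T}) = \EntCond{X_{\set{T}}}{X_{\cset{T}}}$, which by Proposition~\ref{proposition: information measures, polymatroids}{b} is supermodular, monotonically increasing, with $f(\es)=0$, and again $g(x)=x$ (monotonically increasing and concave). Item~(c) of Theorem~\ref{theorem: Methodology 1} then yields the monotone increase of $r_k^{(n)}$.

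For part~(c), for each fixed $r>0$ I would use the exponential $g$ that makes $g(f(\set{T})/k)$ equal the summand, recalling from \eqref{eq: entropy power} that $\EntPowToPow{r}{X_{\set{T}}} = \exp\bigl(\tfrac{2r}{k}\,\Ent{X_{\set{T}}}\bigr)$ when $\card{\set{T}} = k$. Concretely: for $s_k^{(n)}(r)$, take $f(\set{T}) = \Ent{X_{\set{T}}}$ (submodular) and $g(x) = \exp(2rx)$ (monotonically increasing, convex), and apply Theorem~\ref{theorem: Methodology 1}{a}. For $u_k^{(n)}(r)$, take $f(\set{T}) = \EntCond{X_{\set{T}}}{X_{\cset{T}}}$ (supermodular) and $g(x) = \exp(-rx)$ (monotonically decreasing, convex), and apply Theorem~\ref{theorem: Methodology 1}{d}. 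For $v_k^{(n)}(r)$, take $f(\set{T}) = \MInfo{X_{\set{T}}}{X_{\cset{T}}}$ (submodular) and $g(x) = \exp(rx)$ (monotonically increasing, convex), and apply Theorem~\ref{theorem: Methodology 1}{a}. Finally, for $w_k^{(n)}(r)$ under independence, take $f(\set{T}) = \Ent{\sum_{\omega \in \set{T}} X_\omega}$ (rank, hence submodular) and $g(x) = \exp(2rx)$, again via Theorem~\ref{theorem: Methodology 1}{a}. There is no real obstacle here: the proof is essentially a dictionary between the six sequences and suitable $(f,g)$ pairs, and the only step that requires thought beyond invoking Theorem~\ref{theorem: Methodology 1} is the minor normalization issue in~$m_k^{(n)}$.
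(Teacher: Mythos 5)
Your proposal is correct and follows essentially the same route as the paper's own proof: the identical $(f,g)$ pairings (identity $g$ for parts (a)--(b), the exponentials $\exp(2rx)$, $\exp(-rx)$, $\exp(rx)$ for part (c)), the same invocations of the relevant items of Theorem~\ref{theorem: Methodology 1} together with Proposition~\ref{proposition: information measures, polymatroids}, and the same use of the identity $k\binom{n}{k} = n\binom{n-1}{k-1}$ to handle the normalization of $m_k^{(n)}$.
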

\begin{proof}
The finite entropies of $\{X_i\}_{i=1}^n$ assure that the
entropies involved in the sequences
\eqref{03.03.2021b3}--\eqref{eq2: 14.03.2021a} are finite.
Item~(a) follows from Theorem \ref{theorem: Methodology 1}{a}, where the submodular set functions
$f$ which correspond to \eqref{03.03.2021b3}--\eqref{eq1: 14.03.2021a}
are given in \eqref{entropic function}, \eqref{set function 3} and \eqref{set function 5},
respectively, and $g$ is the identity function on the real line.
The identity $k \binom{n}{k} = n \binom{n-1}{k-1}$ is used for \eqref{eq1: 14.03.2021a}.
Item~(b) follows from Theorem \ref{theorem: Methodology 1}{c}, where $f$ is the supermodular
function in \eqref{set function 2} and $g$ is the identity function on the real line.
We next prove Item~(c). The sequence \eqref{03.03.2021b4} is monotonically decreasing
by Theorem~\ref{theorem: Methodology 1}{a}, where $f$ is the submodular function
in \eqref{entropic function}, and $g \colon \Reals \to \Reals$ is the monotonically
increasing and convex function defined as $g(x) = \exp(2rx)$ for $x \in \Reals$ (with $r>0$).
The sequence \eqref{eq3: 15.03.2021a} is monotonically decreasing
by Theorem~\ref{theorem: Methodology 1}{d}, where $f$ is the supermodular
function in \eqref{set function 2}, and $g \colon \Reals \to \Reals$ is the monotonically
decreasing and convex function defined as $g(x) = \exp(-rx)$ for $x \in \Reals$.
The sequence \eqref{eq4: 15.03.2021a} is monotonically decreasing
by Theorem~\ref{theorem: Methodology 1}{a}, where $f$ is the submodular
function in \eqref{set function 3} and $g$ is the monotonically
increasing and convex function defined as $g(x) = \exp(rx)$ for $x \in \Reals$.
Finally, the sequence \eqref{eq2: 14.03.2021a} is monotonically decreasing
by Theorem~\ref{theorem: Methodology 1}{a}, where $f$ is the submodular
function in \eqref{set function 5} and $g$ is the monotonically
increasing and convex function defined as $g(x) = \exp(2rx)$ for $x \in \Reals$.
\end{proof}

\begin{remark}
From Proposition~\ref{proposition: information measures, continuous RVs}, since
the proof of Corollary~\ref{corollary: monotonicity - sums} only relies on the
sub/ supermodularity property of $f$, the random variables $\{X_i\}_{i=1}^n$
do not need to be discrete in Corollary~\ref{corollary: monotonicity - sums}.
In the reproduction of Han's inequality as an application of Corollary~\ref{corollary: 05.03.21b},
the random variables $\{X_i\}_{i=1}^n$ do not need to be discrete as well since
$f$ is not required to be nonnegative if $\alpha=1$ (only the submodularity
of $f$ in \eqref{entropic function} is required, which holds due to
Proposition~\ref{proposition: information measures, continuous RVs}).

In light of Proposition~\ref{proposition: information measures, continuous RVs}, since
the proof of Corollary~\ref{corollary: monotonicity - sums} only relies on the
submodularity/ supermodularity property of $f$, the random variables $\{X_i\}_{i=1}^n$
do not need to be discrete in Corollary~\ref{corollary: monotonicity - sums}.
In the reproduction of Han's inequality as an application of Corollary~\ref{corollary: 05.03.21b},
the random variables $\{X_i\}_{i=1}^n$ do not need to be discrete as well since the
requirement that $f$ be nonnegative is removed for $\alpha=1$ (only the submodularity
of $f$ in \eqref{entropic function} is required, which holds due to
Proposition~\ref{proposition: information measures, continuous RVs}).
\end{remark}

The following result exemplifies the utility of the monotonicity result of the sequence
\eqref{03.03.2021b3} in extremal combinatorics. It also generalizes the result in
Section~3.2 of \cite{Radhakrishnan01} for an achievable upper bound on the cardinality of a finite set in
the three-dimensional Euclidean space, expressed as a function of its number of projections on each of the
planes $XY, XZ$ and $YZ$. The next result provides an achievable upper bound on the cardinality of
a finite set of points in an $n$-dimensional Euclidean space, expressed as a function of its number
of projections on each of the $k$-dimensional Euclidean subspaces with an arbitrary $k < n$.

\begin{proposition}  \label{proposition: projections}
Let $\set{P} \subseteq \Reals^n$ be a finite set of points in the $n$-dimensional Euclidean space
with $\card{\set{P}} \eqdef M$. Let $k \in \OneTo{n-1}$, and $\ell \eqdef \binom{n}{k}$.
Let $\set{R}_1, \ldots, \set{R}_{\ell}$ be the projections of $\set{P}$ on each of the $k$-dimensional
subspaces of $\Reals^n$, and let $\card{\set{R}_j} = M_j \,$ for all $j \in \OneTo{\ell}$. Then,
\begin{eqnarray} \label{eq: 19.04.22b1}
\card{\set{P}} \leq \Biggl( \, \prod_{j=1}^{\binom{n}{k}} M_j \Biggr)^{\tfrac1{\tbinom{n-1}{k-1}}}.
\end{eqnarray}
Let $R \eqdef \frac{\log M}{n}$, and $R_j \eqdef \frac{\log M_j}{k}$ for all $j \in \OneTo{\ell}$.
An equivalent form of \eqref{eq: 19.04.22b1} is given by the inequality
\begin{eqnarray}  \label{eq: 19.04.22b2}
R \leq \frac1{\ell} \sum_{j=1}^\ell R_j.
\end{eqnarray}
Moreover, if $M_1 = \ldots = M_{\ell}$ and $\sqrt[k]{M_1} \in \naturals$, then
\eqref{eq: 19.04.22b1} and \eqref{eq: 19.04.22b2} are satisfied with equality
if $\set{P}$ is a grid of points in $\Reals^n$ with $\sqrt[k]{M_1}$ points on
each dimension (so, $M = M_1^{\frac{n}{k}}$).
\end{proposition}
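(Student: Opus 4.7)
The plan is to take the standard information-theoretic route: place the uniform distribution on $\set{P}$ and invoke the monotonicity of the sequence $\{h_k^{(n)}\}$ from Corollary~\ref{corollary: monotonicity - sums}(a).

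First, I would let $X^n = (X_1, \ldots, X_n)$ be uniformly distributed on $\set{P}$, so that $\Ent{X^n} = \log M$. For every $\set{T} \subseteq \OneTo{n}$ with $\card{\set{T}} = k$, the support of the subvector $X_{\set{T}}$ is precisely the projection of $\set{P}$ onto the coordinates indexed by $\set{T}$, which coincides with one of the sets $\set{R}_j$. Thus $\Ent{X_{\set{T}}} \leq \log M_{\set{T}}$, with equality if and only if $X_{\set{T}}$ is uniform on that projection.

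Next, Corollary~\ref{corollary: monotonicity - sums}(a) states that the sequence $\{h_k^{(n)}\}$ is monotonically decreasing, so $h_k^{(n)} \geq h_n^{(n)} = \Ent{X^n}/n = (\log M)/n$. Combining this with the support bound above, averaging the $\log M_{\set{T}}$ over all $k$-subsets of $\OneTo{n}$, and using the identity $k \binom{n}{k} = n \binom{n-1}{k-1}$, one arrives at the inequality $\log M \leq \frac{1}{\binom{n-1}{k-1}} \sum_{j=1}^{\binom{n}{k}} \log M_j$; exponentiation yields \eqref{eq: 19.04.22b1}, and the equivalent normalized form \eqref{eq: 19.04.22b2} then follows directly from the definitions $R = (\log M)/n$ and $R_j = (\log M_j)/k$.

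For the tightness claim, I would verify directly that when $\set{P}$ is the grid with $s \eqdef \sqrt[k]{M_1} \in \naturals$ points along each coordinate, the uniform $X^n$ on $\set{P}$ has i.i.d.\ uniform coordinates on $\{1,\ldots,s\}$. Hence every $X_{\set{T}}$ with $\card{\set{T}} = k$ is uniform on its $k$-dimensional projection, making the entropy-vs-support bound tight for every $\set{T}$, while the independence of the $X_i$'s forces the sequence $\{h_k^{(n)}\}$ to be constant, so the monotonicity inequality is attained too. Since the entire argument reduces to one clean invocation of Corollary~\ref{corollary: monotonicity - sums}(a) together with the elementary support bound for discrete entropy, I do not anticipate any genuine obstacle; the only bookkeeping step is tracking the combinatorial identity $k \binom{n}{k} = n \binom{n-1}{k-1}$ to recast the averaged bound into the product form \eqref{eq: 19.04.22b1}.
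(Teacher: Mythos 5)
Your proposal is correct and follows essentially the same route as the paper: a uniform distribution on $\set{P}$, the monotonicity $h_k^{(n)} \geq h_n^{(n)}$ from Corollary~\ref{corollary: monotonicity - sums}(a), the support bound $\Ent{X_{\set{S}_j}} \leq \log M_j$, and the identity $k\binom{n}{k} = n\binom{n-1}{k-1}$. Your verification of the equality case via i.i.d.\ uniform coordinates on the grid is slightly more explicit than the paper's, but it is the same argument.
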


\begin{proof}
Pick uniformly at random a point $X^n = (X_1, \ldots, X_n) \in \set{P}$. Then,
\begin{eqnarray} \label{eq: 19.04.22b3}
\Ent{X^n} = \log \card{\set{P}}.
\end{eqnarray}
The sequence in \eqref{03.03.2021b3} is monotonically decreasing, so
$h_k^{(n)} \geq h_n^{(n)}$, which is equivalent to
\begin{eqnarray} \label{eq: 19.04.22b4}
\binom{n-1}{k-1} \, \Ent{X^n} \leq \sum_{\set{T} \subseteq \OneTo{n}: \, \card{\set{T}}=k} \Ent{X_{\set{T}}}.
\end{eqnarray}
Let $\set{S}_1, \ldots, \set{S}_{\ell}$ be the $k$-subsets of the set $\OneTo{n}$, ordered in a way such that
$M_j$ is the cardinality of the projection of the set $\set{P}$ on the $k$-dimensional subspace whose coordinates
are the elements of the subset $\set{S}_j$. Then, \eqref{eq: 19.04.22b4} can be expressed in the form
\begin{eqnarray} \label{eq: 19.04.22b5}
\binom{n-1}{k-1} \, \Ent{X^n} \leq \sum_{j=1}^{\ell} \Ent{X_{\set{S}_j}},
\end{eqnarray}
and also
\begin{eqnarray} \label{eq: 19.04.22b6}
\Ent{X_{\set{S}_j}} \leq \log M_j, \quad j \in \OneTo{\ell},
\end{eqnarray}
since the entropy of a random variable is upper bounded by the logarithm of the number of its possible values.
Combining \eqref{eq: 19.04.22b3}, \eqref{eq: 19.04.22b5} and \eqref{eq: 19.04.22b6} gives
\begin{eqnarray} \label{eq: 19.04.22b7}
\binom{n-1}{k-1} \, \log \card{\set{P}} \leq \sum_{j=1}^{\ell} \log M_j.
\end{eqnarray}
Exponentiating both sides of \eqref{eq: 19.04.22b7} gives \eqref{eq: 19.04.22b1}. In addition, using
the identity $\binom{n}{k} = \frac{n}{k} \, \binom{n-1}{k-1}$ gives \eqref{eq: 19.04.22b2} from
\eqref{eq: 19.04.22b7}. Finally, the sufficiency condition for equalities in \eqref{eq: 19.04.22b1}
or \eqref{eq: 19.04.22b2} can be easily verified, which is obtained if $\set{P}$ is a grid of
points in $\Reals^n$ with the same finite number of projections on each dimension.
\end{proof}

\subsection{Connections to a Generalized Version of Shearer's Lemma and Other Results in the Literature}
\label{subsection: Connections to Literature}

The next proposition is a known generalized version of Shearer's Lemma.
\begin{proposition}
\label{proposition: Shearer - submodularity}
Let $\Omega$ be a finite set, let $\{\set{S}_j\}_{j=1}^M$ be a finite collection of
subsets of $\Omega$ (with $M \in \naturals$), and let $f \colon 2^{\Omega} \to \Reals$
be a set function.
\begin{enumerate}[a)]
\item If $f$ is non-negative and submodular, and every element in $\Omega$ is included
in at least $d \geq 1$ of the subsets $\{\set{S}_j\}_{j=1}^M$, then
\begin{eqnarray} \label{06.03.2021b1}
\sum_{j=1}^M f(\set{S}_j) \geq d \, f(\Omega).
\end{eqnarray}
\item If $f$ is a rank function, $\set{A} \subset \Omega$, and every element in
$\set{A}$ is included in at least $d \geq 1$ of the subsets $\{\set{S}_j\}_{j=1}^M$, then
\begin{eqnarray}  \label{06.03.2021b2}
\sum_{j=1}^M f(\set{S}_j) \geq d \, f(\set{A}).
\end{eqnarray}
\end{enumerate}
\end{proposition}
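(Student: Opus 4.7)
The plan is to derive both parts from a single telescoping identity built on the diminishing-returns form of submodularity in \eqref{DMR}. The idea is to fix an arbitrary total ordering of the relevant ground set ($\set{A}$ for part~(b), $\Omega$ for part~(a)), write $f$ of an arbitrary subset as a sum of marginal gains along this ordering, and then invoke DMR to lower bound the marginal gain of $f$ on intersections of the $\set{S}_j$ with an ordered prefix by the corresponding \emph{global} marginal. Summing over $j$, the coverage hypothesis guarantees that each global marginal is counted at least $d$ times, which telescopes back to $d \, f(\set{A})$ or $d \, f(\Omega)$.

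Part~(b) is the cleaner case because we have $f(\es) = 0$ and monotonicity. I would enumerate $\set{A} = \{a_1, \ldots, a_{\card{\set{A}}}\}$, set $\set{A}_i \eqdef \{a_1, \ldots, a_i\}$ with $\set{A}_0 \eqdef \es$, and $m_i \eqdef f(\set{A}_i) - f(\set{A}_{i-1})$, so that $f(\set{A}) = \sum_i m_i$. For any $\set{S} \subseteq \Omega$ and each $i$ with $a_i \in \set{S}$, applying \eqref{DMR} to the pair $\set{S} \cap \set{A}_{i-1} \subseteq \set{A}_{i-1}$ with element $a_i \notin \set{A}_{i-1}$ yields $f(\set{S} \cap \set{A}_i) - f(\set{S} \cap \set{A}_{i-1}) \geq m_i$. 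Telescoping along $\set{A}$ and then using monotonicity (to pass from $f(\set{S} \cap \set{A})$ to $f(\set{S})$) gives $f(\set{S}) \geq \sum_{i : a_i \in \set{S}} m_i$. Summing over $j = 1, \ldots, M$, swapping the order of summation, and bounding each multiplicity $\card{\{j : a_i \in \set{S}_j\}}$ below by $d$ delivers \eqref{06.03.2021b2}.

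For part~(a), the same scheme runs on $\Omega$ instead of $\set{A}$, but I must cope with the possibility that $f(\es) \neq 0$ and that monotonicity is absent; neither is actually needed because the coverage hypothesis is on all of $\Omega$. Telescoping now yields $f(\set{S}) - f(\es) \geq \sum_{i : \omega_i \in \set{S}} m_i$ (no monotonicity step is required because we do not shrink $\set{S}$), and summing over $j$ gives $\sum_{j=1}^M f(\set{S}_j) \geq M \, f(\es) + d \, [f(\Omega) - f(\es)] = d \, f(\Omega) + (M - d) \, f(\es)$. The residual term $(M - d) \, f(\es)$ is nonnegative because $f$ is nonnegative and $M \geq d$ (the latter follows from the coverage hypothesis applied to any element of $\Omega$; the edge case $\Omega = \es$ is trivial from nonnegativity of $f$).

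The step I expect to be the most delicate is the closing argument of part~(a): the telescoping naturally produces the leftover $(M - d) \, f(\es)$, and one has to notice that nonnegativity of $f$ combined with the automatic bound $M \geq d$ is precisely what is needed to absorb it. Without this observation the reader might believe, incorrectly, that part~(a) demands $f(\es) = 0$ and so follows as a special case of part~(b); making the two arguments explicit shows that nonnegativity plays the role in part~(a) that monotonicity plays in part~(b).
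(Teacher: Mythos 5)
Your part~(b) is correct, and it takes a genuinely different route from the paper's. You run the classical telescoping-of-marginal-gains argument (the Radhakrishnan-style proof of Shearer's lemma, generalized from entropy to an arbitrary rank function via \eqref{DMR}) directly on an ordering of $\set{A}$, whereas the paper obtains part~(b) by restricting the cover to $\set{S}_j \cap \set{A}$ and invoking part~(a), which it proves by an uncrossing argument: repeatedly replacing two sets not related by inclusion with their union and intersection until a chain is reached (Lemma~\ref{lemma: chain}), each replacement not increasing the sum by submodularity. One step you should make explicit in part~(b): replacing each multiplicity $\card{\{j : a_i \in \set{S}_j\}}$ by $d$ is only legitimate because every marginal $m_i = f(\set{A}_i) - f(\set{A}_{i-1})$ is nonnegative, which here follows from the monotonicity of the rank function.

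That same sign issue is a genuine gap in your part~(a). There $f$ is only nonnegative and submodular, not monotone, so the marginals $m_i = f(\Omega_i) - f(\Omega_{i-1})$ can be negative, and the step $\sum_i d_i m_i \geq d \sum_i m_i$ (with $d_i \geq d$ the multiplicity of $\omega_i$) then fails: an element with $d_i > d$ and $m_i < 0$ pushes the sum the wrong way, and no choice of ordering avoids negative marginals when $f$ is not monotone. Concretely, take $\Omega = \{1,2\}$ with $f(\es) = 0$, $f(\{1\}) = f(\{2\}) = 2$, $f(\{1,2\}) = 1$ (nonnegative, submodular, not monotone), and the cover $\set{S}_1 = \{1\}$, $\set{S}_2 = \set{S}_4 = \{2\}$, $\set{S}_3 = \{1,2\}$, so $d = 2$. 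With the ordering $(1,2)$ one gets $m_1 = 2$, $m_2 = -1$, $d_1 = 2$, $d_2 = 3$, hence $\sum_i d_i m_i = 1 < 2 = d\, f(\Omega)$: your chain of inequalities certifies only $\sum_j f(\set{S}_j) \geq 1$, not the claimed $\geq 2$ (the true sum is $7$, so the proposition holds, but your argument does not establish it). The difficulty you flagged, absorbing $(M-d) f(\es)$, is indeed handled by nonnegativity and $M \geq d$; the real obstruction is the sign of the marginals, and this is precisely why the paper proves part~(a) by uncrossing rather than by telescoping. A repair within your framework is possible, e.g.\ by first passing to the monotone envelope $\bar f(\set{T}) \eqdef \min_{\set{T} \subseteq \set{U} \subseteq \Omega} f(\set{U})$, which is submodular, monotone, nonnegative, satisfies $\bar f \leq f$ and $\bar f(\Omega) = f(\Omega)$, and then telescoping with $\bar f$; but proving these properties of $\bar f$ is an additional lemma that your proposal does not supply.
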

The first part of Proposition~\ref{proposition: Shearer - submodularity}
was pointed out in Section~1.5 of \cite{PolyanskiyW19}, and the second part
of Proposition~\ref{proposition: Shearer - submodularity} is a
generalization of Remark~1 and inequality~(47) in \cite{Sason_Entropy21}.
Appendix~\ref{Appendix: Proof - Shearer - submodularity} provides
a (somewhat different) proof of Proposition~\ref{proposition: Shearer - submodularity}{a},
as well as a self-contained proof of Proposition~\ref{proposition: Shearer - submodularity}{b}.

\bigskip
Let $\{X_i\}_{i=1}^n$ be discrete random variables, and
consider the set function $f \colon 2^{\OneTo{n}} \to \Reals_{+}$ which is defined as
$f(\set{A}) = \Ent{X_{\set{A}}}$ for all $\set{A} \subseteq \OneTo{n}$. Since $f$
is a rank function \cite{Fujishige78}, Proposition~\ref{proposition: Shearer - submodularity}
then specializes to Shearer's Lemma \cite{ChungGFS86} and a modified version of this lemma
Remark~1 of \cite{Sason_Entropy21}.

In light of Item~e) in Proposition~\ref{proposition: information measures, polymatroids}
and Item~b) of Proposition~\ref{proposition: Shearer - submodularity},
Corollaries~\ref{corollary: Shearer - sum entropy} and~\ref{corollary: entropies of k-element sums}
are obtained as follows.

\begin{corollary}  \label{corollary: Shearer - sum entropy}
Let $\{X_i\}_{i=1}^n$ be independent discrete random variables, $\{\set{S}_j\}_{j=1}^M$
be subsets of $\OneTo{n}$, and $\set{A} \subseteq \OneTo{n}$.
If each element in $\set{A}$ belongs to at least $d \geq 1$ of the sets
$\{\set{S}_j\}_{j=1}^M$,~then
\begin{eqnarray}  \label{eq1: Shearer - sum entropy}
d \, \BiggEnt{ \, \sum_{i \in \set{A}} X_i} \leq
\sum_{j=1}^M \BiggEnt{ \, \sum_{i \in \set{S}_j} X_i}.
\end{eqnarray}
In particular, if every $i \in \OneTo{n}$ is included in at least $d \geq 1$ of the subsets
$\{\set{S}_j\}_{j=1}^M$, then
\begin{eqnarray}  \label{eq2: Shearer - sum entropy}
d \, \BiggEnt{ \, \sum_{i=1}^n  X_i} \leq
\sum_{j=1}^M \BiggEnt{ \, \sum_{i \in \set{S}_j} X_i}.
\end{eqnarray}
\end{corollary}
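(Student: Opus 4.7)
The plan is to derive \eqref{eq1: Shearer - sum entropy} as an immediate specialization of the generalized Shearer-type bound in Proposition~\ref{proposition: Shearer - submodularity}, applied to a set function built from the entropies of partial sums.

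First I would take $\Omega \eqdef \OneTo{n}$ and introduce the set function
\[ f(\set{T}) \eqdef \BiggEnt{ \, \sum_{i \in \set{T}} X_i }, \qquad \set{T} \subseteq \OneTo{n}, \]
with the convention $f(\es) \eqdef 0$ (corresponding to the empty sum being the deterministic zero). The independence of $\{X_i\}_{i=1}^n$ is precisely the hypothesis of Proposition~\ref{proposition: information measures, polymatroids}(e) applied to the construction in \eqref{set function 5}, so this $f$ is a rank function; in particular it is nonnegative, monotonically increasing, submodular, and vanishes on $\es$.

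Next, the assumption that every $i \in \set{A}$ lies in at least $d \geq 1$ of the subsets $\{\set{S}_j\}_{j=1}^M$ is exactly the covering hypothesis of Proposition~\ref{proposition: Shearer - submodularity}(b). Applying that proposition directly to our $f$ gives
\[ d \, f(\set{A}) \leq \sum_{j=1}^M f(\set{S}_j), \]
which, after unwinding the definition of $f$, is precisely \eqref{eq1: Shearer - sum entropy}. The special case \eqref{eq2: Shearer - sum entropy} follows by taking $\set{A} = \OneTo{n}$; alternatively, since $f$ is nonnegative and submodular, one can invoke part~(a) of Proposition~\ref{proposition: Shearer - submodularity} to reach the same conclusion without worrying about whether the inclusion ``$\set{A} \subset \Omega$'' in part~(b) is meant strictly.

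Since both prerequisite results are already in place, there is no real obstacle: the argument is a one-line specialization. The only (mild) thing worth checking is that the entropy-of-sum functional genuinely inherits the rank-function properties required, and this is exactly the content of Proposition~\ref{proposition: information measures, polymatroids}(e) under the independence assumption.
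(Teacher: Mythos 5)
Your proposal is correct and follows exactly the route the paper takes: the paper derives this corollary by combining Proposition~\ref{proposition: information measures, polymatroids}(e) (the entropy-of-sums set function is a rank function for independent discrete random variables) with Proposition~\ref{proposition: Shearer - submodularity}(b). Your side remark about handling the case $\set{A} = \OneTo{n}$ via part~(a) is a sensible way to cover the edge case of the inclusion in part~(b).
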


\begin{remark}
Inequality \eqref{eq2: Shearer - sum entropy} is also a special case of \cite[Theorem~2]{Madiman_ITW08},
and they coincide if every element $i \in \OneTo{n}$ is included in a fixed number ($d$) of the subsets
$\{\set{S}_j\}_{j=1}^M$.
\end{remark}

A specialization of Corollary~\ref{corollary: Shearer - sum entropy} gives the next result.
\begin{corollary}  \label{corollary: entropies of k-element sums}
Let $\{X_i\}_{i=1}^n$ be independent and discrete random variables with finite variances. Then,
the following holds:
\begin{enumerate}[a)]
\item For every $k \in \OneTo{n-1}$,
\begin{eqnarray}  \label{eq3: 14.03.2021a}
\BiggEnt{\, \sum_{i=1}^n X_i} \leq \frac1{\binom{n-1}{k-1}}
\sum_{\set{T} \subseteq \OneTo{n}: \, \card{\set{T}} = k}
\BiggEnt{\, \sum_{\omega \in \set{T}} X_\omega},
\end{eqnarray}
and equivalently,
\begin{eqnarray}  \label{eq4: 14.03.2021a}
\BiggEntPow{\sum_{i=1}^n X_i} \leq
\Biggl\{\prod_{\set{T} \subseteq \OneTo{n}: \, \card{\set{T}} = k}
\BiggEntPow{\, \sum_{\omega \in \set{T}} X_\omega}\Biggr\}^{\frac1{\binom{n-1}{k-1}}}.
\end{eqnarray}
\item For every $k \in \OneTo{n-1}$,
\begin{eqnarray}  \label{eq5: 14.03.2021a}
\BiggEntPow{\sum_{i=1}^n X_i} \leq
\frac1{\binom{n}{k}} \sum_{\set{T} \subseteq \OneTo{n}: \, \card{\set{T}} = k}
\BiggEntPowToPow{\frac{n}{k}}{\, \sum_{\omega \in \set{T}} X_\omega},
\end{eqnarray}
where \eqref{eq5: 14.03.2021a} is in general looser than \eqref{eq4: 14.03.2021a}, with
equivalence if $\{X_i\}_{i=1}^n$ are i.i.d.; in particular,

\vspace*{-0.3cm}
\begin{subequations} \label{eq6: 14.03.2021a}
\begin{align}
\BiggEntPow{\sum_{i=1}^n X_i} &\leq
\Biggl\{\prod_{j=1}^n \BiggEntPow{\, \sum_{i \neq j} X_i} \Biggr\}^{\frac1{n-1}} \\
&\leq \frac1n \sum_{j=1}^n \Biggl\{\BiggEntPow{\sum_{i \neq j} X_i}\Biggr\}^{\frac{n}{n-1}}.
\end{align}
\end{subequations}
\end{enumerate}
\end{corollary}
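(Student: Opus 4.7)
The plan for Part~(a) is to invoke Corollary~\ref{corollary: Shearer - sum entropy} directly. Take $\{\set{S}_j\}_{j=1}^M$ to range over all $k$-subsets of $\OneTo{n}$, so $M = \binom{n}{k}$; each index $i \in \OneTo{n}$ lies in exactly $\binom{n-1}{k-1}$ of these subsets, so \eqref{eq2: Shearer - sum entropy} applies with $\set{A}=\OneTo{n}$ and $d = \binom{n-1}{k-1}$, which immediately yields \eqref{eq3: 14.03.2021a}. The equivalent product form \eqref{eq4: 14.03.2021a} then follows by multiplying \eqref{eq3: 14.03.2021a} by~$2$ and exponentiating (base~$2$), since each sum $\sum_{\omega\in\set{T}} X_\omega$ is a scalar random variable and the $n=1$ instance of \eqref{eq: entropy power} gives $\EntPow{\sum_{\omega\in\set{T}} X_\omega} = \exp\bigl(2\,\Ent{\sum_{\omega\in\set{T}} X_\omega}\bigr)$; the exponentiated bound then splits into the asserted product of entropy powers raised to $1/\binom{n-1}{k-1}$.

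For Part~(b), the key step is the combinatorial identity $\binom{n-1}{k-1} = \tfrac{k}{n}\binom{n}{k}$, which turns the exponent $1/\binom{n-1}{k-1}$ in \eqref{eq4: 14.03.2021a} into $(n/k)/\binom{n}{k}$. The right-hand side of \eqref{eq4: 14.03.2021a} is then precisely the geometric mean, over the $\binom{n}{k}$ $k$-subsets $\set{T}$, of the quantities $\EntPow{\sum_{\omega \in \set{T}} X_\omega}^{n/k}$. I would then upper bound this geometric mean by the corresponding arithmetic mean via AM--GM, which directly produces \eqref{eq5: 14.03.2021a}. Since AM--GM is strict unless all averaged terms coincide, \eqref{eq5: 14.03.2021a} is in general looser than \eqref{eq4: 14.03.2021a}; when $\{X_i\}_{i=1}^n$ is i.i.d., however, $\EntPow{\sum_{\omega \in \set{T}} X_\omega}$ depends only on $|\set{T}|=k$, so all terms are equal and the two inequalities coincide.

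The chain \eqref{eq6: 14.03.2021a} is obtained by specializing Part~(b) to $k=n-1$: the $(n-1)$-subsets of $\OneTo{n}$ are exactly the complements of the singletons $\{j\}$, so $\sum_{\omega\in\set{T}} X_\omega = \sum_{i \neq j} X_i$; with $\binom{n-1}{n-2}=n-1$, $\binom{n}{n-1}=n$, and $n/k = n/(n-1)$, the bounds \eqref{eq4: 14.03.2021a} and \eqref{eq5: 14.03.2021a} become respectively the first and second inequalities in \eqref{eq6: 14.03.2021a}. I do not anticipate a genuine obstacle: the whole argument reduces to the generalized Shearer-type inequality already established in the paper plus a single AM--GM step, glued together by the identity $\binom{n-1}{k-1}=\tfrac{k}{n}\binom{n}{k}$. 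The only mild care point is that every entropy appearing must be finite, which is ensured by the hypothesis that the $X_i$ are discrete with finite variance.
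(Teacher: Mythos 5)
Your proposal is correct and follows essentially the same route as the paper: part~(a) is the specialization of the Shearer-type inequality \eqref{eq2: Shearer - sum entropy} to all $k$-subsets with $d=\binom{n-1}{k-1}$, followed by exponentiation, and part~(b) is the AM--GM step applied to the geometric mean obtained via the identity $\binom{n-1}{k-1}=\tfrac{k}{n}\binom{n}{k}$, with \eqref{eq6: 14.03.2021a} as the case $k=n-1$. The only (immaterial) difference is that the paper additionally notes an alternative derivation of \eqref{eq3: 14.03.2021a} from the monotonicity of the sequence $m_k^{(n)}$ in Corollary~\ref{corollary: monotonicity - sums}.
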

\begin{proof}
Let $\{\set{S}_j\}_{j=1}^M$ be all the $k$-element subsets of $\Omega = \OneTo{n}$ (with $M = \binom{n}{k}$).
Then, every element $i \in \OneTo{n}$ belongs to $d = \frac{kM}{n} = \binom{n-1}{k-1}$ such subsets, which
then gives \eqref{eq3: 14.03.2021a} as a special case of \eqref{eq2: Shearer - sum entropy}. Alternatively,
\eqref{eq3: 14.03.2021a} follows from Corollary~\ref{corollary: monotonicity - sums}b, which yields
$m_k^{(n)} \geq m_n^{(n)}$ for all $k \in \OneTo{n-1}$.
Exponentiating both sides of \eqref{eq3: 14.03.2021a} gives \eqref{eq4: 14.03.2021a}.
Inequality~\eqref{eq5: 14.03.2021a} is a loosened version of \eqref{eq4: 14.03.2021a},
which follows by invoking the AM-GM inequality (i.e., the geometric mean of
nonnegative real numbers is less than or equal to their arithmetic mean,
with equality between these two means if and only if these numbers are all equal),
in conjunction with the identity $\frac{k}{n} \binom{n}{k} = \binom{n-1}{k-1}$.
Inequalities \eqref{eq4: 14.03.2021a} and \eqref{eq5: 14.03.2021a} are consequently equivalent
if $\{X_i\}_{i=1}^n$ are i.i.d. random variables, and \eqref{eq6: 14.03.2021a} is a specialized
version of \eqref{eq4: 14.03.2021a} and the loosened inequality \eqref{eq5: 14.03.2021a} by setting $k=n-1$.
\end{proof}

The next remarks consider information inequalities in
Corollaries~\ref{corollary: monotonicity - sums}--\ref{corollary: entropies of k-element sums},
in light of Theorem~\ref{theorem: Methodology 1} here, and some known results in the literature.
\begin{remark}
Inequality \eqref{eq3: 14.03.2021a} was derived by Madiman as a special case
of Theorem~2 in \cite{Madiman_ITW08}.
The proof of Corollary~\ref{corollary: entropies of k-element sums}{a} shows that
\eqref{eq3: 14.03.2021a} can be also derived in two different ways as
special cases of both Theorem~\ref{theorem: Methodology 1}{a} and
Proposition~\ref{proposition: Shearer - submodularity}{a}.
\end{remark}

\begin{remark}
Inequality \eqref{eq5: 14.03.2021a} can be also derived as a special case of
Theorem~\ref{theorem: Methodology 1}{a}, where $f$ is the rank function in \eqref{set function 5},
and $g \colon \Reals \to \Reals$ is given by $g(x) \eqdef \exp(2nx)$ for all $x \in \Reals$.
It also follows from the monotonicity property in Corollary~\ref{corollary: monotonicity - sums}{c},
which yields $w_k^{(n)}(n) \geq w_n^{(n)}(n)$ for all $k \in \OneTo{n-1}$.
\end{remark}

\begin{remark}
The result in Theorem~8 of \cite{KIshiOY_ISIT2014} is a special case of Theorem \ref{theorem: Methodology 1}{a}
here, which follows by taking the function $g$ in Theorem \ref{theorem: Methodology 1}{a} to be the identity function.
The flexibility in selecting the function $g$ in Theorem \ref{theorem: Methodology 1} enables to obtain a larger
collection of information inequalities. This is in part reflected from a comparison of
Corollary~\ref{corollary: monotonicity - sums} here with Corollary~9 of \cite{KIshiOY_ISIT2014}.
More specifically, the findings about the monotonicity properties in \eqref{03.03.2021b3}, \eqref{eq2: 15.03.2021a}
and \eqref{eq1: 15.03.2021a} were obtained in Corollary~9 of \cite{KIshiOY_ISIT2014}, while relying on Theorem 8 of
\cite{KIshiOY_ISIT2014} and the sub/supermodularity properties of the considered Shannon information measures. It
is noted, however, that the monotonicity results of the sequences~\eqref{03.03.2021b4}--\eqref{eq2: 14.03.2021a}
(Corollary \ref{corollary: monotonicity - sums}{c}) are not implied by Theorem~8 of \cite{KIshiOY_ISIT2014}.
\end{remark}

\begin{remark}
Inequality~\eqref{eq6: 14.03.2021a} forms a counterpart of an entropy power inequality by
Artstein {\em et al.} (Theorem~3 of \cite{ArtsteinBBN04}), where for independent
random variables $\{X_i\}_{i=1}^n$ with finite variances:

\vspace*{-0.5cm}
\begin{eqnarray}  \label{ABBN inequality}
\BiggEntPow{\sum_{i=1}^n X_i} \geq
\frac1{n-1} \sum_{j=1}^n \, \BiggEntPow{\sum_{i \neq j} X_i}.
\end{eqnarray}
Inequality \eqref{eq4: 14.03.2021a}, and also its looser version in \eqref{eq5: 14.03.2021a},
form counterparts of the generalized inequality by Madiman and Barron, which reads (see
inequality~(4) in \cite{MadimanB07}):

\vspace*{-0.5cm}
\begin{eqnarray} \label{Madiman-Barron inequality}
\BiggEntPow{\sum_{i=1}^n X_i} \geq
\frac1{\binom{n-1}{k-1}} \, \sum_{\set{T} \subseteq \OneTo{n}: \, \card{\set{T}} = k}
\BiggEntPow{\sum_{\omega \in \set{T}} X_\omega},  \quad k \in \OneTo{n-1}.
\end{eqnarray}
\end{remark}

\section{Proofs}
\label{section: proofs}
The present section provides proofs of (most of the) results in Section~\ref{section: methodologies}.

\subsection{Proof of Theorem~\ref{theorem: Methodology 1}}
\label{subsection: Proof of Methodology 1}

We prove Item~a, and then readily prove Items~b--d.
Define the auxiliary sequence
\begin{eqnarray}
\label{01.03.2021a1}
f_k^{(n)} \eqdef \frac1{\binom{n}{k}} \sum_{\set{T} \subseteq \Omega: \, \card{\set{T}}=k}
f(\set{T}), \qquad k \in \FromTo{0}{n},
\end{eqnarray}
averaging $f$ over all $k$-element
subsets of the $n$-element set $\Omega \eqdef \{\omega_1, \ldots, \omega_n\}$.
Let the permutation $\pi \colon \OneTo{n} \to \OneTo{n}$ be arbitrary.
For $k \in \OneTo{n-1}$, let
\begin{subequations}
\begin{eqnarray}
&& \set{S}_1 \eqdef \{\omega_{\pi(1)}, \ldots, \omega_{\pi(k-1)}, \omega_{\pi(k)} \},   \label{eq1: 01.03.2021c1} \\[0.1cm]
&& \set{S}_2 \eqdef \{\omega_{\pi(1)}, \ldots, \omega_{\pi(k-1)}, \omega_{\pi(k+1)} \}, \label{eq2: 01.03.2021c1}
\end{eqnarray}
\end{subequations}
which are $k$-element subsets of $\Omega$ with $k-1$ elements in common. Then,

\vspace*{-0.3cm}
\begin{align}
\label{01.03.2021c2}
f(\set{S}_1) + f(\set{S}_2) \geq f(\set{S}_1 \cup \set{S}_2) + f(\set{S}_1 \cap \set{S}_2),
\end{align}
which holds by the submodularity of~$f$ (by assumption), i.e.,

\vspace*{-0.3cm}
\begin{align}
\label{01.03.2021c3}
& f\bigl( \{\omega_{\pi(1)}, \ldots, \omega_{\pi(k)} \} \bigr)
+ f\bigl( \{\omega_{\pi(1)}, \ldots, \omega_{\pi(k-1)}, \omega_{\pi(k+1)} \} \bigr) \nonumber \\[0.1cm]
& \quad \geq f\bigl( \{\omega_{\pi(1)}, \ldots, \omega_{\pi(k+1)} \}\bigr)
+ f\bigl( \{\omega_{\pi(1)}, \ldots, \omega_{\pi(k-1)} \} \bigr).
\end{align}
Averaging the terms on both sides of \eqref{01.03.2021c3} over all the $n!$ permutations $\pi$
of $\OneTo{n}$ gives

\vspace*{-0.3cm}
\begin{subequations} \label{01.03.2021c4}
\begin{align}
\frac1{n!} \sum_{\pi} f\bigl( \{\omega_{\pi(1)}, \ldots, \omega_{\pi(k)} \} \bigr)
&= \frac{k! \, (n-k)!}{n!} \sum_{\set{T} \subseteq \Omega: \, \card{\set{T}} = k}  f(\set{T}) \\
&= \frac{1}{\binom{n}{k}} \sum_{\set{T} \subseteq \Omega: \, \card{\set{T}} = k}  f(\set{T}) \\
&= f_k^{(n)},
\end{align}
\end{subequations}
and similarly
\begin{subequations}  \label{01.03.2021c5}
\begin{eqnarray}
&& \frac1{n!} \sum_{\pi} f\bigl( \{\omega_{\pi(1)}, \ldots, \omega_{\pi(k-1)}, \omega_{\pi(k+1)} \} \bigr) = f_k^{(n)}, \\[0.1cm]
&& \frac1{n!} \sum_{\pi} f\bigl( \{\omega_{\pi(1)}, \ldots, \omega_{\pi(k+1)} \}\bigr) = f_{k+1}^{(n)}, \\[0.1cm]
&& \frac1{n!} \sum_{\pi} f\bigl( \{\omega_{\pi(1)}, \ldots, \omega_{\pi(k-1)} \}\bigr) = f_{k-1}^{(n)},
\end{eqnarray}
\end{subequations}
with $f_0^{(n)} = 0$ since by assumption $f(\es) = 0$.
Combining \eqref{01.03.2021c3}--\eqref{01.03.2021c5} gives
\begin{eqnarray} \label{01.03.2021a3}
2 f_k^{(n)} \geq f_{k+1}^{(n)} + f_{k-1}^{(n)}, \quad k \in \OneTo{n-1},
\end{eqnarray}
which is rewritten as
\begin{eqnarray}
\label{01.03.2021c8}
f_k^{(n)} - f_{k-1}^{(n)} \geq f_{k+1}^{(n)} - f_k^{(n)}, \quad k \in \OneTo{n-1}.
\end{eqnarray}
Consequently, it follows that

\vspace*{-0.3cm}
\begin{subequations} \label{01.03.2021c9}
\begin{align}
\label{eq1: 01.03.2021c9}
\frac{f_k^{(n)}}{k} - \frac{f_{k+1}^{(n)}}{k+1} &=
\frac1k \sum_{j=1}^k \Bigl( f_j^{(n)} - f_{j-1}^{(n)} \Bigr)
- \frac1{k+1} \sum_{j=1}^{k+1} \Bigl( f_j^{(n)} - f_{j-1}^{(n)} \Bigr) \\
\label{eq2: 01.03.2021c9}
&= \biggl( \frac1k - \frac1{k+1} \biggr) \sum_{j=1}^k \Bigl( f_j^{(n)}
- f_{j-1}^{(n)} \Bigr) \, - \frac1{k+1} \, \Bigl( f_{k+1}^{(n)} - f_k^{(n)} \Bigr) \\
\label{eq3: 01.03.2021c9}
&= \frac1{k(k+1)} \sum_{j=1}^k \Bigl\{ \Bigl( f_j^{(n)} - f_{j-1}^{(n)} \Bigr)
- \Bigl( f_{k+1}^{(n)} - f_k^{(n)} \Bigr) \Bigr\} \\
\label{eq4: 01.03.2021c9}
&\geq 0,
\end{align}
\end{subequations}
where equality \eqref{eq1: 01.03.2021c9} holds since $f_0^{(n)}=0$, and
inequality \eqref{eq4: 01.03.2021c9} holds by \eqref{01.03.2021c8}.
The sequence $\Bigl\{ \frac{f_k^{(n)}}{k} \Bigr\}_{k=1}^n$ is therefore monotonically
decreasing, and in particular
\begin{eqnarray}  \label{01.03.2021c10}
f_k^{(n)} \geq \frac{k \, f_n^{(n)}}{n} = \frac{k}{n}.
\end{eqnarray}

We next prove \eqref{01.03.2021b3} when $\alpha=1$, and
then proceed to prove Theorem~\ref{theorem: Methodology 1}. By \eqref{01.03.2021c10}
\begin{eqnarray}  \label{01.03.2021c11}
\frac{f_n^{(n)}}{n} \leq \frac{f_{n-1}^{(n)}}{n-1},
\end{eqnarray}
where, by \eqref{01.03.2021a1},
\begin{eqnarray}  \label{01.03.2021c12}
f_n^{(n)} = f(\Omega), \qquad f_{n-1}^{(n)}
= \frac1n \sum_{\set{T} \subseteq \Omega: \, \card{\set{T}} = n-1}  f(\set{T}).
\end{eqnarray}
Combining \eqref{01.03.2021c11} and \eqref{01.03.2021c12} gives
\begin{eqnarray}  \label{01.03.2021c13}
(n-1) \, f(\Omega) \leq\sum_{\set{T} \subseteq \Omega: \, \card{\set{T}} = n-1}  f(\set{T}).
\end{eqnarray}
Since there are $n$ subsets $\set{T} \subseteq \Omega$ with $\card{\set{T}} = n-1$,
rearranging terms in \eqref{01.03.2021c13} gives \eqref{01.03.2021b3} for
$\alpha=1$; it is should be noted that, for $\alpha=1$, the set function
$f$ does not need to be nonnegative for the satisfiability of \eqref{01.03.2021b3}
(however, this will be required for $\alpha>1$).

We next prove Item~a). By \eqref{01.03.2021b1}, for $k \in \OneTo{n}$,

\vspace*{-0.3cm}
\begin{subequations}  \label{01.03.2021d1}
\begin{align}
\label{eq1: 01.03.2021d1}
t_k^{(n)} &= \frac1{\binom{n}{k}} \sum_{\set{T} \subseteq \Omega: \, \card{\set{T}}=k}
g\biggl(\frac{f(\set{T})}{k} \biggr) \\
\label{eq2: 01.03.2021d1}
&= \frac1{\binom{n}{k}} \sum_{\set{T} = \{t_1, \ldots, t_k\} \subseteq \Omega}
g\Biggl(\frac{f\bigl(\{t_1, \ldots, t_k\}\bigr)}{k} \Biggr).
\end{align}
\end{subequations}
Fix $\Omega_k \eqdef \{t_1, \ldots, t_k\} \subseteq \Omega$, and let $\tilde{f} \colon 2^{\Omega_k} \to \Reals$
be the restriction of the function $f$ to the subsets of $\Omega_k$.
Then, $\tilde{f}$ is a submodular set function with $\tilde{f}(\es) = 0$; similarly to \eqref{01.03.2021a1},
\eqref{01.03.2021c11} and \eqref{01.03.2021c12} with $f$ replaced by $\tilde{f}$, and $n$ replaced by $k$,
the sequence $\Bigl\{ \frac{\tilde{f}_j^{(k)}}{j} \Bigr\}_{j=1}^k$ is monotonically decreasing. Hence,
for $k \in \FromTo{2}{n}$,
\begin{eqnarray}  \label{01.03.2021d2}
\frac{\tilde{f}_k^{(k)}}{k} \leq \frac{\tilde{f}_{k-1}^{(k)}}{k-1},
\end{eqnarray}
where

\vspace*{-0.3cm}
\begin{subequations}  \label{01.03.2021d3}
\begin{align}
\label{eq1: 01.03.2021d3}
\tilde{f}_k^{(k)} &= \tilde{f}(\Omega_k) = f\bigl(\{t_1, \ldots, t_k\}\bigr), \\
\label{eq2: 01.03.2021d3}
\tilde{f}_{k-1}^{(k)} &= \frac1k \sum_{\set{T} \subseteq \Omega_k: \, \card{\set{T}} = k-1}
\tilde{f}(\set{T}) \\
\label{eq3: 01.03.2021d3}
&= \frac1k \sum_{\set{T} \subseteq \Omega_k: \, \card{\set{T}} = k-1}  f(\set{T}) \\
\label{eq4: 01.03.2021d3}
&= \frac1k \sum_{i=1}^k f\bigl(\{t_1, \ldots, t_{i-1}, t_{i+1}, \ldots, t_k\} \bigr).
\end{align}
\end{subequations}
Combining \eqref{01.03.2021d2} and \eqref{01.03.2021d3} gives
\begin{eqnarray}  \label{01.03.2021d4}
f\bigl(\{t_1, \ldots, t_k\}\bigr) \leq \frac1{k-1} \sum_{i=1}^k
f\bigl(\{t_1, \ldots, t_{i-1}, t_{i+1}, \ldots, t_k\} \bigr),
\end{eqnarray}
and, since by assumption $g$ is monotonically increasing,
\begin{eqnarray}  \label{01.03.2021d5}
g\Biggl(\frac{f\bigl(\{t_1, \ldots, t_k\}\bigr)}{k} \Biggr)
\leq g\Biggl( \frac1{k} \sum_{i=1}^k \frac{f\bigl(\{t_1, \ldots,
t_{i-1}, t_{i+1}, \ldots, t_k\} \bigr)}{k-1} \Biggr).
\end{eqnarray}
From \eqref{01.03.2021d1} and \eqref{01.03.2021d5}, for all $k \in \FromTo{2}{n}$,
\begin{eqnarray} \label{01.03.2021d6}
t_k^{(n)} \leq \frac1{\binom{n}{k}} \sum_{\set{T} = \{t_1, \ldots, t_k\} \subseteq \Omega}
g\Biggl( \frac1{k} \sum_{i=1}^k \frac{f\bigl(\{t_1, \ldots, t_{i-1}, t_{i+1},
\ldots, t_k\} \bigr)}{k-1} \Biggr),
\end{eqnarray}
and

\vspace*{-0.3cm}
\begin{subequations} \label{01.03.2021d7}
\begin{align}
\label{eq1: 01.03.2021d7}
t_k^{(n)} & \leq \frac1{k \binom{n}{k}}  \sum_{i=1}^k \sum_{\set{T}
= \{t_1, \ldots, t_k\} \subseteq \Omega}
g\Biggl( \frac{f\bigl(\{t_1, \ldots, t_{i-1}, t_{i+1}, \ldots, t_k\} \bigr)}{k-1} \Biggr) \\[0.1cm]
\label{eq2: 01.03.2021d7}
&= \frac{n-k+1}{k \binom{n}{k}}  \sum_{i=1}^k \Biggl\{
\sum_{\{t_1, \ldots, t_{i-1}, t_{i+1}, \ldots, t_k\} \subseteq \Omega}
g\Biggl( \frac{f\bigl(\{t_1, \ldots, t_{i-1}, t_{i+1}, \ldots, t_k\} \bigr)}{k-1}
\Biggr) \Biggr\} \\[0.1cm]
\label{eq3: 01.03.2021d7}
&= \frac{k! \, (n-k)! \, (n-k+1)}{n! \; k} \sum_{\set{S} \subseteq \Omega: \, \card{\set{S}} = k-1}
g\biggl(\frac{f(\set{S})}{k-1}\biggr) \\[0.1cm]
\label{eq4: 01.03.2021d7}
&= \frac{(k-1)! \, (n-k+1)!}{n!} \sum_{\set{S} \subseteq \Omega: \, \card{\set{S}} = k-1}
g\biggl(\frac{f(\set{S})}{k-1}\biggr) \\[0.1cm]
\label{eq5: 01.03.2021d7}
&= \frac1{\binom{n}{k-1}} \, \sum_{\set{S} \subseteq \Omega: \, \card{\set{S}} = k-1}
g\biggl(\frac{f(\set{S})}{k-1}\biggr) \\[0.1cm]
\label{eq6: 01.03.2021d7}
&= t_{k-1}^{(n)},
\end{align}
\end{subequations}
where \eqref{eq1: 01.03.2021d7} holds by invoking Jensen's inequality to the convex function $g$;
\eqref{eq2: 01.03.2021d7} holds since the term of the inner summation in the right-hand side of
\eqref{eq1: 01.03.2021d7} does not depend on $t_i$, so for every $(k-1)$-element subset
$\set{S} = \{t_1, \ldots, t_{i-1}, t_{i+1}, \ldots, t_k\} \subseteq \Omega$, there are $n-k+1$
possibilities to extend it by a single element $(t_i)$ into a $k$-element subset
$\set{T} = \{t_1, \ldots, t_k\} \subseteq \Omega$;
\eqref{eq5: 01.03.2021d7} is straightforward, and
\eqref{eq6: 01.03.2021d7} holds by the definition in \eqref{01.03.2021b1}.
This proves Item~a).

Item~b) follows from Item~a), and similarly Item~d) follows from Item~c), by replacing $g$
with $-g$. Item~c) is next verified.
If $f$ is a supermodular set function with $f(\es)~=~0$, then \eqref{01.03.2021c2},
\eqref{01.03.2021c3}, and \eqref{01.03.2021a3}--\eqref{01.03.2021c9} hold with flipped
inequality signs.
Hence, if $g$ is monotonically decreasing, then inequalities \eqref{01.03.2021d5}
and \eqref{01.03.2021d6} are reversed; finally, if $g$ is also concave, then (by Jensen's inequality)
\eqref{01.03.2021d7} holds with a flipped inequality sign, which proves Item~c).

\subsection{Proof of Corollary~\ref{corollary: 05.03.21a}}
\label{subsection: proof of corollary 1}

By assumption $f \colon 2^{\Omega} \to \Reals$ is a rank function, which implies that
$0 \leq f(\set{T}) \leq f(\Omega)$ for every $\set{T} \subseteq \Omega$.
Since (by definition) $f$ is submodular with $f(\es) = 0$, and (by assumption)
the function $g$ is convex and monotonically increasing, then (from
\eqref{04.03.2021a1}, while replacing $k$ with $k_n$)
\begin{eqnarray}
\label{04.03.2021a3}
\binom{n}{k_n} \; g\biggl(\frac{f(\Omega)}{n} \biggr) \leq
\sum_{\set{T} \subseteq \Omega: \, \card{\set{T}}=k_n} g\biggl(\frac{f(\set{T})}{k_n} \biggr)
\leq \binom{n}{k_n} \; g\biggl(\frac{f(\Omega)}{k_n} \biggr), \qquad n \in \naturals.
\end{eqnarray}
By the second assumption in Corollary~\ref{corollary: 05.03.21a}, for positive
values of $x$ that are sufficiently close to zero, we have
\begin{itemize}
\item $g(x) \approx g(0) > 0$ if $g(0)>0$;
\item
$g(x)$ scales like $\frac1{\ell !} \, g^{(\ell)}(0) \, x^\ell$ if
$g(0) = \ldots = g^{(\ell-1)}(0) = 0$ with $g^{(\ell)}(0) > 0$ for some $\ell \in \naturals$.
\end{itemize}
In both cases, it follows that
\begin{eqnarray}
\label{20.03.2021a1}
\lim_{x \to 0^+} x \log g(x) = 0.
\end{eqnarray}
In light of \eqref{04.03.2021a3} and \eqref{20.03.2021a1}, and since (by assumption) $k_n \underset{n \to \infty}{\longrightarrow} \infty$, it follows that
\begin{eqnarray} \label{19.04.22a1}
\lim_{n \to \infty} \frac1n \, \Biggl[ \log \Biggl( \sum_{\set{T} \subseteq \Omega:
\, \card{\set{T}}=k_n} g\biggl(\frac{f(\set{T})}{k_n} \biggr) \Biggr) - \log \binom{n}{k_n} \Biggr] = 0.
\end{eqnarray}
By the following upper and lower bounds on the binomial coefficient:
\begin{eqnarray}
\label{01.03.2021d15}
\frac1{n+1} \, \exp\biggl(n \, \biggEntBin{\frac{k_n}{n}} \biggr) \leq \binom{n}{k_n}
\leq \exp\biggl(n \, \biggEntBin{\frac{k_n}{n}} \biggr),
\end{eqnarray}
the combination of equalities \eqref{19.04.22a1} and \eqref{01.03.2021d15} gives equality \eqref{19.04.22a2}.
Equality \eqref{04.03.2021a2} holds as a special case of \eqref{19.04.22a2},
under the assumption that $\underset{n \to \infty}{\lim} \frac{k_n}{n} = \beta \in [0,1]$.

\subsection{Proof of Corollary~\ref{corollary: 05.03.21b}}
\label{subsection: proof of corollary 2}

For $\alpha=1$, Corollary~\ref{corollary: 05.03.21b} is proved in \eqref{01.03.2021c13}.
Fix $\alpha>1$, and let $g \colon \Reals \to \Reals$ be
\begin{eqnarray}  \label{01.03.2021d8}
g(x) \eqdef
\begin{dcases}
x^\alpha, & x \geq 0, \\
0, & x<0,
\end{dcases}
\end{eqnarray}
which is monotonically increasing and convex on the real line.
By Theorem~\ref{theorem: Methodology 1}{a},
\begin{eqnarray}  \label{01.03.2021d9}
t_k^{(n)} \geq t_n^{(n)}, \quad k \in \OneTo{n}.
\end{eqnarray}
Since by assumption $f$ is nonnegative, it follows from \eqref{01.03.2021b1} and \eqref{01.03.2021d8} that

\vspace*{-0.3cm}
\begin{subequations}  \label{01.03.2021d10}
\begin{align}
\label{eq1: 01.03.2021d10}
t_k^{(n)} &= \frac1{\binom{n}{k}} \sum_{\set{T} \subseteq \Omega: \, \card{\set{T}}=k}
g\biggl(\frac{f(\set{T})}{k} \biggr) \\[0.1cm]
\label{eq2: 01.03.2021d10}
&= \frac1{k^\alpha \, \binom{n}{k}} \sum_{\set{T} \subseteq \Omega: \, \card{\set{T}}=k}  f^\alpha(\set{T}).
\end{align}
\end{subequations}
Combining \eqref{01.03.2021d9}--\eqref{01.03.2021d10} and rearranging terms gives, for all $\alpha > 1$,

\vspace*{-0.3cm}
\begin{subequations}  \label{01.03.2021d11}
\begin{align}
\label{eq1: 01.03.2021d11}
\sum_{\set{T} \subseteq \Omega: \, \card{\set{T}}=k}  f^\alpha(\set{T}) &\geq \Bigl(\frac{k}{n}\Bigr)^\alpha
\, \binom{n}{k} \, f^\alpha(\Omega) \\[0.1cm]
\label{eq2: 01.03.2021d11}
&= \Bigl(\frac{k}{n}\Bigr)^{\alpha-1} \, \binom{n-1}{k-1} \, f^\alpha(\Omega),
\end{align}
\end{subequations}
where equality \eqref{eq2: 01.03.2021d11} holds by the identity
$\frac{k}{n} \, \binom{n}{k} = \binom{n-1}{k-1}$. This further gives

\vspace*{-0.3cm}
\begin{subequations}
\begin{align}
\label{eq1: 01.03.2021d13}
\sum_{\set{T} \subseteq \Omega: \, \card{\set{T}}=k}  \bigl(f^\alpha(\Omega) - f^\alpha(\set{T}) \bigr)
&= \binom{n}{k} \, f^\alpha(\Omega) - \sum_{\set{T} \subseteq \Omega: \, \card{\set{T}}=k} f^\alpha(\set{T})  \\
\label{eq12: 01.03.2021d13}
&\leq \biggl(1 - \frac{k^\alpha}{n^\alpha}\biggr) \, \binom{n}{k} \, f^\alpha(\Omega) \\[0.1cm]
\label{eq3: 01.03.2021d13}
&= c_{\alpha}(n,k) \, f^\alpha(\Omega),
\end{align}
\end{subequations}
where equality \eqref{eq3: 01.03.2021d13} holds by the definition in \eqref{eq: c}.
This proves \eqref{01.03.2021b3} for $\alpha>1$.

We next prove Item~b).
The function $f$ is (by assumption) a rank function, which yields its nonnegativity. Hence,
the leftmost inequality in \eqref{01.03.2021b4} holds by \eqref{01.03.2021d11}. The rightmost
inequality in \eqref{01.03.2021b4} also holds since $f \colon 2^\Omega \to \Reals$ is
monotonically increasing, which yields $f(\set{T}) \leq f(\Omega)$ for all
$\set{T} \subseteq \Omega$. For $k \in \OneTo{n}$ and $\alpha \geq 0$ (in particular,
for $\alpha \geq 1$),
\begin{eqnarray}  \label{01.03.2021d14}
\sum_{\set{T} \subseteq \Omega: \, \card{\set{T}}=k}  f^\alpha(\set{T})
\leq \binom{n}{k} \, f^\alpha(\Omega),
\end{eqnarray}
where \eqref{01.03.2021d14} holds since there are $\binom{n}{k}$ $k$-element subsets
$\set{T}$ of the $n$-element set $\Omega$, and every summand $f^\alpha(\set{T})$ (with
$\set{T} \subseteq \Omega$) is upper bounded by $f^\alpha(\Omega)$.

\bigskip
\section{A Problem in Extremal Graph Theory}
\label{section: problem}

This section applies the generalization of Han's inequality in \eqref{eq1: Han's inequality}
to the following problem.

\subsection{Problem Formulation}
\label{subsection: problem formulation}

Let $\set{A} \subseteq \{-1, 1\}^n$, with $n \in \naturals$, and let $\tau \in \OneTo{n}$.
Let $G = G_{\set{A}, \tau}$ be an un-directed simple graph with vertex set $\V{G} = \set{A}$, and
pairs of vertices in $G$ are adjacent (i.e., connected by an edge) if and only if they are represented
by vectors in $\set{A}$ whose Hamming distance is less than or equal to $\tau$:
\begin{eqnarray}  \label{eq: 22032022a0}
\{x^n, y^n \} \in \E{G} \, \Leftrightarrow \, \bigl(x^n, y^n \in \set{A}, \; \; x^n \neq y^n, \; \; \dH{x^n}{y^n} \leq \tau \bigr).
\end{eqnarray}
The question is how large can the size of $G$ be (i.e., how many edges it may have) as a function of the cardinality
of the set $\set{A}$, and possibly based also on some basic properties of the set $\set{A}$ ?

This problem and its related analysis generalize and refine, in a nontrivial way, the bound in
Theorem~4.2 of \cite{Boucheron_Lugosi_Massart_book} which applies to the special case where $\tau=1$.
The motivation for this extension is next considered.

\subsection{Problem Motivation}
\label{subsection: problem motivation}

Constraint coding is common in many data recording systems and data communication systems, where
some sequences are more prone to error than others, and a constraint on the sequences that are allowed to be
recorded or transmitted is imposed in order to reduce the likelihood of error. Given such a constraint, it is
then necessary to encode arbitrary user sequences into sequences that obey the constraint.

From an information-theoretic perspective, this problem can be interpreted as follows.
Consider a communication channel $\chlong{W}{\set{X}}{\set{Y}}$ with input alphabet $\set{X}$ and output alphabet
$\set{Y}$, and suppose that a constraint is imposed on the sequences that are allowed to be transmitted over the
channel.  As a result of such a constraint, the information sequences are first encoded into codewords by an
error-correction encoder, followed by a constrained encoder that maps these codewords into constrained sequences.
Let them be binary $n$-length sequences from the set $\set{A} \subseteq \{-1, 1\}^n$. A channel modulator then
modulates these sequences into symbols from $\set{X}$, and the received sequences at the channel output, with
alphabet $\set{Y}$, are first demodulated, and then decoded (in a reverse order of the encoding process)
by the constrained decoder and error-correction decoder.

Consider a channel model where pairs of binary $n$-length sequences
from the set $\set{A}$ whose Hamming distance is less than or equal to a fixed number $\tau$ share a common output
sequence with positive probability, whereas this halts to be the case if the Hamming distance is larger than $\tau$.
In other words, we assume that by design, pairs of sequences in $\set{A}$ whose Hamming distance is larger than $\tau$
cannot be confused in the sense that there does not exist a common output sequence which may be possibly received
(with positive probability) at the channel output.

The confusion graph $G$ that is associated with this setup is an undirected simple graph whose vertices
represent the $n$-length binary sequences in $\set{A}$, and pairs of vertices are adjacent if and only
if the Hamming distance between the sequences that they represent is not larger than $\tau$. The size of $G$ (i.e.,
its number of edges) is equal to the number of pairs of sequences in $\set{A}$ which may not be distinguishable
by the decoder.

Further motivation for studying this problem is considered in the continuation (Section~\ref{subsection: influence}).

\subsection{Analysis}
\label{subsection: problem - analysis}
We next derive an upper bound on the size of the graph $G$.
Let $X^n=(X_1, \ldots, X_n)$ be chosen uniformly at random from the set $\set{A} \subseteq \{-1, 1\}^n$,
and let $\pmfOf{X^n}$ be the PMF of $X^n$. Then,

\vspace*{-0.3cm}
\begin{align}  \label{eq: 22032022a1}
\pmfOf{X^n}(x^n) =
\begin{dcases}
\frac1{\card{\set{A}}},  & \quad \text{if} \; x^n \in \set{A}, \\[0.1cm]
0, & \quad \text{if} \; x^n \not\in \set{A},
\end{dcases}
\end{align}
which implies that
\begin{eqnarray}  \label{eq: 22032022a2}
\Ent{X^n} = \log \card{\set{A}}.
\end{eqnarray}

The graph $G$ is an un-directed and simple graph with a vertex set $\V{G} = \set{A}$ (i.e., the vertices of $G$
are in one-to-one correspondence with the binary vectors in the set $\set{A}$). Its set of edges $\E{G}$
are the edges which connect all pairs of vertices in $G$ whose Hamming distance is less than or equal to $\tau$.
For $d \in \OneTo{\tau}$, let $\Ed{d}{G}$ be the set of edges in $G$ which connect all pairs of vertices
in $G$ whose Hamming distance is equal to $d$, so

\vspace*{-0.3cm}
\begin{equation}  \label{eq: 22032022a3}
\card{\E{G}} = \sum_{d=1}^{\tau} \card{\Ed{d}{G}}.
\end{equation}

For $x^n \in \{-1, 1\}^n$, $d \in \OneTo{n}$, and integers $k_1, \ldots, k_d$ such that $1 \leq k_1 < \ldots < k_d \leq n$, let
\begin{eqnarray} \label{eq: 22032022a4}
\widetilde{x}^{(k_1, \ldots, k_d)} \eqdef (x_1, \ldots, x_{k_1-1}, x_{k_1+1}, \ldots, x_{k_d-1}, x_{k_d+1}, \ldots, x_n)
\end{eqnarray}
be a subvector of $x^n$ of length $n-d$, obtained by dropping the bits of $x^n$ in positions $k_1, \ldots, k_d$; if $d=n$,
then $(k_1, \ldots, k_n) = (1, \ldots, n)$, and $\widetilde{x}^{(k_1, \ldots, k_d)}$ is an empty vector.
By the chain rule for the Shannon entropy,

\vspace*{-0.3cm}
\begin{subequations}   \label{eq: 220415a1}
\begin{align}  \nonumber
& \hspace{-0.5cm} \Ent{X^n} - \bigEnt{\widetilde{X}^{{(k_1, \ldots, k_d)}}} \\[0.1cm]
\label{eq: 22032022a6}
&= \bigEntCond{X_{k_1}, \ldots, X_{k_d} \,}{\, \widetilde{X}^{{(k_1, \ldots, k_d)}}} \\
\label{eq: 22032022a7}
&= -\sum_{x^n \in \{-1, 1\}^n} \pmfOf{X^n}(x^n) \, \log \Bigl( \CondpmfOf{X_{k_1}, \ldots,
X_{k_d} \, }{\, \widetilde{X}^{{(k_1, \ldots, k_d)}}}\bigl(x_{k_1}, \ldots, x_{k_d}
\, | \, \widetilde{x}^{(k_1, \ldots, k_d)}\bigr) \Bigr) \\
\label{eq: 22032022a8}
&= -\frac1{\card{\set{A}}} \sum_{x^n \in \set{A}} \log \Bigl( \CondpmfOf{X_{k_1}, \ldots,
X_{k_d} \, }{ \, \widetilde{X}^{{(k_1, \ldots, k_d)}}}\bigl(x_{k_1}, \ldots, x_{k_d}
\, | \, \widetilde{x}^{(k_1, \ldots, k_d)}\bigr) \Bigr),
\end{align}
\end{subequations}
where equality \eqref{eq: 22032022a8} holds by \eqref{eq: 22032022a1}.

For $x^n \in \{-1,1\}^n$, $d \in \OneTo{n}$, and integers $k_1, \ldots, k_d$ such that $1 \leq k_1 < \ldots < k_d \leq n$, let
\begin{eqnarray} \label{eq: 23032022a1}
\overline{x}^{(k_1, \ldots, k_d)} \eqdef (x_1, \ldots, x_{k_1-1}, -x_{k_1}, x_{k_1+1}, \ldots, x_{k_d-1}, -x_{k_d}, x_{k_d+1}, \ldots, x_n),
\end{eqnarray}
where the bits of $x^n$ in position $k_1, \ldots, k_d$ are flipped (in contrast to $\widetilde{x}^{(k_1, \ldots, k_d)}$ where the
bits of $x^n$ in these positions are dropped), so $\overline{x}^{(k_1, \ldots, k_d)} \in \{-1,1\}^n$ and $\bigdH{x^n}{\overline{x}^{(k_1, \ldots, k_d)}} = d$.
Likewise, if $x^n, y^n \in \{-1,1\}^n$ satisfy $\dH{x^n}{y^n} = d$, then there exist integers $k_1, \ldots, k_d$ such that
$1 \leq k_1 < \ldots < k_d \leq n$ where $y^n = \overline{x}^{(k_1, \ldots, k_d)}$ (i.e., the integers $k_1, \ldots, k_d$ are the positions
(in increasing order) where the vectors $x^n$ and $y^n$ differ).

Let us characterize the set $\set{A}$ by its cardinality, and the following two natural numbers:
\begin{enumerate}[a)]
\item If $x^n \in \set{A}$ and $\overline{x}^{(k_1, \ldots, k_d)} \in \set{A}$ for any $(k_1, \ldots, k_d)$ such that
$1 \leq k_1 < \ldots < k_d \leq n$, then there are at least $m_d \eqdef m_d(\set{A})$ vectors $y \in \set{A}$ whose
subvectors $\widetilde{y}^{(k_1, \ldots, k_d)}$ coincide with $\widetilde{x}^{(k_1, \ldots, k_d)}$, i.e., the integer
$m_d \geq 2$ satisfies
\begin{eqnarray}  \label{eq: 23032022a2}
\hspace*{-0.2cm} m_d \leq \min_{\substack{x^n \in \set{A}, \\[0.05cm] 1 \leq k_1 < \ldots < k_d \leq n}} \biggcard{\Bigl\{ y^n \in \set{A}:
\; \widetilde{y}^{(k_1, \ldots, k_d)} = \widetilde{x}^{(k_1, \ldots, k_d)}, \quad \overline{x}^{(k_1, \ldots, k_d)} \in \set{A} \Bigr\}}.
\end{eqnarray}
By definition, the integer $m_d$ always exists, and
\begin{eqnarray}  \label{eq: 23032022a3}
2 \leq m_d \leq \min\{2^d, \, \card{A} \}.
\end{eqnarray}
If no information is available about the value of $m_d$, then it can be taken by default to be equal to~2 (since by assumption
the two vectors $x^n \in \set{A}$ and $y^n \eqdef \overline{x}^{(k_1, \ldots, k_d)} \in \set{A}$ satisfy the equality
$\widetilde{y}^{(k_1, \ldots, k_d)} = \widetilde{x}^{(k_1, \ldots, k_d)}$).

\item
If $x^n \in \set{A}$ and $\overline{x}^{(k_1, \ldots, k_d)} \not\in \set{A}$ for any $(k_1, \ldots, k_d)$ such that
$1 \leq k_1 < \ldots < k_d \leq n$, then there are at least $\ell_d \eqdef \ell_d(\set{A})$ vectors $y^n \in \set{A}$ whose
subvectors $\widetilde{y}^{(k_1, \ldots, k_d)}$ coincide with $\widetilde{x}^{(k_1, \ldots, k_d)}$, i.e., the integer
$\ell_d \geq 1$ satisfies
\begin{eqnarray}  \label{eq: 23032022a4}
\hspace*{-0.2cm} \ell_d \leq \min_{\substack{x^n \in \set{A}, \\[0.05cm] 1 \leq k_1 < \ldots < k_d \leq n}} \biggcard{\Bigl\{ y^n \in \set{A}:
\; \widetilde{y}^{(k_1, \ldots, k_d)} = \widetilde{x}^{(k_1, \ldots, k_d)},
\quad \overline{x}^{(k_1, \ldots, k_d)} \not\in \set{A} \Bigr\}}.
\end{eqnarray}
By definition, the integer $\ell_d$ always exists, and
\begin{eqnarray}  \label{eq: 23032022a5}
1 \leq \ell_d \leq \min\{2^d-1, \, \card{A}-1 \}.
\end{eqnarray}
Likewise, if no information is available about the value of $\ell_d$, then it can be taken by default to be equal to~1 (since
$x^n \in \set{A}$ satisfies the requirement about its subvector $\widetilde{x}^{(k_1, \ldots, k_d)}$ in \eqref{eq: 23032022a4}).
\end{enumerate}
In general, it would be preferable to have the largest possible values of $m_d$ and $\ell_d$ (i.e., those satisfying
inequalities \eqref{eq: 23032022a2} and \eqref{eq: 23032022a4} with equalities, for obtaining a better upper bound
on the size of $G$ (this point will be clarified in the sequel). If $d=1$, then $m_d=2$ and $\ell_d=1$ are the best
possible constants (this holds by the definitions in \eqref{eq: 23032022a2} and \eqref{eq: 23032022a4}, which can be
also verified by the coincidence of the upper and lower bounds in \eqref{eq: 23032022a3} for $d=1$, as well as
those in \eqref{eq: 23032022a5}).

If $x^n \in \set{A}$, then we distinguish between the following two cases:
\begin{itemize}
\item If $\overline{x}^{(k_1, \ldots, k_d)} \in \set{A}$, then
\begin{eqnarray}  \label{eq: 23032022a6}
\CondpmfOf{X_{k_1}, \ldots, X_{k_d} \,}{\, \widetilde{X}^{(k_1, \ldots, k_d)}}(x_{k_1}, \ldots, x_{k_d} \, |
\, \widetilde{x}^{(k_1, \ldots, k_d)}) \leq \frac1{m_d},
\end{eqnarray}
which holds by the way that $m_d$ is defined in \eqref{eq: 23032022a2}, and since $X^n$ is randomly selected to
be equiprobable in the set $\set{A}$.
\item
If $\overline{x}^{(k_1, \ldots, k_d)} \not\in \set{A}$, then
\begin{eqnarray}  \label{eq: 23032022a7}
\CondpmfOf{X_{k_1}, \ldots, X_{k_d} \,}{\, \widetilde{X}^{(k_1, \ldots, k_d)}}(x_{k_1}, \ldots, x_{k_d} \, |
\, \widetilde{x}^{(k_1, \ldots, k_d)}) \leq \frac1{\ell_d},
\end{eqnarray}
which holds by the way that $\ell_d$ is defined in \eqref{eq: 23032022a4}, and since $X^n$ is equiprobable on $\set{A}$.
\end{itemize}

For $d \in \OneTo{\tau}$ and $1 \leq k_1 < \ldots < k_d \leq n$, it follows from
\eqref{eq: 220415a1}, \eqref{eq: 23032022a6} and \eqref{eq: 23032022a7} that

\vspace*{-0.3cm}
\begin{align}
\Ent{X^n} - \bigEnt{\widetilde{X}^{{(k_1, \ldots, k_d)}}}
& \geq \frac{\log m_d}{\card{A}} \; \sum_{x^n} \I{x^n \in \set{A}, \; \overline{x}^{(k_1, \ldots, k_d)} \in \set{A} }  \nonumber \\[0.05cm]
\label{eq: 23032022a8}
&\hspace*{0.4cm} + \frac{\log \ell_d}{\card{A}} \; \sum_{x^n} \I{x^n \in \set{A}, \; \overline{x}^{(k_1, \ldots, k_d)} \not\in \set{A} },
\end{align}
which, by summing on both sides of inequality \eqref{eq: 23032022a8} over all the integers $k_1, \ldots, k_d$
such that $1 \leq k_1 < \ldots < k_d \leq n$, yields

\vspace*{-0.3cm}
\begin{align}
& \hspace{-0.5cm} \sum_{\substack{(k_1, \ldots, k_d): \\[0.05cm] 1 \leq k_1 < \ldots < k_d \leq n}} \Bigl( \Ent{X^n} - \bigEnt{\widetilde{X}^{{(k_1, \ldots, k_d)}}} \Bigr) \nonumber \\
& \geq \frac{\log m_d}{\card{A}} \; \sum_{\substack{(k_1, \ldots, k_d): \\[0.05cm] 1 \leq k_1 < \ldots < k_d \leq n}}
\sum_{x^n} \I{x^n \in \set{A}, \; \overline{x}^{(k_1, \ldots, k_d)} \in \set{A} }  \nonumber \\[0.05cm]
\label{eq: 23032022a9}
& \hspace*{0.4cm} + \frac{\log \ell_d}{\card{A}} \; \sum_{\substack{(k_1, \ldots, k_d): \\[0.05cm] 1 \leq k_1 < \ldots < k_d \leq n}} \sum_{x^n}
\I{x^n \in \set{A}, \; \overline{x}^{(k_1, \ldots, k_d)} \not\in \set{A} }.
\end{align}
Equality holds in \eqref{eq: 23032022a9} if the minima on the RHS of \eqref{eq: 23032022a2} and \eqref{eq: 23032022a4}
are attained by any element in these sets, and if \eqref{eq: 23032022a2} and
\eqref{eq: 23032022a4} are satisfied with equalities (i.e., $m_d$ and $\ell_d$ are the maximal integers to satisfy inequalities
\eqref{eq: 23032022a2} and \eqref{eq: 23032022a4} for the given set $\set{A}$). Hence, this equality holds in particular for $d=1$,
with the constants $m_d = 2$ and $\ell_d = 1$.

The double sum in the first term on the RHS of \eqref{eq: 23032022a9} is equal to
\begin{eqnarray}  \label{eq: 23032022a10}
\sum_{\substack{(k_1, \ldots, k_d): \\[0.05cm] 1 \leq k_1 < \ldots < k_d \leq n}}
\sum_{x^n} \I{x^n \in \set{A}, \; \overline{x}^{(k_1, \ldots, k_d)} \in \set{A} } = 2 \, \bigcard{\Ed{d}{G}},
\end{eqnarray}
since every pair of adjacent vertices in $\set{G}$ that refer to vectors in $\set{A}$ whose Hamming distance is equal to $d$
is of the form $x^n \in \set{A}$ and $\overline{x}^{(k_1, \ldots, k_d)} \in \set{A}$, and vice versa, and
every edge $\{x^n, \overline{x}^{(k_1, \ldots, k_d)} \} \in \Ed{d}{G}$ is counted twice in the double summation
on the LHS of \eqref{eq: 23032022a10}. For calculating the double sum in the second term
on the RHS of \eqref{eq: 23032022a9}, we first calculate the sum of these two double summations:
\begin{subequations}
\begin{eqnarray}
&& \hspace*{-0.5cm} \sum_{\substack{(k_1, \ldots, k_d): \\[0.05cm] 1 \leq k_1 < \ldots < k_d \leq n}}
\sum_{x^n} \I{x^n \in \set{A}, \; \overline{x}^{(k_1, \ldots, k_d)} \in \set{A} } +
\sum_{\substack{(k_1, \ldots, k_d): \\[0.05cm] 1 \leq k_1 < \ldots < k_d \leq n}} \sum_{x^n}
\I{x^n \in \set{A}, \; \overline{x}^{(k_1, \ldots, k_d)} \not\in \set{A} }  \nonumber \\
\label{eq: 23032022a11}
&& = \sum_{\substack{(k_1, \ldots, k_d): \\[0.05cm] 1 \leq k_1 < \ldots < k_d \leq n}}
\sum_{x^n} \Bigl\{ \I{x^n \in \set{A}, \; \overline{x}^{(k_1, \ldots, k_d)} \in \set{A} } +
\I{x^n \in \set{A}, \; \overline{x}^{(k_1, \ldots, k_d)} \not\in \set{A} } \Bigr\} \\
\label{eq: 23032022a12}
&& = \sum_{\substack{(k_1, \ldots, k_d): \\[0.05cm] 1 \leq k_1 < \ldots < k_d \leq n}}
\sum_{x^n} \I{x^n \in \set{A}} \\
\label{eq: 23032022a13}
&& = \sum_{\substack{(k_1, \ldots, k_d): \\[0.05cm] 1 \leq k_1 < \ldots < k_d \leq n}} \card{\set{A}} \\
\label{eq: 23032022a14}
&& = \binom{n}{d} \, \card{\set{A}},
\end{eqnarray}
\end{subequations}
so, subtracting \eqref{eq: 23032022a10} from \eqref{eq: 23032022a14} gives that
\begin{eqnarray}  \label{eq: 23032022a15}
\sum_{\substack{(k_1, \ldots, k_d): \\[0.05cm] 1 \leq k_1 < \ldots < k_d \leq n}} \sum_{x^n}
\I{x^n \in \set{A}, \; \overline{x}^{(k_1, \ldots, k_d)} \not\in \set{A} } =
\binom{n}{d} \, \card{\set{A}} - 2 \, \bigcard{\Ed{d}{G}}.
\end{eqnarray}
Substituting \eqref{eq: 23032022a10} and \eqref{eq: 23032022a15} into the RHS of \eqref{eq: 23032022a9}
gives that, for all $d \in \OneTo{\tau}$,
\begin{subequations}   \label{eq: 220417a1}
\begin{eqnarray}  \nonumber
&& \hspace*{-1cm} \sum_{\substack{(k_1, \ldots, k_d): \\[0.05cm] 1 \leq k_1 < \ldots < k_d \leq n}}
\Bigl( \Ent{X^n} - \bigEnt{\widetilde{X}^{{(k_1, \ldots, k_d)}}} \Bigr) \\[0.1cm]
\label{eq: 23032022a16}
&& \geq \frac{2 \card{\Ed{d}{G}} \, \log m_d}{\card{A}} + \frac{\log \ell_d}{\card{\set{A}}}
\, \biggl[ \binom{n}{d} \, \card{\set{A}} - 2 \, \bigcard{\Ed{d}{G}} \biggr] \\[0.1cm]
\label{eq: 23032022a17}
&& = \binom{n}{d} \, \log \ell_d + \frac{2 \card{\Ed{d}{G}}}{\card{\set{A}}} \, \log \frac{m_d}{\ell_d},
\end{eqnarray}
\end{subequations}
with the same necessary and sufficient condition for equality in \eqref{eq: 23032022a16} as in \eqref{eq: 23032022a9}.
(Recall that it is in particular an equality for $d=1$, where in this case $m_1 = 2$ and $\ell_1 = 1$.)

By the generalized Han's inequality in \eqref{eq1: Han's inequality},

\vspace*{-0.3cm}
\begin{subequations}    \label{eq: 220417a2}
\begin{align}  \label{eq: 23032022a18}
\hspace*{-0.5cm} \sum_{\substack{(k_1, \ldots, k_d): \\[0.05cm] 1 \leq k_1 < \ldots < k_d \leq n}}
\Bigl( \Ent{X^n} - \bigEnt{\widetilde{X}^{{(k_1, \ldots, k_d)}}} \Bigr) & \leq \binom{n-1}{d-1} \, \Ent{X^n} \\[-0.2cm]
&= \binom{n-1}{d-1} \, \log \card{\set{A}},  \label{eq: 23032022a19}
\end{align}
\end{subequations}
where equality \eqref{eq: 23032022a19} holds by \eqref{eq: 22032022a2}. Combining \eqref{eq: 220417a1} and
\eqref{eq: 220417a2} yields
\begin{eqnarray}  \label{eq: 23032022a20}
\binom{n-1}{d-1} \, \log \card{\set{A}} \geq \binom{n}{d} \, \log \ell_d + \frac{2 \card{\Ed{d}{G}}}{\card{\set{A}}} \, \log \frac{m_d}{\ell_d},
\end{eqnarray}
and, by the identity $\binom{n}{d} = \frac{n}{d} \, \binom{n-1}{d-1}$, we get
\begin{eqnarray}  \label{eq: 23032022a21}
\card{\Ed{d}{G}} \leq \frac{\binom{n-1}{d-1} \, \card{\set{A}} \,
\Bigl( \log \card{\set{A}} - \frac{n}{d} \, \log \ell_d \Bigr)}{2 \log \frac{m_d}{\ell_d}}.
\end{eqnarray}
This upper bound is specialized, for $d=1$, to Theorem~4.2 of \cite{Boucheron_Lugosi_Massart_book}
(where, by definition, $m_1 = 2$ and $\ell_1=1$). This gives that the
number of edges in $G$, connecting pairs of vertices which refer to binary
vectors in $\set{A}$ whose Hamming distance is~1 from each other, satisfies
\begin{eqnarray}  \label{eq: 23032022a22}
\card{\Ed{1}{G}} \leq \tfrac12 \, \card{\set{A}} \, \log_2 \card{\set{A}}.
\end{eqnarray}
It is possible to select, by default, the values of the integers $m_d$ and $\ell_d$ to be equal
to~2 and 1, respectively, independently of the value of $d \in \OneTo{\tau}$. It therefore follows
that the upper bound in \eqref{eq: 23032022a21} can be loosened to
\begin{eqnarray}  \label{eq: 23032022a23}
\card{\Ed{d}{G}} \leq \tfrac12 \, \binom{n-1}{d-1} \, \card{\set{A}} \, \log_2 \card{\set{A}}.
\end{eqnarray}
This shows that the bound in \eqref{eq: 23032022a23} generalizes the result in Theorem~4.2 of \cite{Boucheron_Lugosi_Massart_book},
based only on the knowledge of the cardinality of $\set{A}$. Furthermore, the bound \eqref{eq: 23032022a23} can be tightened by the refined
bound \eqref{eq: 23032022a21} if the characterization of the set $\set{A}$ allows one to assert values for $m_d$ and $\ell_d$
that are larger than the trivial values of~2 and~1, respectively.

In light of \eqref{eq: 22032022a3} and \eqref{eq: 23032022a23}, the number of edges in the graph $G$ satisfies
\begin{eqnarray}  \label{eq: 23032022a24}
\card{\E{G}} \leq \tfrac12 \sum_{d=1}^{\tau} \binom{n-1}{d-1} \, \card{\set{A}} \, \log_2 \card{\set{A}},
\end{eqnarray}
and if $\tau \leq \frac{n+1}{2}$, then it follows that
\begin{eqnarray}  \label{eq: 23032022a25}
\card{\E{G}} \leq \tfrac12 \exp\biggl( (n-1) \, \biggEntBin{\frac{\tau-1}{n-1}} \biggr) \, \card{\set{A}} \, \log_2 \card{\set{A}}.
\end{eqnarray}
Indeed, the transition from \eqref{eq: 23032022a24} to \eqref{eq: 23032022a25} holds by the inequality
\begin{eqnarray}  \label{eq: 23032022a26}
&& \sum_{k=0}^{n \theta} \binom{n}{k} \leq \exp \bigl( n \EntBin{\theta} \bigr), \quad \theta \in \bigl[0, \tfrac12 \bigr],
\end{eqnarray}
where the latter bound is asymptotically tight in the exponent of $n$ (for sufficiently large values of $n$).

\subsection{Comparison of Bounds}
\label{subsection: comparison of bounds}
We next consider the tightness of the refined bound \eqref{eq: 23032022a21} and the loosened bound
\eqref{eq: 23032022a23}.
Since $\set{A}$ is a subset of the $n$-dimensional cube $\{-1,1\}^n$, every point in $\set{A}$ has at most $\binom{n}{d}$
neighbors in $\set{A}$ with Hamming distance $d$, so
\begin{eqnarray}  \label{eq: 24032022a4}
\card{\Ed{d}{G}} \leq \tfrac12 \, \binom{n}{d} \, \card{\set{A}}.
\end{eqnarray}
Comparing the bound on the RHS of \eqref{eq: 23032022a21} with the trivial bound in \eqref{eq: 24032022a4}
shows that the former bound is useful if and only if
\begin{eqnarray} \label{eq: 24032022a5}
\frac{\log \card{\set{A}} - \frac{n}{d} \, \log \ell_d}{\log \frac{m_d}{\ell_d}} \leq \frac{n}{d},
\end{eqnarray}
which is obtained by relying on the identity $\binom{n}{d} = \frac{n}{d} \, \binom{n-1}{d-1}$. Rearranging terms
in \eqref{eq: 24032022a5} gives the necessary and sufficient condition
\begin{eqnarray}  \label{eq: 24032022a6}
\card{\set{A}} \leq (m_d)^{\frac{n}{d}},
\end{eqnarray}
which is independent of the value of $\ell_d$.
Since, by definition, $m_d \geq 2$, inequality \eqref{eq: 24032022a6} is automatically satisfied if the stronger condition
\begin{eqnarray}  \label{eq: 24032022a7}
\card{\set{A}} \leq 2^{\frac{n}{d}}
\end{eqnarray}
is imposed.
The latter also forms a necessary and sufficient condition for the usefulness of the looser bound on the
RHS of \eqref{eq: 23032022a23} in comparison to \eqref{eq: 24032022a4}.
\begin{example} \label{example: tightness of bounds}
Suppose that the set $\set{A} \subseteq \{-1, 1\}^n$ is characterized by the property that for all $d \in \OneTo{\tau}$,
with a fixed integer $\tau \in \OneTo{n}$, if $x^n \in \set{A}$ and $\overline{x}^{(k_1, \ldots, k_d)} \in \set{A}$ then all
vectors $y^n \in \{-1,1\}^n$ which coincide with $x^n$ and $\overline{x}^{(k_1, \ldots, k_d)}$ in their $(n-d)$ agreed
positions are also included in the set $\set{A}$. Then, for all $d \in \OneTo{\tau}$, we get by definition that $m_d = 2^d$, which
yields $\tau \leq \lfloor \log_2 \card{\set{A}} \rfloor$. Setting $m_d = 2^d$ and the default value $\ell_d = 1$ on the RHS of
\eqref{eq: 23032022a21} gives

\vspace*{-0.3cm}
\begin{subequations}
\begin{align}
\label{eq: 24032022a8}
\card{\Ed{d}{G}} &\leq \frac{\binom{n-1}{d-1} \, \card{\set{A}} \,
\Bigl( \log \card{\set{A}} - \frac{n}{d} \, \log \ell_d \Bigr)}{2 \log \frac{m_d}{\ell_d}} \\
\label{eq: 24032022a9}
&= \frac{\binom{n-1}{d-1} \, \card{\set{A}} \, \log \card{\set{A}}}{2 \log (2^d)} \\[0.1cm]
\label{eq: 24032022a10}
&= \frac1{2d} \, \binom{n-1}{d-1} \, \card{\set{A}} \, \log_2 \card{\set{A}} \\[0.1cm]
%\label{eq: 24032022a11}
%&=& \frac{n}{2d} \, \binom{n-1}{d-1} \, \card{\set{A}} \cdot \frac{\log_2 \card{\set{A}}}{n} \\[0.1cm]
\label{eq: 24032022a12}
&= \frac12 \, \binom{n}{d} \, \card{\set{A}} \cdot \frac{\log_2 \card{\set{A}}}{n}.
\end{align}
\end{subequations}
Unless $\set{A} = \{-1, 1\}^n$, the upper bound on the RHS of \eqref{eq: 24032022a12}
is strictly smaller than the trivial upper bound on the RHS of \eqref{eq: 24032022a4}.
This improvement is consistent with the satisfiability of the (necessary and sufficient) condition
in \eqref{eq: 24032022a7}, which is strictly satisfied since
\begin{eqnarray}  \label{eq: 24032022a13}
\card{\set{A}} < 2^n = (2^d)^{\frac{n}{d}} = (m_d)^{\frac{n}{d}}.
\end{eqnarray}
On the other hand, the looser upper bound on the RHS of \eqref{eq: 23032022a23} gives
\begin{eqnarray}
\label{eq: 24032022a14}
\card{\Ed{d}{G}} \leq \frac12 \, \binom{n}{d} \, \card{\set{A}} \cdot \frac{d \, \log_2 \card{\set{A}}}{n},
\end{eqnarray}
which is $d$ times larger than the refined bound on the RHS of \eqref{eq: 24032022a12} (since it is based on the
exact value of $m_d$ for the set $\set{A}$, rather than taking the default value of~2), and it is worse than the
trivial bound if and only if $\card{\set{A}} > 2^{\frac{n}{d}}$.
The latter finding is consistent with \eqref{eq: 24032022a7}.

This exemplifies the utility of the refined upper bound on the RHS of \eqref{eq: 23032022a21}
in comparison to the bound on the RHS of \eqref{eq: 23032022a23}, where the latter generalizes
\cite[Theorem~4.2]{Boucheron_Lugosi_Massart_book} from the case where $d=1$ to all $d \in \OneTo{n}$. As
it is explained above, this refinement is irrelevant in the special case where $d=1$, though it proves to
be useful in general for $d \in \FromTo{2}{n}$ (as it is exemplified here).
\end{example}

The following theorem introduces the results of our analysis (so far) in the present section.
\begin{theorem}  \label{theorem: graphs}
Let $\set{A} \subseteq \{-1, 1\}^n$, with $n \in \naturals$, and let $\tau \in \OneTo{n}$.
Let $G = (\V{G}, \E{G})$ be an un-directed, simple graph with vertex set $\V{G} = \set{A}$, and edges connecting pairs of
vertices in $G$ which are represented by vectors in $\set{A}$ whose Hamming distance is less than or equal to $\tau$.
For $d \in \OneTo{\tau}$, let $\Ed{d}{G}$ be the set of edges in $G$ which connect all pairs of vertices
that are represented by vectors in $\set{A}$ whose Hamming distance is equal to $d$
(i.e., $\card{\E{G}} = \sum_{d=1}^{\tau} \card{\Ed{d}{G}}$).
\begin{enumerate}[a)]
\item For $d \in \OneTo{\tau}$, let the integers $m_d \in \FromTo{2}{\min\{2^d, \, \card{A} \}}$ and
$\ell_d  \in \OneTo{\min\{2^d-1, \, \card{A}-1 \}}$ (be, preferably, the maximal possible values to)
satisfy the requirements in \eqref{eq: 23032022a2} and \eqref{eq: 23032022a4}, respectively. Then,
\begin{eqnarray}  \label{eq: 24032022a1}
\card{\Ed{d}{G}} \leq \frac{\binom{n-1}{d-1} \, \card{\set{A}} \,
\Bigl( \log \card{\set{A}} - \frac{n}{d} \, \log \ell_d \Bigr)}{2 \log \frac{m_d}{\ell_d}}.
\end{eqnarray}
\item A loosened bound, which only depends on the cardinality of the set $\set{A}$, is obtained
by setting the default values $m_d=2$ and $\ell_d=1$. It is then given by
\begin{eqnarray}  \label{eq: 24032022a2}
\card{\Ed{d}{G}} \leq \tfrac12 \, \binom{n-1}{d-1} \, \card{\set{A}} \, \log_2 \card{\set{A}},
\quad d \in \OneTo{\tau},
\end{eqnarray}
and, if $\tau \leq \frac{n+1}{2}$, then the (overall) number of edges in $G$ satisfies
\begin{eqnarray}  \label{eq: 24032022a3}
\card{\E{G}} \leq \tfrac12 \exp\biggl( (n-1) \, \biggEntBin{\frac{\tau-1}{n-1}} \biggr) \, \card{\set{A}} \, \log_2 \card{\set{A}}.
\end{eqnarray}
\item The refined upper bound on the RHS of \eqref{eq: 24032022a1} and the loosened upper bound on the RHS of
\eqref{eq: 24032022a2} improve the trivial bound $\tfrac12 \binom{n}{d} \card{\set{A}}$,
if and only if $\card{\set{A}} < (m_d)^{\frac{n}{d}}$ or $\card{\set{A}} < 2^{\frac{n}{d}}$, respectively (see
Example~\ref{example: tightness of bounds}).
\end{enumerate}
\end{theorem}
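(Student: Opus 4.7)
The plan is to prove Theorem~\ref{theorem: graphs} by encoding the graph $G$ through a uniform random vector on $\set{A}$ and applying the generalized Han's inequality \eqref{eq1: Han's inequality} with carefully bounded conditional entropies. Let $X^n$ be drawn uniformly from $\set{A}$, so that $\Ent{X^n} = \log\card{\set{A}}$ by \eqref{eq: 22032022a1}. For each $d \in \OneTo{\tau}$ and each ordered index tuple $1 \leq k_1 < \ldots < k_d \leq n$, the chain rule gives
\begin{equation*}
\Ent{X^n} - \bigEnt{\widetilde{X}^{(k_1,\ldots,k_d)}} = \bigEntCond{X_{k_1},\ldots,X_{k_d}}{\widetilde{X}^{(k_1,\ldots,k_d)}},
\end{equation*}
which equals $-\frac{1}{\card{\set{A}}}\sum_{x^n \in \set{A}} \log \CondpmfOf{X_{k_1},\ldots,X_{k_d}}{\widetilde{X}^{(k_1,\ldots,k_d)}}(\cdot|\cdot)$.

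The next step is to split this sum according to whether $\overline{x}^{(k_1,\ldots,k_d)} \in \set{A}$ or not: in the first case, the conditional probability is at most $1/m_d$ by the defining property \eqref{eq: 23032022a2}; in the second, it is at most $1/\ell_d$ by \eqref{eq: 23032022a4}. Summing this per-tuple inequality over all $(k_1,\ldots,k_d)$ gives a lower bound involving two counting sums, and here comes the main combinatorial observation: the first double sum counts each edge in $\Ed{d}{G}$ exactly twice, yielding $2\card{\Ed{d}{G}}$, while adding the two double sums together collapses (via $\I{\cdot \in \set{A}, \cdot \in \set{A}} + \I{\cdot \in \set{A}, \cdot \notin \set{A}} = \I{\cdot \in \set{A}}$) to $\binom{n}{d}\card{\set{A}}$. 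Subtracting the first from the total gives $\binom{n}{d}\card{\set{A}} - 2\card{\Ed{d}{G}}$ for the second double sum. This is the step where bookkeeping is most prone to error and will be the principal obstacle; I would spell out the two-way correspondence between edges of Hamming distance $d$ and ordered pairs $(x^n, \overline{x}^{(k_1,\ldots,k_d)})$ with both vectors in $\set{A}$ explicitly.

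After this, applying the generalized Han's inequality \eqref{eq1: Han's inequality} with $k=d$ to upper bound $\sum_{(k_1,\ldots,k_d)}(\Ent{X^n} - \bigEnt{\widetilde{X}^{(k_1,\ldots,k_d)}})$ by $\binom{n-1}{d-1}\log\card{\set{A}}$ yields a linear inequality in $\card{\Ed{d}{G}}$, which I solve (using the identity $\binom{n}{d} = \frac{n}{d}\binom{n-1}{d-1}$) to obtain \eqref{eq: 24032022a1}, proving part~(a).

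For part~(b), I would substitute the safe default values $m_d = 2$ and $\ell_d = 1$ into \eqref{eq: 24032022a1}, noting that $\log(m_d/\ell_d) = 1$ and the $\frac{n}{d}\log\ell_d$ term vanishes, to obtain \eqref{eq: 24032022a2}; summing over $d \in \OneTo{\tau}$ and invoking the standard tail bound $\sum_{k=0}^{\lfloor n\theta\rfloor}\binom{n}{k} \leq \exp(n\EntBin{\theta})$ for $\theta \leq 1/2$ (applied with $n$ replaced by $n-1$ and $\theta = (\tau-1)/(n-1)$, valid under $\tau \leq (n+1)/2$) yields \eqref{eq: 24032022a3}. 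For part~(c), I would compare \eqref{eq: 24032022a1} to the trivial per-vertex count $\card{\Ed{d}{G}} \leq \tfrac12\binom{n}{d}\card{\set{A}}$; rearranging the ratio using $\binom{n}{d} = \frac{n}{d}\binom{n-1}{d-1}$ shows that the refined bound is tighter iff $\log\card{\set{A}} - \frac{n}{d}\log\ell_d \leq \frac{n}{d}\log(m_d/\ell_d)$, which simplifies to $\card{\set{A}} \leq (m_d)^{n/d}$ (independent of $\ell_d$), and specializes to $\card{\set{A}} \leq 2^{n/d}$ for the loosened bound. Example~\ref{example: tightness of bounds} is then a direct verification that these conditions can be strict.
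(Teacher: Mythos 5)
Your proposal is correct and follows essentially the same route as the paper: a uniform random vector on $\set{A}$, the chain-rule identity for $\Ent{X^n}-\bigEnt{\widetilde{X}^{(k_1,\ldots,k_d)}}$, the case split on whether $\overline{x}^{(k_1,\ldots,k_d)}\in\set{A}$ with the bounds $1/m_d$ and $1/\ell_d$, the double-counting identities giving $2\,\card{\Ed{d}{G}}$ and $\binom{n}{d}\card{\set{A}}-2\,\card{\Ed{d}{G}}$, and the generalized Han's inequality to close the argument, followed by the same specialization and binomial tail bound for parts (b) and (c). No gaps.
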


\subsection{Influence of Fixed-Size Subsets of Bits}
\label{subsection: influence}

The result in Theorem~4.2 of \cite{Boucheron_Lugosi_Massart_book}, which is generalized and refined in
Theorem~\ref{theorem: graphs} here, is turned to study the total influence of the $n$ variables
of an equiprobable random vector $X^n \in \{-1,1\}^n$ on a subset $\set{A} \subset \{-1,1\}^n$.
To this end, let $\overline{X}^{(i)}$ denote the vector where the bit at the $i$-th position
of $X^n$ is flipped, so $\overline{X}^{(i)} \eqdef (X_1, \ldots, X_{i-1}, -X_i, X_{i+1}, \ldots, X_n)$
for all $i \in \OneTo{n}$. Then, the influence of the $i$-th variable is defined as
\begin{eqnarray} \label{eq: 25032022a1}
I_{i}(\set{A}) \eqdef \BigPrv{ \I{X^n \in \set{A}} \neq \bigI{\overline{X}^{(i)} \in \set{A}} },  \quad i \in \OneTo{n},
\end{eqnarray}
and their total influence is defined to be the sum
\begin{eqnarray}  \label{eq: 25032022a2}
I(\set{A}) \eqdef \sum_{i=1}^n I_{i}(\set{A}).
\end{eqnarray}
As it is shown in Chapters~9 and~10 of \cite{Boucheron_Lugosi_Massart_book}, influences of subsets
of the binary hypercube have far reaching consequences in the study of threshold phenomena,
and many other areas. As a corollary of \eqref{eq: 23032022a22}, it is obtained in
Theorem~4.3 of \cite{Boucheron_Lugosi_Massart_book} that, for every subset $\set{A} \subset \{-1,1\}^n$,
\begin{eqnarray}  \label{eq: 25032022a3}
I(\set{A}) \geq 2 \Prs{\set{A}} \, \log_2 \frac1{\Prs{\set{A}}},
\end{eqnarray}
where $\Prs{\set{A}} \eqdef \prob[X^n \in \set{A}] = \frac{\card{\set{A}}}{2^n}$ by the equiprobable
distribution of $X^n$ over $\{-1,1\}^n$.

In light of Theorem~\ref{theorem: graphs}, the same approach which is used in Section~4.4 of \cite{Boucheron_Lugosi_Massart_book}
for the transition from \eqref{eq: 23032022a22} to \eqref{eq: 25032022a3} can be also used to obtain, as a corollary, a lower
bound on the average total influence over all subsets of $d$ variables. To this end, let $k_1, \ldots, k_d$ be integers
such that $1 \leq k_1 < \ldots < k_d \leq n$, and the influence of the variables in positions $k_1, \ldots, k_d$ be given by
\begin{eqnarray} \label{eq: 25032022a4}
I_{(k_1, \ldots, k_d)}(\set{A}) \eqdef \BigPrv{ \I{X^n \in \set{A}} \neq \BigI{\overline{X}^{(k_1, \ldots, k_d)} \in \set{A}} }.
\end{eqnarray}
Then, let the average influence of subsets of $d$ variables be defined as
\begin{eqnarray} \label{eq: 25032022a5}
I^{(n,d)}(\set{A}) \eqdef \frac1{\binom{n}{d}} \sum_{\substack{(k_1, \ldots, k_d): \\[0.05cm] 1 \leq k_1 < \ldots < k_d \leq n}}
I_{(k_1, \ldots, k_d)}(\set{A}).
\end{eqnarray}
Hence, by \eqref{eq: 25032022a2} and \eqref{eq: 25032022a5}, $ I^{(n,1)}(\set{A}) = \frac1n \, I(\set{A})$ for every subset $\set{A} \subset \{-1, 1\}^n$.
Let
\begin{eqnarray}  \label{eq: 25032022a6}
\set{B}^{(n,d)}(\set{A}) \eqdef \bigl\{ (x^n, y^n): \, x^n \in \set{A}, \quad y^n \in \{-1, 1\}^n \setminus \set{A}, \quad \dH{x^n}{y^n} = d \bigr\},
\end{eqnarray}
be the set of ordered pairs of sequences $(x^n , y^n)$, where $x^n ,y^n \in \{-1, 1\}^n$ are of Hamming distance $d$ from each other,
with $x^n \in \set{A}$ and $y^n \not\in \set{A}$.
By the equiprobable distribution of $X^n$ on $\{-1,1\}^n$, we get

\vspace*{-0.3cm}
\begin{subequations}  \label{eq: 25032022a7}
\begin{align}
I^{(n,d)}(\set{A}) &= \frac1{\binom{n}{d}} \sum_{\substack{(k_1, \ldots, k_d): \\[0.05cm] 1 \leq k_1 < \ldots < k_d \leq n}}
\BigPrv{ \I{X^n \in \set{A}} \neq \BigI{\overline{X}^{(k_1, \ldots, k_d)} \in \set{A}} } \label{eq1: 25032022a7} \\
&= \frac2{\binom{n}{d}} \sum_{\substack{(k_1, \ldots, k_d): \\[0.05cm] 1 \leq k_1 < \ldots < k_d \leq n}}
\BigPrv{ X^n \in \set{A}, \; \; \overline{X}^{(k_1, \ldots, k_d)} \not\in \set{A} } \label{eq2: 25032022a7} \\
&= \frac{2}{\binom{n}{d}} \cdot \frac{\card{\set{B}^{(n,d)}(\set{A})}}{2^n} \label{eq3: 25032022a7} \\[0.1cm]
&= \frac{\card{\set{B}^{(n,d)}(\set{A})}}{2^{n-1} \, \binom{n}{d}}.  \label{eq4: 25032022a7}
\end{align}
\end{subequations}
Since every point in $\set{A}$ has $\binom{n}{d}$ neighbors of Hamming distance $d$ in the set $\{-1, 1\}^n$,
it follows that
\begin{eqnarray}  \label{eq: 25032022a8}
\binom{n}{d} \, \card{\set{A}} = 2 \, \bigcard{\Ed{d}{G}} + \bigcard{\set{B}^{(n,d)}(\set{A})},
\end{eqnarray}
where $G$ is introduced in Theorem~\ref{theorem: graphs}, and $\Ed{d}{G}$ is the set of
edges connecting pairs of vertices in $G$ which are represented by vectors in $\set{A}$ of Hamming
distance $d$. The multiplication by~2 on the RHS of \eqref{eq: 25032022a8} is because every
edge whose two endpoints are in the set $\set{A}$ is counted twice. Hence, by \eqref{eq: 23032022a21} and \eqref{eq: 25032022a8},

\vspace*{-0.3cm}
\begin{subequations}
\begin{align}  \label{eq: 25032022a9}
\bigcard{\set{B}^{(n,d)}(\set{A})} &= \binom{n}{d} \, \card{\set{A}} - 2 \, \bigcard{\Ed{d}{G}} \\[-0.1cm]
&\geq \binom{n}{d} \, \card{\set{A}} - \frac{\binom{n-1}{d-1} \, \card{\set{A}} \,
\Bigl( \log \card{\set{A}} - \frac{n}{d} \, \log \ell_d \Bigr)}{\log \frac{m_d}{\ell_d}}  \label{eq: 25032022a10} \\
&= \binom{n}{d} \, \card{\set{A}} \, \biggl( 1 - \frac{\frac{d}{n} \log \card{\set{A}}
- \log \ell_d}{\log \frac{m_d}{\ell_d}} \biggr) \label{eq: 25032022a11} \\
&= \binom{n}{d} \, \card{\set{A}} \, \biggl( \frac{\log m_d  - \frac{d}{n} \, \log \card{\set{A}}}{\log \frac{m_d}{\ell_d}} \biggr), \label{eq: 25032022a12}
\end{align}
\end{subequations}
and the lower bound on the RHS of \eqref{eq: 25032022a12} is positive if and only if $\card{\set{A}} < (m_d)^{\frac{n}{d}}$ (see also \eqref{eq: 24032022a6}).
This gives from \eqref{eq: 25032022a7} that the average influence of subsets of $d$ variables satisfies

\vspace*{-0.3cm}
\begin{subequations}
\begin{align}
\label{eq: 25032022a13}
I^{(n,d)}(\set{A}) &\geq \frac{\card{\set{A}}}{2^{n-1}}
\, \biggl( \frac{\log m_d  - \frac{d}{n} \, \log \card{\set{A}}}{\log \frac{m_d}{\ell_d}} \biggr) \\
\label{eq: 25032022a14}
&= 2 \Prs{\set{A}} \, \biggl( \frac{\log m_d  - \frac{d}{n} \, \log \bigl( 2^n \Prs{\set{A}} \bigr)}{\log \frac{m_d}{\ell_d}} \biggr) \\
\label{eq: 25032022a15}
&= 2 \Prs{\set{A}} \, \Biggl( \frac{\frac{d}{n} \, \log \frac1{\Prs{\set{A}}} - \log \frac{2^d}{m_d}  }{\log \frac{m_d}{\ell_d}} \Biggr).
\end{align}
\end{subequations}
Note that by setting $d=1$, and the default values $m_d = 2$ and $\ell_d = 1$ on the RHS of \eqref{eq: 25032022a15} gives
the total influence of the $n$ variables satisfies, for all $\set{A} \subseteq \{-1, 1\}^n$,

\vspace*{-0.3cm}
\begin{subequations}
\begin{align}
\label{eq: 25032022a16}
I(\set{A}) &= n I^{(n,1)}(\set{A}) \\
& \geq 2 \Prs{\set{A}} \, \log_2 \frac1{\Prs{\set{A}}},
\end{align}
\end{subequations}
which is then specialized to the result in \cite[Theorem~4.3]{Boucheron_Lugosi_Massart_book} (see \eqref{eq: 25032022a3}).
This gives the following result.
\begin{theorem} \label{theorem: influence}
Let $X^n$ be an equiprobable random vector over the set $\{-1,1\}^n$,
let $d \in \OneTo{n}$ and $\set{A} \subset \{-1, 1\}^n$. Then, the average
influence of subsets of $d$ variables of $X^n$, as it is defined in \eqref{eq: 25032022a5},
is lower bounded as follows:

\vspace*{-0.3cm}
\begin{align}
\label{eq: 25032022a17}
I^{(n,d)}(\set{A}) &\geq 2 \Prs{\set{A}} \, \Biggl( \frac{\frac{d}{n}
\, \log \frac1{\Prs{\set{A}}} - \log \frac{2^d}{m_d}}{\log \frac{m_d}{\ell_d}} \Biggr),
\end{align}
where $\Prs{\set{A}} \eqdef \prob[X^n \in \set{A}] = \frac{\card{\set{A}}}{2^n}$,
and the integers $m_d$ and $\ell_d$ are introduced in Theorem~\ref{theorem: graphs}.
Similarly to the refined upper bound in Theorem~\ref{theorem: graphs}, the lower bound on the
RHS of \eqref{eq: 25032022a17} is informative (i.e., positive) if and only if
$\card{\set{A}} < (m_d)^{\frac{n}{d}}$. The lower bound on the RHS of
\eqref{eq: 25032022a17} can be loosened (by setting the default values $m_d=2$
and $\ell_d=1$) to

\vspace*{-0.3cm}
\begin{align}
\label{eq: 25032022a18}
I^{(n,d)}(\set{A}) &\geq 2 \Prs{\set{A}} \, \biggl( \frac{d}{n}
\, \log_2 \frac1{\Prs{\set{A}}} + 1-d \biggr).
\end{align}
\end{theorem}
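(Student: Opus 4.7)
The plan is to derive the bound directly from Theorem~\ref{theorem: graphs}(a) by expressing the average influence in terms of the size of a boundary set, and then invoking the upper bound on $\card{\Ed{d}{G}}$ to produce a matching lower bound on that boundary. The key combinatorial identity, a double count in the ambient cube $\{-1,1\}^n$, is what bridges the two quantities.

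First I would relate $I^{(n,d)}(\set{A})$ to the ordered boundary set $\set{B}^{(n,d)}(\set{A})$ defined in \eqref{eq: 25032022a6}. Because $X^n$ is uniform on $\{-1,1\}^n$, the symmetric mismatch event $\I{X^n \in \set{A}} \neq \bigI{\overline{X}^{(k_1, \ldots, k_d)} \in \set{A}}$ decomposes into two equiprobable asymmetric events, so that $I_{(k_1, \ldots, k_d)}(\set{A}) = 2 \, \BigPrv{X^n \in \set{A}, \; \overline{X}^{(k_1, \ldots, k_d)} \not\in \set{A}}$ for every index tuple. Summing over all $\binom{n}{d}$ tuples and normalizing yields $I^{(n,d)}(\set{A}) = \card{\set{B}^{(n,d)}(\set{A})} / \bigl( 2^{n-1} \binom{n}{d} \bigr)$. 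Next I would apply a double count: every $x^n \in \set{A}$ has exactly $\binom{n}{d}$ neighbors in $\{-1,1\}^n$ at Hamming distance~$d$, and each such neighbor is either inside $\set{A}$ (the resulting unordered edge in $\Ed{d}{G}$ is then counted twice, once from each endpoint) or outside $\set{A}$ (contributing to $\set{B}^{(n,d)}(\set{A})$). This gives the identity $\binom{n}{d} \, \card{\set{A}} = 2 \, \card{\Ed{d}{G}} + \card{\set{B}^{(n,d)}(\set{A})}$.

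Finally I would plug the upper bound on $\card{\Ed{d}{G}}$ from Theorem~\ref{theorem: graphs}(a) into this identity to lower-bound $\card{\set{B}^{(n,d)}(\set{A})}$, and then divide by $2^{n-1} \binom{n}{d}$ to bound $I^{(n,d)}(\set{A})$ from below. After substituting $\card{\set{A}} = 2^n \Prs{\set{A}}$ and using $\binom{n-1}{d-1} = (d/n) \binom{n}{d}$ to cancel the $\binom{n}{d}$ factors, the inequality collapses to the stated form \eqref{eq: 25032022a17}; the loosened bound \eqref{eq: 25032022a18} is then the specialization to $m_d = 2$ and $\ell_d = 1$, for which $\log(m_d/\ell_d) = 1$ and $\log(2^d/m_d) = d-1$. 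There is no deep obstacle here; the only care needed is in tracking the two independent factors of~$2$ that appear along the way---one from the symmetric/asymmetric decomposition of the mismatch event, and one from counting every edge of $G$ twice in the double count---since misaligning them would shift the constant in the final bound.
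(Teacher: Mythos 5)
Your proposal is correct and follows essentially the same route as the paper: the identity $I^{(n,d)}(\set{A}) = \card{\set{B}^{(n,d)}(\set{A})}/\bigl(2^{n-1}\binom{n}{d}\bigr)$ via the symmetric decomposition of the mismatch event, the double count $\binom{n}{d}\,\card{\set{A}} = 2\,\card{\Ed{d}{G}} + \card{\set{B}^{(n,d)}(\set{A})}$, and substitution of the bound from Theorem~\ref{theorem: graphs}(a) are exactly the steps in \eqref{eq: 25032022a7}--\eqref{eq: 25032022a15}. The two factors of~$2$ you flag are indeed the only delicate points, and you have placed them correctly.
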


\appendices

\setcounter{section}{0}

\section{Proof of Proposition~\ref{proposition: information measures, polymatroids}}
\label{appendix A: proof}

\setcounter{equation}{0}
\renewcommand{\theequation}{\thesection.\arabic{equation}}

For completeness, we prove Proposition~\ref{proposition: information measures, polymatroids}
which introduces results from \cite{Fujishige78}, \cite{Krause_UAI05} and \cite{Madiman_ITW08}.

Let $\Omega$ be a non-empty finite set, and let $\{X_\omega\}_{\omega \in \Omega}$ be a collection of
discrete random variables. We first prove Item~(a), showing that the entropy set function
$f \colon 2^{\Omega} \to \Reals$ in \eqref{entropic function} is a rank function.
\begin{itemize}
\item $f(\es)=0$.
\item Submodularity: If $\set{S}, \set{T} \subseteq \Omega$, then
\begin{subequations}  \label{eq1: submodularity - entropy}
\begin{eqnarray}
&& \hspace*{-1.0cm} f(\set{T} \cup \set{S}) + f(\set{T} \cap \set{S}) \nonumber \\[0.1cm]
&& \hspace*{-0.7cm} = \Ent{X_{\set{T} \cup \set{S}}} + \Ent{X_{\set{T} \cap \set{S}}} \\[0.1cm]
&& \hspace*{-0.7cm} = \Ent{X_{\set{T} \setminus \set{S}}, X_{\set{T} \cap \set{S}}, X_{\set{S} \setminus \set{T}}} + \Ent{X_{\set{T} \cap \set{S}}} \\[0.1cm]
&& \hspace*{-0.7cm} = \EntCond{X_{\set{T} \setminus \set{S}}, X_{\set{S} \setminus \set{T}}}{X_{\set{T} \cap \set{S}}} + 2 \Ent{X_{\set{T} \cap \set{S}}} \\[0.1cm]
&& \hspace*{-0.7cm} = \bigl[ \EntCond{X_{\set{T} \setminus \set{S}}}{X_{\set{T} \cap \set{S}}} + \EntCond{X_{\set{S} \setminus \set{T}}}{X_{\set{T} \cap \set{S}}}
-\MInfoCond{X_{\set{T} \setminus \set{S}}}{X_{\set{S} \setminus \set{T}}}{X_{\set{T} \cap \set{S}}} \bigr] \nonumber \\[0.1cm]
&& \hspace*{-0.2cm} + 2 \Ent{X_{\set{T} \cap \set{S}}} \\[0.1cm]
&& \hspace*{-0.7cm} = \bigl[ \EntCond{X_{\set{T} \setminus \set{S}}}{X_{\set{T} \cap \set{S}}} + \Ent{X_{\set{T} \cap \set{S}}} \bigr]
+ \bigl[ \EntCond{X_{\set{S} \setminus \set{T}}}{X_{\set{T} \cap \set{S}}} + \Ent{X_{\set{T} \cap \set{S}}} \bigr] \nonumber \\
&& \hspace*{-0.2cm} -\MInfoCond{X_{\set{T} \setminus \set{S}}}{X_{\set{S} \setminus \set{T}}}{X_{\set{T} \cap \set{S}}} \\[0.1cm]
&& \hspace*{-0.7cm} = \Ent{X_{\set{T} \setminus \set{S}}, X_{\set{T} \cap \set{S}}} + \Ent{X_{\set{S} \setminus \set{T}}, X_{\set{T} \cap \set{S}}}
-\MInfoCond{X_{\set{T} \setminus \set{S}}}{X_{\set{S} \setminus \set{T}}}{X_{\set{T} \cap \set{S}}} \\[0.1cm]
&& \hspace*{-0.7cm} = \Ent{X_{\set{T}}} + \Ent{X_{\set{S}}} -\MInfoCond{X_{\set{T} \setminus \set{S}}}{X_{\set{S} \setminus \set{T}}}{X_{\set{T} \cap \set{S}}} \\[0.1cm]
&& \hspace*{-0.7cm} = f(\set{T}) + f(\set{S}) -\MInfoCond{X_{\set{T} \setminus \set{S}}}{X_{\set{S} \setminus \set{T}}}{X_{\set{T} \cap \set{S}}},
\end{eqnarray}
\end{subequations}
which gives
\begin{eqnarray}  \label{eq2: submodularity - entropy}
\hspace*{-0.8cm} f(\set{T}) + f(\set{S}) - \bigl[ f(\set{T} \cup \set{S}) + f(\set{T} \cap \set{S}) \bigr]
= \MInfoCond{X_{\set{T} \setminus \set{S}}}{X_{\set{S} \setminus \set{T}}}{X_{\set{T} \cap \set{S}}} \geq 0.
\end{eqnarray}
This proves the submodularity of $f$, while also showing that
\begin{eqnarray}  \label{eq3: submodularity - entropy}
\hspace*{-0.8cm} f(\set{T}) + f(\set{S}) = f(\set{T} \cup \set{S}) + f(\set{T} \cap \set{S}) \quad \Longleftrightarrow \quad
\CondInd{X_{\set{T} \setminus \set{S}}}{X_{\set{S} \setminus \set{T}}}{X_{\set{T} \cap \set{S}}} \, ,
\end{eqnarray}
i.e., the rightmost side of \eqref{eq2: submodularity - entropy} holds with equality if and only if
$X_{\set{T} \setminus \set{S}}$ and $X_{\set{S} \setminus \set{T}}$ are conditionally independent
given $X_{\set{T} \cap \set{S}}$.
\item Monotonicity: If $\set{S} \subseteq \set{T} \subseteq \Omega$, then

\vspace*{-0.3cm}
\begin{subequations}
\begin{align}  \label{eq: monotonicity - entropy}
& \hspace*{-0.7cm} f(\set{S}) = \Ent{X_{\set{S}}} \\
&\leq \Ent{X_{\set{S}}} + \EntCond{X_{\set{T}}}{X_{\set{S}}} \\
&= \Ent{X_{\set{T}}} \\
&= f(\set{T}),
\end{align}
so $f$ is monotonically increasing.
\end{subequations}
\end{itemize}

We next prove Item~(b). Consider the set function $f$ in \eqref{set function 2}.
\begin{itemize}
\item $f(\es) = 0$, and $f(\Omega) = \Ent{X_{\Omega}}$.
\item Supermodularity: If $\set{S}, \set{T} \subseteq \Omega$, then

\vspace*{-0.3cm}
\begin{subequations}
\begin{align}
\label{eq1: supermodularity - set function 2}
\hspace*{-0.5cm} f(\set{T} \cup \set{S}) + f(\set{T} \cap \set{S})
&= \EntCond{X_{\set{T} \cup \set{S}}}{X_{\cset{T} \cap \cset{S}}}
+ \EntCond{X_{\set{T} \cap \set{S}}}{X_{\cset{T} \cup \cset{S}}} \\[0.1cm]
\label{eq2: supermodularity - set function 2}
&= \bigl[ \Ent{X_\Omega} - \Ent{X_{\cset{T} \cap \cset{S}}} \bigr]
+ \bigl[ \Ent{X_\Omega} - \Ent{X_{\cset{T} \cup \cset{S}}} \bigr] \\[0.1cm]
\label{eq3: supermodularity - set function 2}
&= 2 \Ent{X_\Omega} - \bigl[ \Ent{X_{\cset{T} \cup \cset{S}}} + \Ent{X_{\cset{T} \cap \cset{S}}}  \bigr] \\[0.1cm]
\label{eq4: supermodularity - set function 2}
&\geq 2 \Ent{X_\Omega} - \bigl[ \Ent{X_{\cset{T}}} + \Ent{X_{\cset{S}}} \bigr] \\[0.1cm]
\label{eq5: supermodularity - set function 2}
&= \bigl[ \Ent{X_\Omega} - \Ent{X_{\cset{T}}} \bigr] + \bigl[ \Ent{X_\Omega} - \Ent{X_{\cset{S}}} \bigr] \\[0.1cm]
\label{eq6: supermodularity - set function 2}
&= \EntCond{X_{\set{T}}}{X_{\cset{T}}} + \EntCond{X_{\set{S}}}{X_{\cset{S}}} \\[0.1cm]
\label{eq7: supermodularity - set function 2}
&= f(\set{T}) + f(\set{S}),
\end{align}
\end{subequations}
where inequality \eqref{eq4: supermodularity - set function 2} holds since
the entropy function in \eqref{entropic function} is submodular (by Item~(a)).
\item Monotonicity: If $\set{S} \subseteq \set{T} \subseteq \Omega$, then

\vspace*{-0.3cm}
\begin{subequations}  \label{eq: monotonicity - set function 2}
\begin{align}  \label{eq1: monotonicity - set function 2}
& \hspace*{-0.7cm} f(\set{S}) = \EntCond{X_{\set{S}}}{X_{\cset{S}}} \\[0.1cm]
\label{eq2: monotonicity - set function 2}
&\leq \EntCond{X_{\set{S}}}{X_{\cset{T}}} \quad (\cset{T} \subseteq \cset{S}) \\[0.1cm]
\label{eq3: monotonicity - set function 2}
&\leq \EntCond{X_{\set{T}}}{X_{\cset{T}}} \\[0.1cm]
\label{eq4: monotonicity - set function 2}
&= f(\set{T}),
\end{align}
\end{subequations}
so $f$ is monotonically increasing.
\end{itemize}

Item~(c) follows easily from Items~(a) and~(b). Consider the set function $f \colon 2^\Omega \to \Reals$
in \eqref{set function 3}. Then, for all $\set{T} \in \Omega$,
$f(\set{T}) = \MInfo{X_{\set{T}}}{X_{\cset{T}}} = \Ent{X_{\set{T}}} - \EntCond{X_{\set{T}}}{X_{\cset{T}}}$,
so $f$ is expressed as a difference of a submodular function and a supermodular function, which
gives a submodular function. Furthermore, $f(\es) = 0$; by the symmetry of the mutual information,
$f(\set{T}) = f(\cset{T})$ for all $\set{T} \subseteq \Omega$, so $f$ is not monotonic.

We next prove Item~(d). Consider the set function $f \colon 2^\set{V} \to \Reals$
in \eqref{set function 4}, and we need to prove that $f$ is submodular under
the conditions in Item~(d) where $\set{U}, \set{V} \subseteq \Omega$ are disjoint
subsets, and the entries of the random vector $X_{\set{V}}$ are
conditionally independent given $X_{\set{U}}$.
\begin{itemize}
\item $f(\es) = \MInfo{X_{\set{U}}}{X_{\es}} = 0$.
\item Submodularity: If $\set{S}, \set{T} \subseteq \set{V}$, then

\vspace*{-0.3cm}
\begin{subequations}  \label{eq1: submodularity - set function 4}
\begin{align}
& \hspace*{-0.3cm} f(\set{T} \cup \set{S}) + f(\set{T} \cap \set{S}) \nonumber \\
\label{eq1a: submodularity - set function 4}
&= \MInfo{X_{\set{U}}}{X_{\set{T} \cup \set{S}}} + \MInfo{X_{\set{U}}}{X_{\set{T} \cap \set{S}}} \\[0.1cm]
\label{eq1b: submodularity - set function 4}
&= \bigl[ \Ent{X_{\set{T} \cup \set{S}}} - \EntCond{X_{\set{T} \cup \set{S}}}{X_{\set{U}}} \bigr]
+ \bigl[ \Ent{X_{\set{T} \cap \set{S}}} - \EntCond{X_{\set{T} \cap \set{S}}}{X_{\set{U}}} \bigr] \\[0.1cm]
\label{eq1c: submodularity - set function 4}
&= \bigl[ \Ent{X_{\set{T} \cup \set{S}}} + \Ent{X_{\set{T} \cap \set{S}}} \bigr]
- \bigl[ \EntCond{X_{\set{T} \cup \set{S}}}{X_{\set{U}}} + \EntCond{X_{\set{T} \cap \set{S}}}{X_{\set{U}}} \bigr] \\[0.1cm]
&= \bigl[ \Ent{X_{\set{T}}} + \Ent{X_{\set{S}}}
-\MInfoCond{X_{\set{T} \setminus \set{S}}}{X_{\set{S} \setminus \set{T}}}{X_{\set{T} \cap \set{S}}} \bigr] \nonumber \\[0.1cm]
\label{eq1d: submodularity - set function 4}
& \hspace*{0.4cm} - \bigl[ \EntCond{X_{\set{T} \cup \set{S}}}{X_{\set{U}}} + \EntCond{X_{\set{T} \cap \set{S}}}{X_{\set{U}}} \bigr],
\end{align}
\end{subequations}
where equality \eqref{eq1d: submodularity - set function 4} holds by the proof of Item~(a)
(see \eqref{eq2: submodularity - entropy}). By the assumption on the conditional independence of the random
variables $\{X_v\}_{v \in \set{V}}$ given $X_{\set{U}}$, we get

\vspace*{-0.3cm}
\begin{subequations}  \label{eq2: submodularity - set function 4}
\begin{align}
\EntCond{X_{\set{T} \cup \set{S}}}{X_{\set{U}}} + \EntCond{X_{\set{T} \cap \set{S}}}{X_{\set{U}}}
\label{eq2a: submodularity - set function 4}
&= \sum_{\omega \in \set{T} \cup \set{S}} \EntCond{X_\omega}{X_{\set{U}}}
+ \sum_{\omega \in \set{T} \cap \set{S}} \EntCond{X_\omega}{X_{\set{U}}} \\[0.1cm]
\label{eq2b: submodularity - set function 4}
&= \sum_{\omega \in \set{T}} \EntCond{X_\omega}{X_{\set{U}}}
+ \sum_{\omega \in \set{S}} \EntCond{X_\omega}{X_{\set{U}}} \\[0.1cm]
\label{eq2c: submodularity - set function 4}
&= \EntCond{X_{\set{T}}}{X_{\set{U}}} + \EntCond{X_{\set{S}}}{X_{\set{U}}}.
\end{align}
\end{subequations}
Consequently, combining \eqref{eq1: submodularity - set function 4} and \eqref{eq2: submodularity - set function 4} gives

\vspace*{-0.3cm}
\begin{subequations}  \label{eq3: submodularity - set function 4}
\begin{align}
\hspace*{-0.5cm} f(\set{T} \cup \set{S}) + f(\set{T} \cap \set{S})
&= \bigl[ \Ent{X_{\set{T}}} + \Ent{X_{\set{S}}}
-\MInfoCond{X_{\set{T} \setminus \set{S}}}{X_{\set{S} \setminus \set{T}}}{X_{\set{T} \cap \set{S}}} \bigr] \nonumber \\[0.1cm]
\label{eq3a: submodularity - set function 4}
& \hspace*{0.4cm} - \bigl[ \EntCond{X_{\set{T}}}{X_{\set{U}}} + \EntCond{X_{\set{S}}}{X_{\set{U}}} \bigr] \\[0.1cm]
\hspace*{-0.5cm} &= \bigl[ \Ent{X_{\set{T}}} - \EntCond{X_{\set{T}}}{X_{\set{U}}} \bigr]
+ \bigl[ \Ent{X_{\set{S}}} - \EntCond{X_{\set{S}}}{X_{\set{U}}} \bigr]  \nonumber \\[0.1cm]
\label{eq3b: submodularity - set function 4}
& \hspace*{0.4cm} -\MInfoCond{X_{\set{T} \setminus \set{S}}}{X_{\set{S} \setminus \set{T}}}{X_{\set{T} \cap \set{S}}} \\[0.1cm]
\label{eq3c: submodularity - set function 4}
&= \MInfo{X_{\set{T}}}{X_{\set{U}}} + \MInfo{X_{\set{S}}}{X_{\set{U}}}
-\MInfoCond{X_{\set{T} \setminus \set{S}}}{X_{\set{S} \setminus \set{T}}}{X_{\set{T} \cap \set{S}}} \\[0.1cm]
\label{eq3d: submodularity - set function 4}
&= f(\set{T}) + f(\set{S}) - \MInfoCond{X_{\set{T} \setminus \set{S}}}{X_{\set{S}
\setminus \set{T}}}{X_{\set{T} \cap \set{S}}} \\[0.1cm]
\label{eq3e: submodularity - set function 4}
& \leq f(\set{T}) + f(\set{S}),
\end{align}
\end{subequations}
where the inequality \eqref{eq3e: submodularity - set function 4} holds with equality if and only if
$X_{\set{T} \setminus \set{S}}$ and $X_{\set{S} \setminus \set{T}}$ are conditionally independent
given $X_{\set{T} \cap \set{S}}$.
\item Monotonicity: If $\set{S} \subseteq \set{T} \subseteq \set{V}$, then
\begin{eqnarray}
\label{eq3f: monotonicity - set function 4}
f(\set{S}) = \MInfo{X_{\set{U}}}{X_{\set{S}}} \leq \MInfo{X_{\set{U}}}{X_{\set{T}}} = f(\set{T}),
\end{eqnarray}
so $f$ is monotonically increasing.
\end{itemize}

We finally prove Item~(e), where it is needed to show that the entropy of a sum of independent random
variables is a rank function. Let $f \colon 2^{\Omega} \to \Reals$ be the set function as given in
\eqref{set function 5}.
\begin{itemize}
\item $f(\es) = 0$.
\item Submodularity: Let $\set{S}, \set{T} \subseteq \Omega$. Define
\begin{eqnarray} \label{eqdef: U,V,W}
U \eqdef \sum_{\omega \in \set{T} \cap \set{S}} X_\omega, \quad V \eqdef \sum_{\omega \in \set{S} \setminus \set{T}} X_\omega,
\quad W \eqdef \sum_{\omega \in \set{T} \setminus \set{S}} X_\omega.
\end{eqnarray}
From the independence of the random variables $\{X_\omega\}_{\omega \in \Omega}$, it follows that $U,V$ and $W$ are independent.
Hence, we get

\vspace*{-0.3cm}
\begin{subequations}   \label{eq1: submodularity - set function 5}
\begin{align}
& \hspace*{-0.3cm} \bigl[ f(\set{T}) + f(\set{S}) \bigr] - \bigl[ f(\set{T} \cup \set{S}) + f(\set{T} \cap \set{S}) \bigr] \nonumber \\[0.1cm]
\label{eq1a: submodularity - set function 5}
& = \bigl[ f(\set{T}) - f(\set{T} \cap \set{S}) \bigr] - \bigl[ f(\set{T} \cup \set{S}) - f(\set{S}) \bigr] \\[0.1cm]
\label{eq1b: submodularity - set function 5}
& = \bigl[ \Ent{U+W} - \Ent{U} \bigr] - \bigl[ \Ent{U+V+W} - \Ent{U+V} \bigr] \\[0.1cm]
\label{eq1c: submodularity - set function 5}
& = \bigl[ \Ent{U+W} - \EntCond{U+W}{W} \bigr] - \bigl[ \Ent{U+V+W} - \Ent{U+V} \bigr] \\[0.1cm]
\label{eq1d: submodularity - set function 5}
& = \bigl[ \Ent{U+W} - \EntCond{U+W}{W} \bigr] - \bigl[ \Ent{U+V+W} - \EntCond{U+V+W}{W} \bigr] \\[0.1cm]
\label{eq1e: submodularity - set function 5}
& = \MInfo{U+W}{W} - \MInfo{U+V+W}{W} \\[0.1cm]
\label{eq1f: submodularity - set function 5}
& \geq \MInfo{U+W}{W} - \MInfo{U+W,V}{W},
\end{align}
\end{subequations}
and

\vspace*{-0.3cm}
\begin{subequations}   \label{eq2: submodularity - set function 5}
\begin{align}
\label{eq2a: submodularity - set function 5}
\MInfo{U+W,V}{W} &= \MInfo{V}{W} + \MInfoCond{U+W}{W}{V} \\
\label{eq2b: submodularity - set function 5}
&= \MInfoCond{U+W}{W}{V}  \\
\label{eq2c: submodularity - set function 5}
&= \MInfo{U+W}{W}.
\end{align}
\end{subequations}
Combining \eqref{eq1: submodularity - set function 5} and \eqref{eq2: submodularity - set function 5}
gives \eqref{eq: submodularity}.
\item Monotonicity: If $\set{S} \subseteq \set{T} \subseteq \Omega$, then since
$\{X_\omega\}_{\omega \in \Omega}$ are independent random variables, \eqref{eqdef: U,V,W}
implies that $U$ and $W$ are independent and $V=0$. Hence,

\vspace*{-0.3cm}
\begin{subequations}  \label{eq1: monotonicity - set function 5}
\begin{align}
\label{eq1a: monotonicity - set function 5}
f(\set{T}) - f(\set{S}) &= \Ent{U+W} - \Ent{U} \\
\label{eq1b: monotonicity - set function 5}
&= \Ent{U+W} - \EntCond{U+W}{W} \\
\label{eq1c: monotonicity - set function 5}
&= \MInfo{U+W}{W} \geq 0.
\end{align}
\end{subequations}
\end{itemize}
This completes the proof of Proposition~\ref{proposition: information measures, polymatroids}.

\section{Proof of Proposition~\ref{proposition: Shearer - submodularity}}
\label{Appendix: Proof - Shearer - submodularity}

\setcounter{equation}{0}
\setcounter{lemma}{0}
\renewcommand{\thelemma}{\thesection.\arabic{lemma}}
\setcounter{remark}{0}
\renewcommand{\theremark}{\thesection.\arabic{remark}}

\begin{lemma}
\label{lemma: chain}
Let $\{\set{B}_j\}_{j=1}^{\ell}$ (with $\ell \geq 2$) be a sequence of sets
that is not a chain (i.e., there is no permutation $\pi \colon \OneTo{\ell}
\to \OneTo{\ell}$ such that $\set{B}_{\pi(1)} \subseteq \set{B}_{\pi(2)}
\subseteq \ldots \subseteq \set{B}_{\pi(\ell)}$). Consider a recursive process
where, at each step, a pair of sets that are not related by inclusion is replaced
with their intersection and union. Then, there exists such a recursive process
that leads to a chain in a finite number of steps.
\end{lemma}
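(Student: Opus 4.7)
The plan is to exhibit a potential function on finite multisets of subsets that (i) strictly increases with each valid replacement and (ii) is bounded above, which together force termination in finitely many steps; at termination, no pair remains incomparable, which is the definition of a chain.

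Concretely, I would take
\begin{align*}
\Phi\bigl(\{\set{B}_1, \ldots, \set{B}_\ell\}\bigr) \eqdef \sum_{j=1}^{\ell} \card{\set{B}_j}^2.
\end{align*}
Consider one step of the process, in which an incomparable pair $(A, B)$ is replaced by $(A \cap B, \, A \cup B)$. Setting $b_1 \eqdef \card{A \setminus B}$ and $b_2 \eqdef \card{B \setminus A}$, and using $\card{A} = \card{A \cap B} + b_1$, $\card{B} = \card{A \cap B} + b_2$, and $\card{A \cup B} = \card{A \cap B} + b_1 + b_2$, a direct expansion gives
\begin{align*}
\bigl( \card{A \cap B}^2 + \card{A \cup B}^2 \bigr) - \bigl( \card{A}^2 + \card{B}^2 \bigr) = 2 b_1 b_2.
\end{align*}
Incomparability of $A$ and $B$ forces $b_1, b_2 \geq 1$, so $\Phi$ strictly increases by at least $2$ at each step (the other $\ell - 2$ sets are untouched).

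For the upper bound, in the application of this lemma (proving Proposition~\ref{proposition: Shearer - submodularity}) every $\set{B}_j$ lies in the finite ground set $\Omega$; since intersections and unions of subsets of $\Omega$ remain subsets of $\Omega$, the process is confined to $2^{\Omega}$, so $\Phi \leq \ell \, \card{\Omega}^2$ throughout. Combining the strict increase with the upper bound, the process halts after at most $\tfrac12 \bigl( \ell \card{\Omega}^2 - \Phi_{\text{init}} \bigr)$ steps, and once it halts no incomparable pair remains, so the collection is a chain. Since in fact \emph{every} valid execution of the process terminates, the existence assertion of the lemma follows a fortiori.

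The only real subtlety — and the place one can trip — is choosing a potential that actually detects the operation: additive quantities such as $\sum_j \card{\set{B}_j}$ or $\sum_j \Prs{\set{B}_j}$ are invariants of the $(A, B) \mapsto (A \cap B, A \cup B)$ swap and are therefore useless. Any strictly convex symmetric function of $(\card{A}, \card{B})$ strictly rewards the "spread" operation when $A$ and $B$ are incomparable; the quadratic $\Phi$ is simply the cleanest such choice and produces the exact increment $2 b_1 b_2$.
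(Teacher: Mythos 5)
Your proof is correct, and it takes a genuinely different route from the paper. The paper proves Lemma~\ref{lemma: chain} by induction on $\ell$: it transforms the first $\ell$ sets into a chain, merges the last set with the top of that chain via one intersection/union step, and re-applies the induction hypothesis, thereby constructing one particular terminating execution. You instead exhibit the monovariant $\Phi = \sum_j \card{\set{B}_j}^2$, verify the exact increment $2\,\card{A\setminus B}\,\card{B\setminus A} \geq 2$ per step, and conclude termination from boundedness; your identification of $\sum_j \card{\set{B}_j}$ as an invariant (hence useless) is exactly the right diagnostic for why a strictly convex potential is needed. Your argument is shorter, requires no bookkeeping about which pair to process next, and proves the strictly stronger statement that \emph{every} execution of the process terminates in a chain, with an explicit step bound --- whereas the paper only establishes existence of one such execution and, in the remark closing Appendix~\ref{Appendix: Proof - Shearer - submodularity}, explicitly declines to claim that the natural count of incomparable pairs is monotone (your potential sidesteps that obstruction entirely). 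The one caveat is that $\Phi$ needs the sets to have finite cardinality: the lemma as stated speaks of arbitrary sets, and you cover this only by appealing to the application where everything lives in the finite ground set $\Omega$. If you want the lemma in full generality, replace $\card{\set{B}_j}$ by the number of atoms of the (finite, at most $2^{\ell}$-element) Boolean algebra generated by the initial sets that $\set{B}_j$ contains; since every set produced by the process is a union of these atoms, the same computation goes through verbatim.
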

\begin{proof}
The lemma is proved by mathematical induction on $\ell$. It holds for $\ell=2$
since $\set{B}_1 \cap \set{B}_2 \subseteq \set{B}_1 \cup \set{B}_2$, and
the process halts in a single step.
Suppose that the lemma holds with a fixed $\ell \geq 2$, and for an
arbitrary sequence of $\ell$ sets which is not a chain.
We aim to show that it also holds for every sequence of $\ell+1$ sets which is not a chain.
Let $\{\set{B}_j\}_{j=1}^{\ell+1}$ be such an arbitrary sequence of sets, and
consider the subsequence of the first $\ell$ sets $\set{B}_1, \ldots, \set{B}_\ell$.
If it is not a chain, then (by the induction hypothesis) there exists a
recursive process as above which enables to transform it into a chain in a finite number of steps, i.e.,
we get a chain $\set{B}'_1 \subseteq \set{B}'_2 \subseteq \ldots \subseteq \set{B}'_\ell$.
If $\set{B}'_{\ell} \subseteq \set{B}_{\ell+1}$ or $\set{B}_{\ell+1} \subseteq \set{B}'_1$, then we get a chain
of $\ell+1$ sets. Otherwise, by proceeding with the recursive process where
$\set{B}'_\ell$ and $\set{B}_{\ell+1}$ are replaced with their intersection and union,
consider the sequence
\begin{eqnarray} \label{sequence 1}
\set{B}'_1, \, \ldots, \, \set{B}'_{\ell-1}, \, \set{B}'_\ell \cap \set{B}_{\ell+1},
\, \set{B}'_\ell \cup \set{B}_{\ell+1}.
\end{eqnarray}
By the induction hypothesis, the first $\ell$ sets in this sequence
can be transformed into a chain (in a finite number of steps) by a
recursive process as above; this gives a chain of the form
$ \set{B}''_1 \subseteq \set{B}''_2 \ldots \subseteq \set{B}''_{\ell-1}
\subseteq \set{B}''_\ell$.
The first $\ell$ sets in \eqref{sequence 1} are all included in $\set{B}'_{\ell}$, so
every combination of unions and intersections of these $\ell$ sets is also included in
$\set{B}'_{\ell}$. Hence, the considered recursive process leads to a chain of the form
\begin{eqnarray} \label{sequence 2}
\set{B}''_1 \subseteq \set{B}''_2 \ldots \subseteq \set{B}''_{\ell-1}
\subseteq \set{B}''_\ell \subseteq \set{B}'_\ell \cup \set{B}_{\ell+1},
\end{eqnarray}
where the last inclusion in \eqref{sequence 2} holds since $\set{B}''_\ell \subseteq \set{B}'_\ell$.
The claim thus holds for $\ell+1$ if it holds for a given $\ell$, and it holds for $\ell=2$, it therefore
holds by mathematical induction for all integers $\ell \geq 2$.
\end{proof}

We first prove Proposition~\ref{proposition: Shearer - submodularity}a.
Suppose that there is a permutation $\pi \colon \OneTo{M} \to \OneTo{M}$ such that
$\set{S}_{\pi(1)} \subseteq \set{S}_{\pi(2)} \subseteq \ldots \subseteq \set{S}_{\pi(M)}$
is a chain.
Since every element in $\Omega$ is included in at least $d$ of these subsets, then
it should be included in (at least) the $d$ largest sets of this chain, so $\set{S}_{\pi(j)} = \Omega$
for every $j \in \FromTo{M-d+1}{M}$. Due to the non-negativity
of $f$, it follows that
\begin{subequations} \label{eq: for a chain}
\begin{align}
\sum_{j=1}^M f(\set{S}_j) &\geq \sum_{j=M-d+1}^M f(\set{S}_{\pi(j)}) \\
&= d \, f(\Omega).
\end{align}
\end{subequations}
Otherwise, if we cannot get a chain by possibly permuting the subsets in the sequence
$\bigl\{\set{S}_j\}_{j=1}^M$, consider a pair of subsets $\set{S}_n$ and $\set{S}_m$
that are not related by inclusion, and replace them with their intersection and union.
By the submodularity of $f$,

\vspace*{-0.3cm}
\begin{subequations} \label{eq: reduced sums at each step}
\begin{align}
\label{eq1: reduced sums at each step}
\sum_{j=1}^M f(\set{S}_j) &= \sum_{j \neq n,m} f(\set{S}_j) + f(\set{S}_n) + f(\set{S}_m) \\
\label{eq2: reduced sums at each step}
&\geq \sum_{j \neq n,m} f(\set{S}_j) + f(\set{S}_n \cap \set{S}_m) + f(\set{S}_n \cup \set{S}_m).
\end{align}
\end{subequations}
For all $\omega \in \Omega$, let $\deg(\omega)$ be the number of indices $j \in \OneTo{M}$
such that $\omega \in \set{S}_j$. By replacing $\set{S}_n$ and $\set{S}_m$ with
$\set{S}_n \cap \set{S}_m$ and $\set{S}_n \cup \set{S}_m$, the set of values $\{\deg(\omega)\}_{\omega \in \Omega}$
stays unaffected (indeed, if $\omega \in \set{S}_n$ and $\omega \in \set{S}_m$, then it belongs to
their intersection and union; if $\omega$ belongs to only one of the sets $\set{S}_n$ and $\set{S}_m$,
then $\omega \notin \set{S}_n \cap \set{S}_m$ and $\omega \in \set{S}_n \cup \set{S}_m$; finally, if
$\omega \notin \set{S}_n$ and $\omega \notin \set{S}_m$, then it does not belong to their intersection
and union). Now, consider the recursive process in Lemma~\ref{lemma: chain}. Since the profile of the
number of inclusions of the elements in $\Omega$ is preserved in each step of the recursive process
in Lemma~\ref{lemma: chain}, it follows that every element in $\Omega$ stays to belong to at least
$d$ sets in the chain which is obtained at the end of this recursive process.
Moreover, in light of \eqref{eq: reduced sums at each step}, in every step
of the recursive process in Lemma~\ref{lemma: chain}, the sum in the LHS of
\eqref{eq: reduced sums at each step} cannot increase. Inequality \eqref{06.03.2021b1} therefore
finally follows from the earlier part of the proof for a chain (see \eqref{eq: for a chain}).

We next prove Proposition~\ref{proposition: Shearer - submodularity}b.
Let $\set{A} \subset \Omega$, and suppose that every element in $\set{A}$ is included in at
least $d \geq 1$ of the subsets $\{\set{S}_j\}_{j=1}^M$. For all $j \in \OneTo{M}$, define
$\set{S}'_j \eqdef \set{S}_j \cap \set{A}$, and consider the sequence
$\bigl\{\set{S}'_j \bigr\}_{j=1}^M$ of subsets of $\set{A}$.
If $f$ is a rank function, then it is monotonically increasing, which yields
\begin{eqnarray} \label{06.03.2021a1}
f(\set{S}'_j) \leq f(\set{S}_j), \qquad j \in \OneTo{M}.
\end{eqnarray}
Each element of $\set{A}$ is also included in at least $d$ of the subsets
$\bigl\{\set{S}'_j \bigr\}_{j=1}^M$ (by construction, and since (by assumption) each element
in $\set{A}$ is included in at least $d$ of the subsets $\{\set{S}_j\}_{j=1}^M$).
By the non-negativity and submodularity of $f$,
Proposition~\ref{proposition: Shearer - submodularity}a gives
\begin{eqnarray}  \label{06.03.2021a2}
\sum_{j=1}^M f(\set{S}'_j) \geq d \, f(\set{A}).
\end{eqnarray}
Combining \eqref{06.03.2021a1} and \eqref{06.03.2021a2} yields \eqref{06.03.2021b2}.
This completes the proof of Proposition~\ref{proposition: Shearer - submodularity}.

\begin{remark}
Lemma~\ref{lemma: chain} is weaker than a claim that, in every recursive process as
in Lemma~\ref{lemma: chain}, the number of pairs of sets that are not related by inclusion
is strictly decreasing at each step. Lemma~\ref{lemma: chain} is, however, sufficient for
our proof of Proposition~\ref{proposition: Shearer - submodularity}a.
\end{remark}

\end{document}